\documentclass[11pt]{article}


\usepackage{osid2}
\usepackage{graphicx}
\usepackage{amsmath}
\usepackage{latexsym}
\usepackage{amsfonts}
\usepackage{amssymb}
\usepackage[english]{babel}



\newenvironment{proof}{\trivlist
\item[\hskip5pt\uppercase{Proof}.]\rm\hskip3pt}
{\hfill \rule{5pt}{5pt} \endtrivlist}



\newcommand{\frecc}{\to} 

\newcommand{\R}{\mathbb R} 
\newcommand{\C}{{\mathbb C}} 
\newcommand{\N}{\mathbb N} 

\newcommand{\hotimes}{\hat{\otimes}}
\newcommand{\votimes}{\bar{\otimes}}
\newcommand{\hagotimes}{\otimes_{\rm h}}
\newcommand{\whotimes}{\otimes_{\rm w^\ast h}}
\renewcommand{\aa}{\mathcal{A}} 
\newcommand{\bb}{\mathcal{B}} 
\newcommand{\hh}{\mathcal{H}} 
\newcommand{\hhat}{\hat{\hh}} 
\newcommand{\uhat}{{\hat{\uu}}} 
\newcommand{\hhn}{{\mathcal{H}^{(n)}}} 
\newcommand{\kk}{\mathcal{K}} 
\newcommand{\kkn}{{\mathcal{K}^{(n)}}} 
\newcommand{\mm}{\mathcal{M}} 
\newcommand{\mmp}{{\mathcal{M}^\prime}} 
\newcommand{\nn}{\mathcal{N}} 
\newcommand{\mmn}{{\mathcal{M}^{(n)}}} 
\newcommand{\nnn}{{\mathcal{N}^{(n)}}} 
\newcommand{\rr}{\mathcal{R}} 
\newcommand{\uu}{\mathcal{U}} 
\newcommand{\vv}{\mathcal{V}} 
\newcommand{\ww}{\mathcal{W}} 
\newcommand{\ee}{\mathcal{E}} 
\newcommand{\ff}{\mathcal{F}} 
\renewcommand{\gg}{\mathcal{G}} 
\renewcommand{\ss}{\mathcal{S}} 
\renewcommand{\SS}{\mathsf{S}} 
\newcommand{\TT}{\mathsf{T}} 
\newcommand{\II}{\mathsf{I}} 
\newcommand{\CC}{\mathsf{C}} 
\newcommand{\PI}{\mathsf{\Pi}} 
\newcommand{\lh}{\mathcal{L}(\hh)} 
\newcommand{\lzh}{\mathcal{L}_0 (\hh)} 
\newcommand{\lk}{\mathcal{L}(\kk)} 
\newcommand{\lu}{\mathcal{L}(\uu)} 
\newcommand{\lv}{\mathcal{L}(\vv)} 
\newcommand{\elle}[1]{\mathcal{L} ( #1 )} 
\newcommand{\trcl}[1]{\mathcal{T} \left( #1 \right)} 
\newcommand{\spanno}[1]{{\rm span}\, \left\{ #1 \right\}} 
\newcommand{\spannochiuso}[1]{\overline{\rm span}\, \left\{ #1 \right\}} 
\newcommand{\scal}[2]{\left\langle #1 , #2 \right\rangle} 
\newcommand{\no}[1]{\left\|#1\right\|} 
\newcommand{\trt}[1]{{\rm tr} \left( #1 \right)} 
\newcommand{\trq}[1]{{\rm tr} \left[ #1 \right]} 
\newcommand{\cb}[1]{{\rm CB} (#1)} 
\newcommand{\cp}[1]{{\rm CP} (#1)} 
\newcommand{\cpn}[1]{{\rm CP}_1 (#1)} 
\newcommand{\cpq}[1]{{\rm SCP} \left(#1\right)} 
\newcommand{\cpqn}[1]{{\rm SCP}_1 \left(#1\right)} 
\newcommand{\ii}{\mathcal{I}} 
\newcommand{\jj}{\mathcal{J}} 

\newcommand{\wklim}{\operatornamewithlimits{\rm wk*-lim}} 

\newcommand{\lam}{\lambda}
\newcommand{\Lam}{\Lambda}


\title{Normal Completely Positive Maps on the Space of Quantum Operations}
\author{Giulio Chiribella\\{\footnotesize\it Center for Quantum Information, Institute for Interdisciplinary Information Sciences, Tsinghua University, Beijing 100084, China\\
E-mail: gchiribella@mail.tsinghua.edu.cn}\\[2ex]
Alessandro Toigo\\{\footnotesize\it Dipartimento di Matematica, Politecnico di Milano, Piazza Leonardo da Vinci 32, I-20133 Milano, Italy\\
Istituto Nazionale di Fisica Nucleare, Sezione di Milano, Via Celoria 16, I-20133 Milano, Italy\\
E-mail: alessandro.toigo@polimi.it}\\[2ex]
Veronica Umanit\`a\\{\footnotesize\it Dipartimento di Matematica, Universit\`a di Genova, Via Dodecaneso 35, I-16146 Genova, Italy\\
E-mail: umanita@dima.unige.it} }

\begin{document}

\maketitle
\begin{abstract}
Quantum supermaps are higher-order maps transforming quantum operations into quantum operations.  
Here we extend the theory of quantum supermaps, originally formulated in the finite dimensional setting, to the case of higher-order  maps transforming quantum operations with input in a separable von Neumann algebra and output in the algebra of the bounded operators on a given separable Hilbert space.   
In this setting we prove two dilation theorems for quantum supermaps that are the analogues of the Stinespring and Radon-Nikodym theorems for quantum operations. 
Finally, we  consider the case of quantum superinstruments, namely measures with values in the set of quantum supermaps, and derive a dilation theorem for them that is analogue to Ozawa's theorem for quantum instruments. The three dilation theorems presented  here show that all the supermaps defined in this paper   can be implemented by connecting devices in quantum circuits.
\end{abstract}

\section{Introduction}

{\em Quantum supermaps} \cite{CDaP1,CDaP2} are the most general admissible transformations of quantum devices. Mathematically, the action of a quantum device is associated to a set of  completely positive trace non-increasing maps, called  {\em quantum operations} \cite{QTOS76,Kraus71}, which  transform the states of an input quantum system into states of an output quantum system.
In the dual (Heisenberg) picture, quantum operations are given by  normal completely positive maps transforming the observables of the output system into observables of the input system, with the condition that each quantum operation is upper bounded by a unital completely positive map.  
A quantum supermap is then a higher-order linear map that transforms quantum operations into quantum operations.

The theory of quantum supermaps has proven to be a powerful tool for the treatment of many advanced topics in quantum information theory \cite{opttomo,tredeoff,memorydisc,procqcmc,SedlakZiman,Ziman}, including in particular the optimal cloning and the optimal learning of unitary transformations \cite{unitlearn,unitclon} and quantum measurements \cite{measclon,measlearn}. Moreoever, quantum supermaps are interesting for the foundations of Quantum Mechanics as they are the possible dynamics in a toy model of non-causal theory \cite{puri}. A particular type of quantum supermaps has been considered by Zyczkowski \cite{zycko}, who used them to construct a theory with a state space that has a quartic relation between the number of distinguishable states  and the number of parameters needed to specify a state. Quantum supermaps also attracted interest in the mathematical physics literature, as they suggested the study of a general class of completely positive maps between convex subsets of the state space \cite{jencova}.    

Originally, the definition and the main theorems on quantum supermaps were presented by D'Ariano, Perinotti, and one of the authors in the context of full matrix algebras describing finite dimensional quantum systems \cite{CDaP1,CDaP2}.  An extension of the theory that includes both classical and quantum systems has been exposed informally in \cite{bitcommitment}, still in the finite dimensional setting. However, a rigorous definition and characterization of quantum supermaps in infinite dimension and for arbitrary von Neumann algebras is still lacking. This problem will be the main focus of the present paper.

Before presenting our results, we briefly review the definition and characterization of supermaps for full matrix algebras. Quantum supermaps are defined axiomatically as linear completely positive maps transforming quantum operations into quantum operations (see \cite{CDaP1,CDaP2} for the physical motivation of linearity and complete positivity). A quantum supermap is  {\em deterministic} if it transforms quantum channels (i.e.~unital completely positive maps, see e.g.~\cite{Holevo01}) into quantum channels.
References \cite{CDaP1,CDaP2} proved the following dilation theorem for deterministic supermaps: denoting by $\lh$ and $\lk$ the $C^\ast$-algebras of linear operators on the finite dimensional Hilbert spaces $\hh$ and $\kk$, respectively, and writing $\cp{\lh,\lk}$ for the set of completely positive maps sending $\lh$ into $\lk$, we have that any deterministic supermap $\SS$ transforming quantum operations in $\cp{\elle{\hh_1}, \elle{\kk_1}}$ to quantum operations in $\cp{\elle{\hh_2}, \elle{\kk_2}}$ has the following form:  
\begin{equation}\label{eq. intro 1}
[\SS (\ee)] (A) = V_1^\ast \left[(\ee\otimes \ii_{\vv_1}) ( V_2^\ast (A\otimes I_{\vv_2}) V_2 ) \right] V_1 \quad \forall A\in \elle{\hh_2}
\end{equation}
for all $\ee\in\cp{\elle{\hh_1}, \elle{\kk_1}}$, where  $\vv_1$ and $\vv_2$ are two ancillary finite dimensional Hilbert spaces, $V_1:  \kk_2 \to  \kk_1  \otimes \vv_1$  and $V_2 :   \hh_1  \otimes \vv_1  \to  \hh_2 \otimes \vv_2 $ are isometries, $\ii_{\vv_1}$ is the identity map on $\elle{\vv_1}$ and $I_{\vv_2}$ is the identity operator on $\vv_2$.
In the Schr\"odinger (or predual) picture, this result shows that the most general way to transform a quantum operation is achieved by connecting the corresponding device in a quantum circuit   consisting in the following sequence of operations:  
\begin{enumerate}
\item apply an invertible transformation (corresponding to the isometry $V_1$), which transforms  the system $\kk_2$ into the composite system $\kk_1 \otimes \vv_1$;
\item use the input device on system $\kk_1$, thus transforming it into system $\hh_1$;  in the Schr\"odinger picture the action of the device will correspond to a set of predual quantum operations $\ee_\ast$ transforming states on $\kk_1$ into states on $\hh_1$;
\item apply an invertible transformation (corresponding to the isometry $V_2$),  which transforms  the composite system $\hh_1 \otimes \vv_1$ into the composite system $\hh_2 \otimes \vv_2$;
\item discard system $\vv_2$ (mathematically,  take the partial trace over $\vv_2$).
\end{enumerate}

In this paper we will extend Eq.~\eqref{eq. intro 1} and the other results of \cite{CDaP1,CDaP2,bitcommitment} to the case where the input spaces $\elle{\hh_i}$ of the quantum operations are replaced by arbitrary separable von Neumann algebras and the outputs $\elle{\kk_i}$ also are allowed to be infinite dimensional. The usefulness of this extension for applications is twofold: on the one hand, it removes the restriction to finite dimensional quantum systems and provides the natural generalization of quantum supermaps to the infinite dimensional case; on the other hand, replacing the input algebras $\elle{\hh_i}$ with generic separable von Neumann algebras, it allows us to include transformations of quantum measuring devices, which are described in the Schr\"odinger picture by maps from the algebra of bounded operators on the Hilbert space of the measured system to the commutative algebra of functions on the outcome space. The  supermaps defined in this paper are thus able to describe tasks like `measuring a measurement' \cite{lor,fiur,soto,eisert}, where one tries to measure properties of a quantum measuring device by inserting it in a suitable circuit.

In trying to extend Eq.~\eqref{eq. intro 1} to the infinite dimensional setting, one encounters two key differences with respect to the finite dimensional case.  The first difference concerns the domain of definition of quantum supermaps. Clearly, the natural domain for a quantum supermap is the linear space spanned by quantum operations. However, while in finite dimensions quantum operations in $\cp{\elle{\hh_i},\elle{\kk_i}}$ span the whole set of linear maps from $\elle{\hh_i}$ to $\elle{\kk_i}$, in infinite dimension they only span the {\em proper} subset $\cb{\elle{\hh_i},\elle{\kk_i}}$ of weak*-continuous completely bounded linear maps, which is even smaller than the set of bounded linear maps from $\elle{\hh_i}$ to $\elle{\kk_i}$. The second key difference concerns the necessary and sufficient conditions needed for the proof of the dilation theorem.  Indeed, not every deterministic quantum supermap admits a dilation of the form of Eq.~\eqref{eq. intro 1} in the infinite dimensional case.  We will prove that such a dilation exists if and only if  the deterministic supermap $\SS$ is  \emph{normal}, in a suitable sense that will be defined later.    Under the normality hypothesis, a natural algebraic construction leads to our dilation theorem (Theorem \ref{teo. centr.}) for deterministic supermaps, which is the main result of the paper. This result can be compared with analogous results in the theory of operator spaces \cite{BlM,Paul}, which, however, though being more general, are much less specialized than ours and imply it only in trivial cases (see Remark \ref{rem:opspa2} in Section \ref{sez. stine} for a brief discussion).

Our second result is a Radon-Nikodym theorem for probabilistic supermaps, namely supermaps that are dominated by deterministic supermaps.  The class of probabilistic supermaps is particularly interesting for physical applications, as such maps naturally appear in the description of quantum circuits that are designed to test properties of physical devices \cite{combs,CDaP1,CDaP2,bitcommitment}. Higher-order quantum measurements are indeed described by \emph{quantum superinstruments}, which are the generalization of the quantum instruments of Davies and Lewis \cite{DavLew}. The third main result of the paper then will be  the proof of a dilation theorem for quantum superinstruments, in analogy with Ozawa's dilation theorem for ordinary instruments \cite{Ozawa}.

The paper is organized as follows. In Section \ref{sez. notaz.} we fix the elementary definitions and notations, and state or recall some basic facts needed in the rest of the paper. In particular, in Section \ref{sez. succ. cresc.} we extend the notion of increasing nets from positive operators to normal completely positive maps, while Section \ref{sez. amplificaz.} contains  some elementary results about the tensor product of weak*-continuous completely bounded maps. In Section \ref{sez. centr.} we define normal completely positive supermaps  and provide some examples.   In Section \ref{sez. stine} we prove the dilation Theorem \ref{teo. centr.} for deterministic supermaps.  As an application of Theorem \ref{teo. centr.}, in Section \ref{subsect:meastochan} we show that every deterministic supermap transforming measurements into quantum operations can be realized by connecting devices in a quantum circuit.   Section \ref{sez. Radon}  extends Theorem \ref{teo. centr.} to probabilistic supermaps, providing a Radon-Nikodym theorem for supermaps. We then define quantum superinstruments in Section \ref{sez. superstr.} and use the Radon-Nikodym theorem to prove a dilation theorem for quantum superinstruments, in  analogy with Ozawa's result for ordinary instruments  (see in particular Proposition 4.2 in \cite{Ozawa}). The dilation theorem for quantum superinstruments is finally applied in Section \ref{subsect:measmeas} to show how every abstract superinstrument describing a measurement on a quantum measuring device can be realized in a circuit.

\section{Preliminaries and notations}\label{sez. notaz.}

In this paper, unless the contrary is  explicitly stated, we will always mean by \emph{Hilbert space} a complex and separable Hilbert space, with norm $\no{\cdot}$ and scalar product $\langle\cdot,\cdot\rangle$ linear in the second entry. If $\hh$, $\kk$ are Hilbert spaces, we denote by $\elle{\hh,\kk}$ the Banach space of bounded linear operators from $\hh$ to $\kk$ endowed with the uniform norm $\no{\cdot}_\infty$. If $\hh=\kk$, we will use the shortened notation $\lh :=\elle{\hh,\hh}$, and $I_\hh$ will be the identity operator in $\lh$. The linear space $\lh$ is ordered in the usual way by the cone of positive (semidefinite) operators. We denote by $\leq$ the order relation in $\lh$, and by $\elle{\hh}_+$ the cone of positive operators.

By {\em von Neumann algebra} we mean a $\ast$-subalgebra $\mm\subset\lh$ such that $\mm=(\mm')'$, where $\mm'$ denotes the commutant of $\mm$ in $\lh$. Note that, as we will always assume  that the Hilbert space $\hh$ is separable, the von Neumann algebras  considered here will be those called \emph{separable} in the literature.

When $\mm$ is regarded as an abstract von Neumann algebra (i.e.~without reference to the representing Hilbert space $\hh$), we will write its identity element $I_\mm$ instead of $I_\hh$. As usual, we define $\mm_+ := \mm\cap\elle{\hh}_+$. The identity map on $\mm$ will be denoted by $\ii_\mm$, and, when $\mm\equiv\lh$, the abbreviated notation $\ii_\hh:=\ii_{\elle{\hh}}$ will be used.

The algebraic tensor product of linear spaces $U$, $V$ will be written $U\hotimes V$, while the notation $\hh\otimes\kk$ will be reserved to denote the Hilbert space tensor product of the Hilbert spaces $\hh$ and $\kk$.
The inclusion $\hh\hotimes\kk \subset \hh\otimes\kk$ holds, and it is actually an equality iff $\hh$ or $\kk$ is finite dimensional. We will sometimes use the notation $\hhn : = \C^n \otimes \hh$.

If $A\in\lh$ and $B\in\lk$, their tensor product $A\otimes B$, which is well defined as a linear map on $\hh\hotimes\kk$, uniquely extends to a bounded operator $A\otimes B \in\elle{\hh\otimes\kk}$ in the usual way (see e.g.~p.~183 in \cite{Tak}). Thus, the algebraic tensor product $\lh\hotimes\lk$ can be regarded as a linear subspace of $\elle{\hh\otimes\kk}$. Also in this case, the equality $\lh\hotimes\lk = \elle{\hh\otimes\kk}$ holds iff $\hh$ or $\kk$ is finite dimensional. More generally, let $\mm\subset \lh$ and $\nn\subset \lk$ be two von Neumann algebras. Then, $\mm\hotimes\nn$ is a linear subspace of $\elle{\hh\otimes\kk}$. Its weak*-closure is the von Neumann algebra $\mm\votimes\nn \subset \elle{\hh\otimes\kk}$ (see Definition 1.3 p.~183 in \cite{Tak}). Clearly, $\mm\hotimes\nn = \mm\votimes\nn$ iff $\mm$ or $\nn$ is finite dimensional. It is a standard fact that $\lh\votimes\lk = \elle{\hh\otimes\kk}$ (see Eq.~10, p.~185 in \cite{Tak}).

We denote by $M_n (\C)$ the linear space of square $n\times n$ complex matrices, which we identify as usual with the space $\elle{\C^n}$. If $\mm\subset\lh$ is a von Neumann algebra, we write $\mmn : = M_n (\C) \votimes \mm$, which is a von Neumann algebra contained in $\elle{\hhn}$. As remarked above, $\mmn$ coincides with the algebraic tensor product $M_n (\C) \hotimes \mm$. If $\ee : M_m (\C) \frecc M_n (\C)$ and $\ff : \mm \frecc \nn$ are linear operators, we then see that their algebraic tensor product can be regarded as a linear map $\ee \otimes \ff : \mm^{(m)} \frecc \nnn$. Since both $\mm^{(m)}$ and $\nnn$ are von Neumann algebras, it makes sense to speak about positivity and boundedness of $\ee \otimes \ff$. This fact is at the heart of the  classical definitions of complete positivity and complete boundedness. In both  definitions, we use $\ii_n$ to denote the identity map on $M_n (\C)$, i.e.~$\ii_n : = \ii_{M_n (\C)}$.

\begin{definition}{Definition}\label{def:CB-CP}
Let $\mm$, $\nn$ be two von Neumann algebras. Then a linear map $\ee : \mm \frecc \nn$ is
\begin{itemize}
\item[-] {\em completely positive (CP)} if the linear map $\ii_n \otimes \ee$ is positive, i.e.~maps $\mmn_+$ into $\nnn_+$, for all $n\in\N$;
\item[-] {\em completely bounded (CB)} if there exists $C>0$ such that, for all $n\in\N$,
$$
\|(\ii_n \otimes \ee)(\tilde{A})\|_\infty \leq C \|\tilde{A}\|_\infty \quad \forall \tilde{A}\in \mmn ,
$$
i.e.~if the linear map $\ii_n \otimes \ee$ is bounded from the Banach space $\mmn$ into the Banach space $\nnn$ for all $n\in\N$, and the uniform norms of all the maps $\{\ii_n \otimes \ee\}_{n\in\N}$ are majorized by a constant independent of $n$.
\end{itemize}
\end{definition}

\begin{definition}{Example}
The simplest example of CP and CB map is given by a $\ast$-homomorphism $\pi: \mm \frecc \nn$. Indeed, for all $n\in\N$ the tensor product $\ii_n \otimes \pi : \mmn \frecc \nnn$ is again a $\ast$-homomorphism, hence it is positive and satisfies $\|(\ii_n \otimes \pi)(\tilde{A})\|_\infty \leq \|\tilde{A}\|_\infty \ \forall \tilde{A}\in\mmn$.
\end{definition}

We recall that a positive linear map $\ee:\mm \frecc \nn$ is {\em normal} if it preserves the limits of increasing and bounded sequences, i.e.~$\ee(A_n) \uparrow \ee(A)$ in $\nn$ for all increasing sequences $\{A_n\}_{n\in\N}$ and $A$ in $\mm_+$ such that $A_n\uparrow A$ (as usual, the notation $A_n\uparrow A$ means that $A$ is the {\em least upper bound} of the sequence $\{A_n\}_{n\in\N}$ in $\mm$, see e.g.~Lemma 1.7.4 in \cite{Sakai}). It is a standard fact that a positive linear map $\ee:\mm \frecc \nn$ is normal if and only if it is weak*-continuous (Theorem 1.13.2 in \cite{Sakai}). 

We introduce the following notations:
\begin{itemize}
\item[-] $\cb{\mm,\nn}$ is the linear space of {\em weak*-continuous} CB maps from $\mm$ to $\nn$;
\item[-] $\cp{\mm,\nn}$ is the set of \emph{normal} CP maps from $\mm$ to $\nn$;
\item[-] $\cpn{\mm,\nn}$ is the set of \emph{quantum channels} from $\mm$ to $\nn$, i.e.~the subset of elements $\ee\in\cp{\mm,\nn}$ such that $\ee(I_\mm) = I_\nn$.
\end{itemize}

\begin{definition}{Remark}\label{rem:CB=Lin in dim finita}
Suppose $\nn\subset M_n (\C)$. Then the set $\cb{\mm,\nn}$ coincides with the space of all weak*-continuous linear maps from $\mm$ to $\nn$ (see e.g.~Exercise 3.11 in \cite{Paul}). In particular, if also $\mm\subset M_m (\C)$, then $\cb{\mm,\nn}$ is the set of all linear maps from $\mm$ to $\nn$.
\end{definition}

If $\mm_1$, $\mm_2$ and $\mm_3$ are von Neumann algebras and $\ff\in\cb{\mm_1,\mm_2}$, $\ee\in\cb{\mm_2,\mm_3}$, then $\ee\ff\in\cb{\mm_1,\mm_3}$. The same fact is true if we replace all ${\rm CB}$ spaces with ${\rm CP}$'s or ${\rm CP}_1$'s.    

\begin{definition}{Remark}\label{restr-CP}
Let $\mm_0$, $\mm$ be two von Neumann algebras contained in the same operator algebra $\lh$, with $\mm_0\subset \mm$. Since the inclusion map $\ii_{\mm_0,\mm} : \mm_0 \hookrightarrow \mm$ is in $\cpn{\mm_0,\mm}$, it follows by the composition property that the restriction $\ee \mapsto \left.\ee\right|_{\mm_0} = \ee \ii_{\mm_0,\mm}$ maps $\cb{\mm,\nn}$ [resp., $\cp{\mm,\nn}$; $\cpn{\mm,\nn}$] into $\cb{\mm_0,\nn}$ [resp., $\cp{\mm_0,\nn}$; $\cpn{\mm_0,\nn}$]. A similar application of the composition property also shows the inclusions $\cb{\nn,\mm_0} \hookrightarrow \cb{\nn,\mm}$, $\cp{\nn,\mm_0} \hookrightarrow \cp{\nn,\mm}$ and $\cpn{\nn,\mm_0} \hookrightarrow \cpn{\nn,\mm}$. 
\end{definition}

The relation between the two sets $\cb{\mm,\nn}$ and $\cp{\mm,\nn}$ is shown in the following theorem (see also \cite{Haag}).
\begin{theorem}{Theorem}\label{CB = span CP}
The inclusion $\cp{\mm , \nn} \subset \cb{\mm , \nn}$ holds, and the set $\cp{\mm , \nn}$ is a cone in the linear space $\cb{\mm , \nn}$. For $\nn\equiv\lk$, the linear space spanned by $\cp{\mm , \lk}$ coincides with $\cb{\mm , \lk}$. More precisely, if $\ee\in\cb{\mm , \lk}$, then there exists four maps $\ee_k\in\cp{\mm , \lk}$ ($k=0,1,2,3$) such that $\ee = \sum_{k=0}^3 i^k \ee_k$.
\end{theorem}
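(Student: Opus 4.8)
The plan is to dispatch the three assertions separately, the first two being routine and the last resting on a generalized Stinespring representation. For $\cp{\mm,\nn}\subset\cb{\mm,\nn}$ I would first recall that normality coincides with weak*-continuity (Theorem 1.13.2 in \cite{Sakai}), so that only the automatic complete boundedness of a CP map remains to be checked. Since $\mm$ is unital and $\ii_n\otimes\ee$ is again CP with $(\ii_n\otimes\ee)(I_{\mmn})=I_n\otimes\ee(I_\mm)$, the standard fact that a positive map on a unital $C^\ast$-algebra attains its norm at the identity gives
$$\no{(\ii_n\otimes\ee)(\tilde A)}_\infty\leq\no{\ee(I_\mm)}_\infty\,\no{\tilde A}_\infty\qquad\forall\,\tilde A\in\mmn,$$
with a constant independent of $n$, so $\ee$ is CB. The cone property is then immediate: for $\ee,\ff\in\cp{\mm,\nn}$ and $\lambda\geq0$ the amplification $\ii_n\otimes(\ee+\lambda\ff)$ is a positive combination of positive maps, hence positive, while weak*-continuity survives sums and nonnegative scalings, so $\ee+\lambda\ff\in\cp{\mm,\nn}$.

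The content lies in the third claim, with $\nn\equiv\lk$. The inclusion ${\rm span}\,\cp{\mm,\lk}\subset\cb{\mm,\lk}$ is clear from the first part and the linearity of $\cb{\mm,\lk}$. For the reverse inclusion I would fix $\ee\in\cb{\mm,\lk}$ and invoke the generalized Stinespring representation of a weak*-continuous CB map into $\lk$ (see \cite{Haag,Paul}): there exist a Hilbert space $\hh_\pi$, a normal $\ast$-representation $\pi:\mm\to\elle{\hh_\pi}$, and bounded operators $V,W:\kk\to\hh_\pi$ such that $\ee(A)=V^\ast\pi(A)W$ for every $A\in\mm$.

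With this representation in hand the decomposition is a polarization. Putting $T_k:=W+i^kV$ for $k=0,1,2,3$ and using $\sum_{k=0}^3 i^k=\sum_{k=0}^3 i^{2k}=0$ together with $\sum_{k=0}^3 1=4$, a direct expansion of $T_k^\ast\pi(A)T_k$ yields
$$V^\ast\pi(A)W=\frac14\sum_{k=0}^3 i^k\,T_k^\ast\pi(A)T_k.$$
Setting $\ee_k(A):=\tfrac14\,T_k^\ast\pi(A)T_k$, each $\ee_k$ is completely positive and, because $\pi$ is normal while $B\mapsto T_k^\ast BT_k$ is weak*-continuous from $\elle{\hh_\pi}$ to $\lk$, also normal; hence $\ee_k\in\cp{\mm,\lk}$ and $\ee=\sum_{k=0}^3 i^k\ee_k$, exactly as stated.

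The hard part will be securing the representation with $\pi$ genuinely \emph{normal}. The purely algebraic Wittstock construction already produces a $\ast$-representation of $\mm$ and the above decomposition, but keeping the $\ee_k$ inside the class of \emph{normal} CP maps forces the dominating CP map — equivalently, the Stinespring dilation of its $2\times2$ amplification — to be weak*-continuous. This is precisely where I expect the hypothesis $\nn\equiv\lk$ to enter: the injectivity of $\elle{\kk}$ supplies Arveson-type extensions and a normal dilation when the map $\ee$ is itself weak*-continuous. Once such a normal $\pi$ is available, the polarization step is purely formal and the proof is complete.
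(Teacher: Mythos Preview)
Your strategy coincides with the paper's: obtain a two-sided Stinespring/Wittstock dilation $\ee(A)=V^\ast\pi(A)W$ and then polarize exactly as you do. The only substantive issue, as you yourself flag, is securing a \emph{normal} $\pi$; your proposal is internally inconsistent here, first citing the normal dilation as given and then calling it the ``hard part''.

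The paper fills this gap, but not via injectivity of $\lk$ or Arveson extensions as you speculate. It starts from the general (not necessarily normal) dilation of Theorem 8.4 in \cite{Paul}, and then applies the normal/singular decomposition of representations (Theorem 2.14, p.~127 in \cite{Tak}): there is a projection $P$ commuting with $\pi$ such that $A\mapsto\pi(A)|_{P\hhat}$ is normal and $\,^t\pi(\omega_{P^\perp u,P^\perp v})\in\mm_\ast^\perp$ for all $u,v$. Since $\ee$ is weak*-continuous, the vector functionals $\,^t\ee(\omega_{u,v})$ lie in $\mm_\ast$, forcing the singular component $\,^t\pi(\omega_{P^\perp V_1 u,P^\perp V_2 v})$ to vanish; hence $\ee(A)=V_1^\ast P\pi(A)PV_2$, and the restricted representation is normal. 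Your polarization then finishes the job. So your outline is correct; the missing ingredient is this normal/singular splitting argument rather than injectivity.
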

\begin{proof}
We have already remarked that, if a positive map $\ee : \mm \frecc \nn$ is normal, then it is weak*-continuous. If $\ee$ is CP, then it is CB by Proposition 3.6 in \cite{Paul}. Thus, the inclusion $\cp{\mm , \nn} \subset \cb{\mm , \nn}$ holds. Clearly, $\cp{\mm , \nn}$ is a cone in $\cb{\mm , \nn}$.

Now, suppose $\ee\in\cb{\mm , \lk}$. By Theorem 8.4 in \cite{Paul}, there exists a (not necessarily separable) Hilbert space $\hhat$, a unital $\ast$-homomorphism $\pi : \mm \frecc \elle{\hhat}$ and bounded operators $V_i : \kk\frecc\hhat$ ($i=1,2$) such that
$$
\ee(A) = V_1^\ast \pi(A) V_2 \quad \forall A\in\mm \, .
$$
Let $\mm^\ast$ be the Banach dual space of $\mm$, and let $\mm^\ast = \mm_\ast\oplus\mm_\ast^\perp$ be the direct sum decomposition of $\mm^\ast$ into its normal and singular parts, as described in Definition 2.13 p.~127 of \cite{Tak} (the normal part $\mm_\ast$ coincides with the {\em predual} of $\mm$). If $u,v\in\kk$ [resp., $u,v\in\hhat$], denote by $\omega_{u,v}$ the element in the Banach dual $\lk^\ast$ of $\lk$ [resp., $\elle{\hhat}^\ast$ of $\elle{\hhat}$] given by $\omega_{u,v} (A) = \scal{u}{Av}$ for all $A\in\lk$ [resp., $A\in\elle{\hhat}$]. By Theorem 2.14 p.~127 in \cite{Tak}, there exists an orthogonal projection $P\in\elle{\hhat}$ which commutes with $\pi$ and is such that:
\begin{itemize}
\item[-] the $\ast$-homomorphism $A\mapsto \left.\pi(A)\right|_{P\hhat}$ is a normal representation of $\mm$ on $P\hhat$;
\item[-] $^t \pi (\omega_{P^\perp u,P^\perp v}) \in \mm_\ast^\perp$ for all $u,v\in\hhat$, where $P^\perp = I_{\hhat} - P$ and  $^t \pi$ is the transpose of $\pi$, defined by $[^t \pi(\omega)] (A)  :  =  \omega (\pi (A)) \ \forall \omega \in \elle{\hhat}^\ast, \  A \in \mm$.
\end{itemize}
Since $P$ and $\pi$ commute, we have
$$
^t \ee(\omega_{u,v}) = \,^t \pi (\omega_{V_1 u,V_2 v}) = \,^t \pi (\omega_{PV_1 u,PV_2 v}) + \,^t \pi (\omega_{P^\perp V_1 u,P^\perp V_2 v}) \, .
$$
Since $^t \ee(\omega_{u,v}) \in \mm_\ast$, $\,^t \pi (\omega_{PV_1 u,PV_2 v}) \in \mm_\ast$ and $\,^t \pi (\omega_{P^\perp V_1 u,P^\perp V_2 v}) \in \mm_\ast^\perp$, it follows that $\,^t \pi (\omega_{P^\perp V_1 u,P^\perp V_2 v}) = 0$, hence
$$
^t \ee(\omega_{u,v}) = \,^t \pi (\omega_{PV_1 u,PV_2 v}) \quad \forall u,v\in\kk
$$
or, equivelently,
$$
\scal{u}{\ee(A)v} = \scal{PV_1 u}{\pi(A) PV_2 v} \quad \forall A\in\mm \, , \, u,v\in\kk \, .
$$
We thus see that $\ee(A) = V_1^\ast P \pi(A) PV_2$ for all $A\in\mm$. As $\pi$ restricted to the subspace $P\hhat$ is normal, then each map $\ee_k$ ($k=0,1,2,3$), given by
$$
\ee_k (A) = \frac{1}{4} (i^k V_1 + V_2)^\ast P \pi(A) P (i^k V_1 + V_2) \quad \forall A\in\mm \, ,
$$
is in $\cp{\mm,\lk}$. Since $\ee = \sum_{k=0}^3 i^k \ee_k$, this shows that $\cb{\mm,\lk}$ is the linear span of $\cp{\mm,\lk}$.
\end{proof}

The cone $\cp{\mm, \nn}$ induces a linear ordering in the space $\cb{\mm, \nn}$, that we will denote by $\preceq$. Namely, given two maps $\ee,\ff \in \cb{\mm, \nn}$, we will write $\ee \preceq  \ff$  whenever $\ff - \ee \in\cp{\mm, \nn}$.

An elementary example of maps in $\cb{\mm,\lk}$ is constructed in the following way. Suppose $\mm\subset\lh$. For $E\in\elle{\hh, \kk}$, $F\in\elle{\kk, \hh}$, denote by $E\odot_\mm F$ the linear map
$$
E\odot_\mm F : \mm \frecc \lk \, , \qquad (E\odot_\mm F)(A) = E AF 
$$
[Note that the domain of the map $E\odot_\mm F$ is explicitly indicated by the subscript $\mm $].

The main properties of $E\odot_\mm F$ are collected in the next proposition.
\begin{theorem}{Proposition}\label{prop:prop. di odot}
Suppose $\mm\subset\lh$ is a von Neumann algebra. Then, for all $E\in\elle{\hh, \kk}$ and $F\in\elle{\kk, \hh}$,
\begin{enumerate}
\item $E\odot_\mm F\in\cb{\mm,\lk}$;
\item $F^\ast\odot_\mm F\in\cp{\mm,\lk}$;
\item for all operators $A\in\lh$ in the commutant $\mmp$ of $\mm$, we have $EA\odot_\mm F = E\odot_\mm AF$;
\item for all $A\in\mmp$, with $0\leq A\leq I_\hh$, we have $A^{\frac 12}\odot_\mm A^{\frac 12} \preceq \ii_{\mm,\lh}$, where the map $\ii_{\mm,\lh}$ is the inclusion $\mm\hookrightarrow\lh$.
\end{enumerate}
\end{theorem}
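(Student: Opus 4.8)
The plan is to deduce all four statements from a single description of the amplified map. Writing a generic element $\tilde{A}\in\mmn = M_n(\C)\votimes\mm$ as a matrix $(A_{ij})_{i,j=1}^n$ with entries $A_{ij}\in\mm$, the $(i,j)$ entry of $(\ii_n\otimes(E\odot_\mm F))(\tilde{A})$ is $EA_{ij}F$; recognizing this as a product of block operators on $\hhn=\C^n\otimes\hh$ and $\kkn$ gives the identity
\[
(\ii_n\otimes(E\odot_\mm F))(\tilde{A}) = (I_n\otimes E)\,\tilde{A}\,(I_n\otimes F),
\]
where $I_n\otimes E:\hhn\frecc\kkn$ and $I_n\otimes F:\kkn\frecc\hhn$. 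This identity is the workhorse for (1) and (2).

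For (1), the identity immediately yields $\no{(\ii_n\otimes(E\odot_\mm F))(\tilde{A})}_\infty \le \no{E}_\infty\no{\tilde{A}}_\infty\no{F}_\infty$, because tensoring an operator with $I_n$ leaves its uniform norm unchanged; hence $E\odot_\mm F$ is CB with constant $C=\no{E}_\infty\no{F}_\infty$, uniform in $n$. Weak*-continuity I would check on the predual: for $T\in\trcl{\kk}$ one has $\trt{T\,EAF}=\trt{(FTE)A}$ with $FTE\in\trcl{\hh}$, so the transpose of $E\odot_\mm F$ sends $\trcl{\kk}$ into $\mm_\ast$, which is exactly weak*-continuity. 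Together these give $E\odot_\mm F\in\cb{\mm,\lk}$. Statement (2) is then the special case $E=F^\ast$: the identity reads $(\ii_n\otimes(F^\ast\odot_\mm F))(\tilde{A})=(I_n\otimes F)^\ast\,\tilde{A}\,(I_n\otimes F)$, which is positive whenever $\tilde{A}\in\mmn_+$, and normality is inherited from (1); thus $F^\ast\odot_\mm F\in\cp{\mm,\lk}$.

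Statement (3) is a one-line pointwise check: for $A\in\mmp$ and $B\in\mm$ we have $AB=BA$, so $(EA\odot_\mm F)(B)=EABF=EBAF=(E\odot_\mm AF)(B)$. For (4) I would first use that the square root of a positive element of the von Neumann algebra $\mmp$ stays in $\mmp$, so that $A^{\frac12}$ and $(I_\hh-A)^{\frac12}$ both commute with every element of $\mm$. Applying (3) to shuttle these commuting factors across $\odot_\mm$, and using bilinearity, I would compute
\begin{align*}
\ii_{\mm,\lh} - A^{\frac12}\odot_\mm A^{\frac12} & = I_\hh\odot_\mm I_\hh - I_\hh\odot_\mm A = I_\hh\odot_\mm(I_\hh-A) \\
& = (I_\hh-A)^{\frac12}\odot_\mm(I_\hh-A)^{\frac12},
\end{align*}
where $0\le A\le I_\hh$ guarantees that $(I_\hh-A)^{\frac12}$ exists and is self-adjoint. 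By (2) the right-hand side belongs to $\cp{\mm,\lh}$, which is precisely the claim $A^{\frac12}\odot_\mm A^{\frac12}\preceq\ii_{\mm,\lh}$.

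The only genuinely non-algebraic point, and thus the main obstacle, is the weak*-continuity in (1): complete positivity, complete boundedness, and the ordering relation are all formal consequences of the amplification identity and of spectral calculus in the commutant, whereas membership in $\cb{\mm,\lk}$ (as opposed to merely being a bounded CB map) requires the normality argument, which is why I would run it through the preduals.
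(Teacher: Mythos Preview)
Your proof is correct and follows essentially the same route as the paper: both use the amplification identity $\ii_n\otimes(E\odot_\mm F)=(I_n\otimes E)\odot_{\mmn}(I_n\otimes F)$ for (1) and (2), treat (3) as a one-line commutation check, and derive (4) by using (3) to rewrite $\ii_{\mm,\lh}-A^{1/2}\odot_\mm A^{1/2}$ as $(I_\hh-A)^{1/2}\odot_\mm(I_\hh-A)^{1/2}$ and then invoking (2). The only difference is cosmetic: you spell out the predual argument for weak*-continuity, which the paper simply calls ``clear''.
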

\begin{proof}
(1) Weak*-continuity of $E\odot_\mm F$ is clear. For all $n\in \N$, we have the equality $\ii_n \otimes (E\odot_\mm F) = (I_{\C^n}\otimes E)\odot_{\mm^{(n)}} (I_{\C^n}\otimes F)$, and then
\begin{align*}
\|[\ii_n \otimes (E\odot_\mm F)](\tilde{A})\|_\infty & \leq \no{I_{\C^n}\otimes E}_\infty \|\tilde{A}\|_\infty \no{I_{\C^n}\otimes F}_\infty \\
& = \no{E}_\infty \|\tilde{A}\|_\infty \no{F}_\infty
\end{align*}
for all $\tilde{A}\in\mmn$, which shows that $E\odot_\mm F$ is CB (with $C=\no{E}_\infty \no{F}_\infty$).

(2) For all $n\in \N$, we have $\ii_n \otimes (F^\ast\odot_\mm F) = (I_{\C^n}\otimes F)^\ast \odot_{\mm^{(n)}}   (I_{\C^n}\otimes F)$, which is positive from $\mmn$ into $\elle{\kkn}$.

(3) Trivial.

(4) Since $A^{\frac 12} ,\, (I_\hh - A)^{\frac 12} \in \mmp$, by item (3) we have $A^{\frac 12} \odot_\mm A^{\frac 12} = A \odot_\mm I_\hh$ and $(I_\hh - A)^{\frac 12} \odot_\mm (I_\hh - A)^{\frac 12} = (I_\hh - A) \odot_\mm I_\hh = \ii_{\mm,\lh} - A \odot_\mm I_\hh$. Therefore,
$$
A^{\frac 12} \odot_\mm A^{\frac 12} = \ii_{\mm,\lh} - (I_\hh - A)^{\frac 12} \odot_\mm (I_\hh - A)^{\frac 12} \, ,
$$
and the claim follows as $(I_\hh - A)^{\frac 12} \odot_\mm (I_\hh - A)^{\frac 12} \succeq 0$.
\end{proof}

The importance of the elementary maps $E\odot_\mm F$'s will become clear in the following, as we will briefly see that by Kraus theorem (see Theorem \ref{Teo. Stines.} below) every map in $\cb{\mm,\lk}$ is the limit (in a suitable sense) of sums of elementary maps $E \odot_\mm F$.

Two of the main features of CB weak*-continuous maps which we will need in the rest of the paper are the following:
\begin{itemize}
\item[-] a notion of limit can be defined for a particular class of sequences in $\cb{\mm,\nn}$, which is the analogue of the least upper bound for increasing bounded sequences of operators;
\item[-] if $\mm_1$, $\mm_2$, $\nn_1$, $\nn_2$ are von Neumann algebras, the maps in $\cb{\mm_1,\nn_1}$ and $\cb{\mm_2,\nn_2}$ can be tensored in order to obtain elements of the set $\cb{\mm_1\votimes\mm_2,\nn_1\votimes\nn_2}$.
\end{itemize}
As these concepts are the main two ingredients in our definition of supermaps and in the proof of a dilation theorem for them, we devote the next two sections to their explanation.

\subsection{Increasing nets of normal CP maps}\label{sez. succ. cresc.}

If $\Lam$ is a directed set and $\{ A_\lam \}_{\lam\in\Lam}$ is a net of operators in $\mm_+$, we say that the net is {\em increasing} if $A_{\lam_1} \leq A_{\lam_2}$ whenever $\lam_1 \leq \lam_2$, and {\em bounded} if there exists $B\in\mm_+$ such that $A_{\lam} \leq B$ for all $\lam\in\Lam$. In this case, the net has a {\em least upper bound} $A\in\mm_+$, and we use the notation $A_\lam\uparrow A$.

We now extend the notion of increasing net and least upper bound to nets in $\cp{\mm,\nn}$. We say that the net $\left\{ \ee_\lam \right\}_{\lam\in\Lam}$ of elements in $\cp{\mm,\nn}$ is
\begin{itemize}
\item {\em CP-increasing} if $\ee_{\lam_1} \preceq \ee_{\lam_2}$ whenever $\lam_1 \leq \lam_2$,
\item {\em CP-bounded} if there exists a map $\ff\in\cp{\mm,\nn}$ such that $\ee_\lam \preceq \ff$ for all $\lam\in\Lam$.
\end{itemize}

Note that, if the net $\left\{ \ee_\lam \right\}_{\lam\in\Lam}$ is CP-increasing, then, for all $A\in\mm_+$, the net of operators $\left\{ \ee_\lam (A) \right\}_{\lam\in\Lam}$ is increasing in $\nn_+$. Moreover, if $\left\{ \ee_\lam \right\}_{\lam\in\Lam}$ is CP-bounded by $\ff\in\cp{\mm,\nn}$, then the net $\left\{ \ee_\lam (A) \right\}_{\lam\in\Lam}$ is bounded by $\ff(A)$ in $\nn$.

The following result now shows that the least upper bound exists for any CP-incresing and CP-bounded net in $\cp{\mm,\nn}$.
\begin{theorem}{Proposition}\label{Teo. Berb. 2}
If $\left\{ \ee_\lam \right\}_{\lam\in\Lam}$ is a net in $\cp{\mm,\nn}$ which is CP-increasing and CP-bounded, then there exists a unique $\ee\in\cp{\mm,\nn}$ such that
\begin{equation}\label{conv. di En}
\wklim_{\lam\in\Lam} \ee_\lam (A) = \ee(A) \quad \forall A\in \mm \, .
\end{equation}

$\ee$ has the following property: $\ee_\lam \preceq \ee$ for all $\lam\in\Lam$, and, if $\ff\in \cp{\mm,\nn}$ is such that $\ee_\lam \preceq \ff$ for all $\lam\in\Lam$, then $\ee \preceq \ff$.
\end{theorem}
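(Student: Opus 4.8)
My plan is to build the candidate map $\ee$ by pointwise weak*-limits, then verify in turn that it is completely positive, that it is normal, and finally that it is the least upper bound of the net for the order $\preceq$. First I would fix $A\in\mm_+$: by the CP-increasing hypothesis the net $\{\ee_\lam(A)\}_{\lam\in\Lam}$ is increasing in $\nn_+$, and by the CP-bound $\ff$ it is dominated by $\ff(A)$, so it is an increasing bounded net of positive operators and therefore converges weak* to its least upper bound. I define $\ee(A)$ to be this limit. To pass to general $A\in\mm$, I would write $A$ as a linear combination of at most four elements of $\mm_+$, set $\ee(A)$ by the same combination of the corresponding limits, and check well-definedness by the standard argument ($B-C=B'-C'\Rightarrow B+C'=B'+C\in\mm_+$) together with additivity of weak*-limits of jointly convergent nets. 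The resulting $\ee$ is linear and satisfies $\wklim_{\lam}\ee_\lam(A)=\ee(A)$ for \emph{every} $A\in\mm$; uniqueness in \eqref{conv. di En} is then immediate, since a pointwise weak*-limit is unique.

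Next I would establish complete positivity. Using that $\mmn=M_n(\C)\hotimes\mm$ (the algebraic tensor product, $M_n(\C)$ being finite dimensional), any $\tilde A\in\mmn$ is a matrix $(A_{ij})$ with entries in $\mm$, and $(\ii_n\otimes\ee)(\tilde A)=(\ee(A_{ij}))$. Since $\ee_\lam(A_{ij})\to\ee(A_{ij})$ weak* entrywise, and entrywise weak*-convergence in $\nnn$ is the same as weak*-convergence there, I get $(\ii_n\otimes\ee_\lam)(\tilde A)\to(\ii_n\otimes\ee)(\tilde A)$ weak*. For $\tilde A\in\mmn_+$ each $(\ii_n\otimes\ee_\lam)(\tilde A)$ is positive by complete positivity of $\ee_\lam$, and the positive cone is weak*-closed, so the limit is positive. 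Hence $\ee$ is CP, and in particular positive.

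The hard part will be normality. Fix a normal positive functional $\omega$ on $\nn$ and an increasing bounded net $A_\alpha\uparrow A$ in $\mm_+$. Positivity of $\ee$ gives an increasing bounded net $\{\ee(A_\alpha)\}$ with least upper bound $B$, where $\ee(A_\alpha)\uparrow B$ weak* and $B\leq\ee(A)$; I must show $B=\ee(A)$. The device is to consider $f(\lam,\alpha):=\omega(\ee_\lam(A_\alpha))\geq 0$, which is bounded (by $\omega(\ff(A))$) and increasing separately in each index: in $\lam$ because the net is CP-increasing, and in $\alpha$ because $A_\alpha\uparrow$ and $\ee_\lam,\omega$ are positive. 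Normality of each $\ee_\lam$ and of $\omega$ yields $\sup_\alpha f(\lam,\alpha)=\omega(\ee_\lam(A))$ and $\sup_\lam f(\lam,\alpha)=\omega(\ee(A_\alpha))$, while weak*-continuity of $\omega$ gives $\sup_\lam\omega(\ee_\lam(A))=\omega(\ee(A))$ and $\sup_\alpha\omega(\ee(A_\alpha))=\omega(B)$. Since iterated suprema of a bounded function on a product of directed sets agree ($\sup_\lam\sup_\alpha f=\sup_\alpha\sup_\lam f$), I conclude $\omega(\ee(A))=\omega(B)$. As this holds for all normal positive functionals $\omega$ (in particular all vector functionals), which separate $\nn$, I get $\ee(A)=B$, i.e.\ $\ee(A_\alpha)\uparrow\ee(A)$. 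Thus $\ee$ is a normal CP map, so $\ee\in\cp{\mm,\nn}$. This limit-interchange step is where all the hypotheses (normality of the $\ee_\lam$, CP-monotonicity, the weak* definition of $\ee$) are simultaneously used, and I expect it to be the only genuinely delicate point.

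Finally I would verify the least-upper-bound property. For fixed $\mu$, applying the weak*-closed-cone argument of the CP step to the net $\{\ee_\lam-\ee_\mu\}_{\lam\geq\mu}$ shows $\ee-\ee_\mu$ is CP; being also weak*-continuous as a difference of weak*-continuous maps, it lies in $\cp{\mm,\nn}$, so $\ee_\mu\preceq\ee$. For minimality, if $\ff'\in\cp{\mm,\nn}$ satisfies $\ee_\lam\preceq\ff'$ for all $\lam$, then for $\tilde A\in\mmn_+$ one has $(\ii_n\otimes(\ff'-\ee))(\tilde A)=\wklim_\lam(\ii_n\otimes(\ff'-\ee_\lam))(\tilde A)$, a weak*-limit of positive operators, hence positive; together with weak*-continuity of $\ff'-\ee$ this gives $\ee\preceq\ff'$, completing the proof.
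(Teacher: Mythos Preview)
Your proof is correct and follows essentially the same approach as the paper's: define $\ee$ as the pointwise least upper bound on $\mm_+$, extend by linearity, establish CP via weak*-closedness of the positive cone, and prove normality by the iterated-supremum swap $\sup_\lam\sup_\alpha=\sup_\alpha\sup_\lam$ applied to $\omega(\ee_\lam(A_\alpha))$. The only cosmetic differences are that you verify complete positivity before normality (the paper does the reverse) and you spell out the least-upper-bound property in more detail than the paper, which simply remarks that it follows from the analogous properties of least upper bounds of operators.
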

\begin{proof}
We have just seen that, if $A\in\mm_+$, then the sequence $\left\{ \ee_\lam (A) \right\}_{\lam\in\Lam}$ is bounded and increasing in $\nn$. We thus define $\ee(A)\in\nn_+$ to be its least upper bound. Now, every operator in $\mm$ is the linear combination of four elements in $\mm_+$, therefore we can extend the definition of $\ee$ to all $A\in\mm$ by linearity (it is easy to see that such definition of $\ee(A)$ does not depend on the chosen decomposition of $A$ into positive operators). If $A=\sum_{k=0}^3 i^k A_k$, with $A_k\in\mm_+$, then $\ee_\lam (A_k) \uparrow \ee(A_k)$ for all $k$, hence Eq.~\eqref{conv. di En} follows by linearity.

In order to show that $\ee$ is normal, pick any positive sequence $\left\{ A_n \right\}_{n\in\N}$ in $\mm$ such that $A_n\uparrow A$ for some $A\in\mm_+$. Then, for all positive elements $\rho$ in the predual $\nn_\ast$ of $\nn$,
\begin{eqnarray*}
\rho(\ee(A)) & = & \sup_\lam \rho(\ee_\lam (A)) = \sup_\lam \sup_n \rho(\ee_\lam (A_n)) = \sup_n \sup_\lam \rho(\ee_\lam (A_n)) \\
& = & \sup_n \rho(\ee (A_n))
\end{eqnarray*}
Hence $\ee (A_n) \uparrow \ee (A)$, and $\ee$ is normal. 

Finally, to show that $\ee$ is CP, note that, for all $\tilde{A}\in\mmn_+$, we have $\wklim_{\lam\in\Lam} (\ii_n \otimes \ee_\lam) (\tilde{A}) = (\ii_n \otimes \ee) (\tilde{A})$ by Eq.~\eqref{conv. di En}. Since $(\ii_n \otimes \ee_\lam) (\tilde{A}) \geq 0$ for all $\lam$, it follows that $(\ii_n \otimes \ee) (\tilde{A}) \geq 0$.  Hence, $\ee$ is CP.

The remaining properties of $\ee$ are easy consequences of its definition and of the analogous properties of least upper bounds in $\mm$, $\nn$.
\end{proof}

If $\{\ee_\lam \}_{\lam\in\Lam}$ and $\ee$ are as in the statement of the above proposition, then we write $\ee_\lam \Uparrow \ee$.

We can now formulate Kraus theorem \cite{Kraus71} for normal CP maps in terms of CP-increasing and CP-bounded nets. To this aim, note that, if $I$ is any set, then the class of its finite subsets $\Lam_I$ is a directed set under inclusion.

\begin{theorem}{Theorem}\label{Teo. Stines.}
{\rm (Kraus theorem)}
Suppose $\mm\subset\lh$ is a von Neumann algebra. We have the following facts.
\begin{enumerate}
\item If $I$ is a finite or countable set and $\{E_i\}_{i\in I}$ are elements in $\elle{\kk, \hh}$ such that the net of partial sums  $\{\sum_{i\in J} E_i^\ast E_i\}_{J\in\Lam_I}$ is bounded in $\lk$, then the net of partial sums $\{\sum_{i\in J} E_i^\ast \odot_\mm E_i\}_{J\in\Lam_I}$ is CP-bounded and CP-increasing in $\cp{\mm,\lk}$, hence it converges in the sense of Proposition \ref{Teo. Berb. 2} to a unique limit $\ee\in\cp{\mm,\lk}$.
\item If $\ee\in\cp{\mm,\lk}$, then there exists a finite or countable set $I$ and a sequence $\{ E_i \}_{i\in I}$ of elements in $\elle{\kk, \hh}$ such that the net of partial sums $\{\sum_{i\in J} E_i^\ast \odot_\mm E_i\}_{J\in\Lam_I}$ converges to $\ee$ in $\cp{\mm,\lk}$ in the sense of Proposition \ref{Teo. Berb. 2}.
\end{enumerate}
In both cases,  choosing an arbitrary ordering $i_1,i_2,i_3 \ldots$ of the elements of $I$, we have that the sequence of partial sums $\{\sum_{k=1}^n E_{i_k}^\ast \odot_\mm E_{i_k}\}_{n\in\N}$ is CP-bounded and CP-increasing, and converges to $\ee$ in the sense of Proposition \ref{Teo. Berb. 2}.
\end{theorem}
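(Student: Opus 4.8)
The plan is to treat the two items separately, deriving everything from Proposition~\ref{prop:prop. di odot} and Proposition~\ref{Teo. Berb. 2}, and to reduce the converse to a Stinespring-type representation of $\ee$.

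For item (1), I would first note that each summand $E_i^\ast\odot_\mm E_i$ lies in $\cp{\mm,\lk}$ by Proposition~\ref{prop:prop. di odot}(2) (taking $F=E_i$). Hence, whenever $J_1\subset J_2$, the difference $\sum_{i\in J_2}E_i^\ast\odot_\mm E_i-\sum_{i\in J_1}E_i^\ast\odot_\mm E_i=\sum_{i\in J_2\setminus J_1}E_i^\ast\odot_\mm E_i$ is a finite sum of CP maps, so the net is CP-increasing. To prove CP-boundedness I would exhibit an explicit dominating map. Let $\{e_i\}_{i\in I}$ be the canonical basis of $\ell^2(I)$ and define $V:\kk\to\hh\otimes\ell^2(I)$ by $Vu=\sum_{i\in I}E_iu\otimes e_i$; the hypothesis that $\{\sum_{i\in J}E_i^\ast E_i\}_{J}$ is bounded in $\lk$ guarantees that $\no{Vu}^2=\scal{u}{(\sum_i E_i^\ast E_i)u}$ is finite and that $V$ is bounded. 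Setting $\ff(A)=V^\ast(A\otimes I_{\ell^2(I)})V$, the composition of the normal $\ast$-homomorphism $A\mapsto A\otimes I_{\ell^2(I)}$ with $X\mapsto V^\ast X V$ shows $\ff\in\cp{\mm,\lk}$, and a direct computation gives $\ff(A)=\sum_{i\in I}E_i^\ast A E_i$. The crucial point is that for every finite $J$ the difference $\ff-\sum_{i\in J}E_i^\ast\odot_\mm E_i$ is again a map of exactly this type, built from $\{E_i\}_{i\notin J}$, hence CP; this is precisely CP-boundedness. Proposition~\ref{Teo. Berb. 2} then yields the unique limit, which one identifies with $\ff$ by passing to least upper bounds on positive elements.

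For item (2), I would start from the Stinespring dilation of the normal CP map $\ee:\mm\to\lk$, writing $\ee(A)=V_0^\ast\pi(A)V_0$ with $\pi$ a normal representation of $\mm$ on a separable Hilbert space $\hh_\pi$ and $V_0\in\elle{\kk,\hh_\pi}$. The key structural input is that a normal representation of the separable von Neumann algebra $\mm\subset\lh$ is unitarily equivalent to a subrepresentation of the amplification $A\mapsto A\otimes I_{\ell^2(\N)}$ on $\hh\otimes\ell^2(\N)$; this provides an isometry $U:\hh_\pi\to\hh\otimes\ell^2(\N)$ intertwining $\pi$ with the amplification. Putting $W=UV_0$ one obtains $\ee(A)=W^\ast(A\otimes I_{\ell^2(\N)})W$, and defining $E_i\in\elle{\kk,\hh}$ by $\scal{v}{E_iu}=\scal{v\otimes e_i}{Wu}$ gives $\ee(A)=\sum_i E_i^\ast AE_i$ together with $\sum_i E_i^\ast E_i=W^\ast W=\ee(I_\mm)$, so the partial sums $\sum_{i\in J}E_i^\ast E_i$ are bounded. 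Item (1) then shows $\{\sum_{i\in J}E_i^\ast\odot_\mm E_i\}_J$ converges to a limit which, evaluated on positive $A$, equals $\sum_i E_i^\ast AE_i=\ee(A)$; by linearity the limit is $\ee$.

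Finally, for the closing assertion I would observe that for any enumeration $i_1,i_2,\dots$ of $I$ the finite sets $\{i_1,\dots,i_n\}$ form a cofinal increasing sequence in $\Lam_I$; since on each positive $A$ the least upper bound of an increasing net coincides with that of any cofinal subnet, the sequential partial sums converge to the same $\ee$ in the sense of Proposition~\ref{Teo. Berb. 2}. I expect the only genuine obstacle to lie in item (2): one must ensure the Stinespring dilation can be taken normal on a separable space and then invoke the amplification structure of normal representations, both of which rely essentially on the standing separability assumptions. The remainder is bookkeeping with the $\odot_\mm$ calculus and the least-upper-bound machinery of Proposition~\ref{Teo. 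Berb. 2}.
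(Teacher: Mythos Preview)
Your proposal is correct and follows essentially the same route as the paper: the same stacking isometry $V$ for item~(1), a Kraus decomposition for item~(2), and cofinality for the final statement. The only cosmetic differences are that the paper verifies CP-boundedness via the projection $P_J$ and Proposition~\ref{prop:prop. di odot}(4) rather than your complement argument, and for item~(2) it simply cites \cite{EvLew,AJP} for the Kraus form instead of sketching the Stinespring-plus-amplification derivation you outline.
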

\begin{proof}
(1) The claim is trivial when $\# I<\infty$, therefore we assume $I=\N$. If $J_1 , J_2 \in \Lam_\N$ with $J_1\leq J_2$, then
$$
\sum_{i\in J_2} E_i^\ast \odot_\mm E_i - \sum_{i\in J_1} E_i^\ast \odot_\mm E_i = \sum_{i\in J_2\setminus J_1} E_i^\ast \odot_\mm E_i \succeq 0 \, ,
$$
hence the net of partial sums is CP-increasing. To show that it is CP-bounded, we introduce the following bounded operator
$$
V : \kk \frecc \hh\otimes\ell^2 \, , \qquad Vv := \sum_{i\in \N} E_i v \otimes \delta_i \, ,
$$
where $\ell^2$ is the Hilbert space of square-summable sequences and $\{\delta_i\}_{i\in \N}$ is its standard basis. The sum converges in the norm topology of $\hh\otimes\ell^2$, as $\sum_{i\in \N} \no{E_i v}^2 \leq \scal{v}{Bv} < \infty$, where $B\in\lk$ is any positive operator such that $\sum_{i\in J} E_i^\ast E_i \leq B$ for all $J\in\Lam_\N$. Given $J\in \Lam_\N$, we let $P_J$ be the orthogonal projection of $\ell^2$ onto the linear span of $\{\delta_i \mid i\in J\}$. Moreover, we define the following normal $\ast$-homomorphism
$$
\pi : \mm\frecc \mm\votimes \C I_{\ell^2} \, , \qquad \pi(A) = A\otimes I_{\ell^2} \, ,
$$
and the map
$$
\ff: = (V^\ast \odot_{\mm\votimes \C I_{\ell^2}} V) \, \pi \, .
$$
As $\ff$ is the composition of normal CP maps, we have $\ff\in\cp{\mm,\lk}$. We claim that $\sum_{i\in J} E_i^\ast \odot_\mm E_i \preceq \ff$ for all $J\in\Lam_\N$. Indeed, we have
$$
\sum_{i\in J} E_i^\ast \odot_\mm E_i = (V^\ast \odot_{\elle{\hh\otimes\ell^2}} V) [(I_\hh \otimes P_J) \odot_{\mm\votimes \C I_{\ell^2}} (I_\hh \otimes P_J)] \, \pi \, ,
$$
and $(I_\hh \otimes P_J) \odot_{\mm\votimes \C I_{\ell^2}} (I_\hh \otimes P_J) \preceq \ii_{\mm\votimes \C I_{\ell^2},\, \elle{\hh\otimes\ell^2}}$ by item (4) of Proposition \ref{prop:prop. di odot}. Thus,
\begin{align*}
\sum_{i\in J} E_i^\ast \odot_\mm E_i & = (V^\ast \odot_{\elle{\hh\otimes\ell^2}} V) [(I_\hh \otimes P_J) \odot_{\mm\votimes \C I_{\ell^2}} (I_\hh \otimes P_J)] \, \pi\\
  &   \preceq (V^\ast \odot_{\mm\votimes \C I_{\ell^2}} V) \, \pi \\
   & = \ff ,
\end{align*}
and the claim follows.

(2) If $\ee\in\cp{\mm,\lk}$, then by Theorem 4.6 in \cite{EvLew} (or also Theorem 4.3 p.~165 in \cite{AJP}) there exists a finite or countable set $I$ and a sequence $\{ E_i \}_{i\in I}$ of elements in $\elle{\kk, \hh}$ such that
\begin{equation}\label{eq:Kraus}
\ee(A) = \sum_{i\in I} E_i^\ast A E_i \quad \forall A\in\mm \, ,
\end{equation}
where the series converges in the weak*-topology and is independent of the ordering of $I$. In particular, the net of partial sums $\{\sum_{i\in J} E_i^\ast E_i\}_{J\in\Lam_I}$ is bounded by $\ee(I_\hh)$ in $\lk$, hence by item (1) the net $\{\sum_{i\in J} E_i^\ast \odot_\mm E_i\}_{J\in\Lam_I}$ converges in the sense of Proposition \ref{Teo. Berb. 2} to a unique $\ee'\in\cp{\mm,\lk}$. Comparing Eqs.~\eqref{conv. di En} and \eqref{eq:Kraus}, we see that $\ee = \ee'$.

The last statement follows considering the subnet $\{\sum_{i\in J_n} E_i^\ast \odot_\mm E_i\}_{n\in\N}$ of the net $\{\sum_{i\in J} E_i^\ast \odot_\mm E_i\}_{J\in\Lam_I}$, where $J_n := \{i_1,i_2,\ldots ,i_n\}$, and by uniqueness of the limit.
\end{proof}

If $\ee$ and $\{ E_i \}_{i\in I}$ are as in item (2) of the above theorem, then we say that the expression $\sum_{i\in I} E_i^\ast \odot_\mm E_i$ is the {\em Kraus form} of $\ee$. Note that $\ee$ is a quantum channel (unital map) iff $I_\kk$ is the least upper bound of the net $\{\sum_{i\in J} E_i^\ast E_i\}_{J\in\Lam_I}$ in $\lk$.

Kraus theorem and Theorem \ref{CB = span CP} show that every map $\ee\in\cb{\mm,\lk}$ can be decomposed into a (possibly infinite) sum of elementary maps $E_i\odot_\mm F_i$. Indeed, by Theorem \ref{CB = span CP} we can choose four elements $\ee_k\in\cp{\mm,\lk}$ ($k=0,1,2,3$) such that $\ee = \sum_{k=0}^3 i^k \ee_k$, and each $\ee_k$ can be written in the Kraus form $\ee_k = \sum_{i\in I_k} E^{(k)\ast}_i \odot_\mm E^{(k)}_i$. It is clear, however, that such decomposition is not unique even if $\ee\in\cp{\mm,\lk}$ itself.

\begin{definition}{Remark}\label{rem:opspa1}
{\rm (The space $\cb{\mm,\lk}$ as a dual operator space)}
It is interesting to note that, if $\mm = \lh$, the linear space $\cb{\lh,\lk}$ is a {\em dual operator space} in the sense of operator space theory (see e.g.~1.2.20 in \cite{BlM} for the definition of dual operator spaces, and 1.2.19 in \cite{BlM} or Proposition 14.7 in \cite{Paul} for the operator space structure of $\cb{\mm,\lk}$). Indeed, this is proven in Proposition 2.1 of \cite{BS}. In the same reference, it also is proven that the operator space $\cb{\lh,\lk}$ is completely isometrically isomorphic to the {\em weak*-Haagerup tensor product} $\elle{\hh,\kk}\whotimes\elle{\kk,\hh}$ (see \cite{BS} or 1.6.9 in \cite{BlM} for the definition). Moreover, still in the case $\mm = \lh$, Kraus theorem \ref{Teo. Stines.} above is a restatement of Theorem 2.2 in \cite{BS}, which asserts that each $\ee\in\cb{\lh,\lk}$ is the weak*-limit of a sequence of maps $\{\sum_{k=1}^n E^\ast_k \odot_{\lh} F_k\}_{n\in\N}$ for some sequence of operators $\{E_k\}_{k\in\N}$ and $\{F_k\}_{k\in\N}$ in $\elle{\kk,\hh}$.  However, for simplicity of presentations in the  following we will not phrase our result in the language of dual operator spaces, because most of the proofs are simpler (and more intuitive) in the language of operator algebras.
\end{definition}

\subsection{Tensor product of weak*-continuous CB maps}\label{sez. amplificaz.}

If $\ee : \elle{\hh_1} \frecc \elle{\kk_1}$ and $\ff : \elle{\hh_2} \frecc \elle{\kk_2}$ are linear {\em bounded} maps, their tensor product $\ee\otimes\ff$ is well defined as a linear map $\elle{\hh_1} \hotimes \elle{\hh_2} \frecc \elle{\kk_1} \hotimes \elle{\kk_2}$. However, unless $\hh_1$ and $\kk_1$, or alternatively $\hh_2$ and $\kk_2$, are finite dimensional, in general one can not extend $\ee\otimes\ff$ to a map $\elle{\hh_1\otimes\hh_2} \frecc \elle{\kk_1\otimes\kk_2}$. Weak*-continuous CB maps constitute an important exception to this obstruction, as it is shown by the following proposition (see also Proposition 5.13 p.~228 in \cite{Tak}).

\begin{theorem}{Proposition}\label{tensor product of maps}
Let $\mm_1$, $\mm_2$, $\nn_1$, $\nn_2$ be von Neumann algebras. Given two  maps $\ee \in \cb{\mm_1 , \nn_1}$ and $\ff \in \cb{\mm_2 , \nn_2}$, there is a unique map $\ee \otimes \ff \in \cb{\mm_1 \votimes \mm_2 , \nn_1 \votimes \nn_2}$ such that
\begin{equation}\label{def on products}
(\ee \otimes \ff)  (A \otimes B)  =  \ee(A)  \otimes \ff (B)  \quad \forall A \in \mm_1 , \, B \in \mm_2 \, .
\end{equation}
If $\ee$ and $\ff$ are CP, then $\ee \otimes \ff \in \cp{\mm_1 \votimes \mm_2 , \nn_1 \votimes \nn_2}$.
\end{theorem}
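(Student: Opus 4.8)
The plan is to reduce everything to normal CP maps and then to Kraus decompositions, building the tensor product from tensor products of Kraus operators. First I would dispose of uniqueness: any two maps in $\cb{\mm_1 \votimes \mm_2, \nn_1 \votimes \nn_2}$ satisfying \eqref{def on products} agree on the algebraic tensor product $\mm_1 \hotimes \mm_2$, which is weak*-dense in $\mm_1 \votimes \mm_2$, and since both maps are weak*-continuous they coincide. This also shows that any construction of $\ee \otimes \ff$ will automatically be independent of the auxiliary choices made below.

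For existence I would realize $\nn_1$ and $\nn_2$ concretely as $\nn_1 \subset \elle{\uu_1}$, $\nn_2 \subset \elle{\uu_2}$ and regard $\ee$, $\ff$ as elements of $\cb{\mm_1, \elle{\uu_1}}$ and $\cb{\mm_2, \elle{\uu_2}}$ by composing with the inclusions (Remark \ref{restr-CP}). By Theorem \ref{CB = span CP} write $\ee = \sum_{k=0}^3 i^k \ee_k$ and $\ff = \sum_{l=0}^3 i^l \ff_l$ with $\ee_k \in \cp{\mm_1, \elle{\uu_1}}$, $\ff_l \in \cp{\mm_2, \elle{\uu_2}}$, so it suffices to build $\ee_k \otimes \ff_l$ for CP maps and then set $\ee \otimes \ff := \sum_{k,l} i^{k+l}\, \ee_k \otimes \ff_l$; bilinearity of $\otimes$ then propagates \eqref{def on products} from the CP blocks to the general case. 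For a pair of normal CP maps I would fix Kraus forms $\ee_k = \sum_{i\in I} E_i^\ast \odot_{\mm_1} E_i$ and $\ff_l = \sum_{j\in I'} F_j^\ast \odot_{\mm_2} F_j$ (Theorem \ref{Teo. Stines.}) and define the candidate through the doubly indexed family $\{(E_i \otimes F_j)^\ast \odot_{\mm_1 \votimes \mm_2} (E_i \otimes F_j)\}_{(i,j) \in I \times I'}$. Since every partial sum is dominated, by positivity of $\otimes$, by $\big(\sum_{i} E_i^\ast E_i\big) \otimes \big(\sum_{j} F_j^\ast F_j\big)$ and hence is uniformly bounded, item (1) of Theorem \ref{Teo. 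Stines.} makes this net CP-increasing and CP-bounded, so by Proposition \ref{Teo. Berb. 2} it converges to a unique $\ee_k \otimes \ff_l \in \cp{\mm_1 \votimes \mm_2, \elle{\uu_1 \otimes \uu_2}}$.

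The key point is to verify $(\ee_k \otimes \ff_l)(A \otimes B) = \ee_k(A) \otimes \ff_l(B)$. By Eq.~\eqref{conv. di En}, the net $\big\{\sum_{(i,j) \in K}(E_i^\ast A E_i)\otimes (F_j^\ast B F_j)\big\}_{K}$ indexed by the finite subsets $K$ of $I \times I'$ converges weak* to $(\ee_k \otimes \ff_l)(A \otimes B)$ for all $A, B$. The rectangular subsets $J \times J'$ (with $J, J'$ finite) are cofinal among all finite subsets of $I \times I'$, so the corresponding subnet converges to the same limit; but on a rectangle the partial sum factorizes as $\big(\sum_{i \in J} E_i^\ast A E_i\big) \otimes \big(\sum_{j\in J'} F_j^\ast B F_j\big)$, whose two factors are uniformly bounded (by $\no{A}_\infty \sup_J \no{\sum_{i\in J}E_i^\ast E_i}_\infty$ and the analogous quantity) and converge weak* to $\ee_k(A)$ and $\ff_l(B)$. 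Since $(S,T) \mapsto S \otimes T$ is jointly weak*-continuous on bounded subsets of $\elle{\uu_1} \times \elle{\uu_2}$, the rectangular subnet converges to $\ee_k(A) \otimes \ff_l(B)$, giving \eqref{def on products}. I expect this interchange of weak*-limits — justified by the cofinality of rectangles together with joint weak*-continuity of $\otimes$ on bounded sets — to be the main obstacle; the rest is bookkeeping.

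Finally I would check the range. By \eqref{def on products}, $\ee \otimes \ff$ sends $\mm_1 \hotimes \mm_2$ into $\nn_1 \hotimes \nn_2 \subset \nn_1 \votimes \nn_2$; as $\ee \otimes \ff$ is weak*-continuous, $\mm_1 \hotimes \mm_2$ is weak*-dense in $\mm_1 \votimes \mm_2$, and $\nn_1 \votimes \nn_2$ is weak*-closed, the entire range lies in $\nn_1 \votimes \nn_2$. When $\ee$ and $\ff$ are themselves CP the construction uses a single Kraus block each, so $\ee \otimes \ff$ is a normal CP map into $\elle{\uu_1 \otimes \uu_2}$ with range in $\nn_1 \votimes \nn_2$; since positivity and normality are inherited by the subalgebra, this yields $\ee \otimes \ff \in \cp{\mm_1 \votimes \mm_2, \nn_1 \votimes \nn_2}$, completing the argument.
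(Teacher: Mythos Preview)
Your proposal is correct and follows essentially the same route as the paper: reduce to the CP case via Theorem \ref{CB = span CP}, build $\ee\otimes\ff$ from the tensor-product Kraus family $\{(E_i\otimes F_j)^\ast\odot_{\mm_1\votimes\mm_2}(E_i\otimes F_j)\}$, and then use weak*-density of $\mm_1\hotimes\mm_2$ for uniqueness and for landing in $\nn_1\votimes\nn_2$. The only difference is that you spell out the verification of \eqref{def on products} via cofinality of rectangular partial sums and joint weak*-continuity of $\otimes$ on bounded sets, whereas the paper simply asserts this identity as ``easy to check.''
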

\begin{proof}
Without loss of generality, let us assume $\mm_k\subset\elle{\hh_k}$ and $\nn_k\subset\elle{\kk_k}$ for $k=1,2$. First suppose that the maps $\ee$ and $\ff$ are CP, and have Kraus forms $\ee = \sum_{i\in I} E_i^\ast \odot_{\mm_1} E_i$ and $\ff  = \sum_{j \in  J}  F_j^\ast \odot_{\mm_2} F_j$. We can then define a map $\gg \in \cp{\mm_1 \votimes \mm_2 , \elle{\kk_1 \otimes \kk_2}}$, with Kraus form
$$
\gg  := \sum_{(i,j)\in I\times J}  (E_i \otimes F_j)^\ast  \odot_{\mm_1\votimes\mm_2} (E_i \otimes F_j) \, .
$$
It is easy to check that $\gg(A\otimes B) = \ee(A) \otimes\ff(B)$ for all $A \in \mm_1$, $B \in \mm_2$, hence $\gg$ extends the linear map $\ee \otimes \ff : \mm_1 \hotimes \mm_2 \frecc \nn_1 \hotimes \nn_2$ defined in Eq.~\eqref{def on products} to a weak*-continuous CP map from $\mm_1\votimes\mm_2$ into $\elle{\kk_1\otimes\kk_2}$. Such extension is unique by weak*-density of $\mm_1 \hotimes \mm_2$ in $\mm_1 \votimes \mm_2$. Moreover, since $\gg (\mm_1\hotimes\mm_2) \subset \nn_1\hotimes\nn_2$, we have $\gg (\mm_1\votimes\mm_2) \subset \nn_1\votimes\nn_2$. Hence, $\gg \in \cp{\mm_1 \votimes \mm_2 , \nn_1 \votimes \nn_2}$.

The claim of the theorem for generic elements $\ee \in \cb{\mm_1 , \nn_1}$ and $\ff \in \cb{\mm_2 , \nn_2}$ then follows by linearity and Theorem \ref{CB = span CP}.
\end{proof}

The map $\otimes : \cb{\mm_1 , \nn_1} \times\, \cb{\mm_2 , \nn_2} \frecc \cb{\mm_1\votimes\mm_2 , \nn_1\votimes\nn_2}$ defined in Proposition \ref{tensor product of maps} is clearly bilinear, and yelds the inclusion
$$
\cb{\mm_1 , \nn_1} \hotimes\, \cb{\mm_2 , \nn_2} \subset \cb{\mm_1\votimes\mm_2 , \nn_1\votimes\nn_2} \, .
$$

\begin{definition}{Remark}\label{rem:id dei CBn}
When $\mm_1 = M_m (\C)$ and $\nn_1 = M_n (\C)$, the product $\ee\otimes\ff$ defined in Proposition \ref{tensor product of maps} clearly coincides with the algebraic product that we already encountered in the definition of CB and CP maps (Definition \ref{def:CB-CP}). Moreover, the above inclusion actually becomes the equality
\begin{equation}\label{eq:eqoftens}
\cb{M_m (\C) , M_n (\C)} \hotimes\, \cb{\mm , \nn} = \cb{\mm^{(m)} , \nnn} \, .
\end{equation}
Indeed, choose two bases $\{f_i\}_{i=1}^{m^2}$ of $M_m (\C)$ and $\{g_j\}_{j=1}^{n^2}$ of $M_n (\C)$. For a map $\tilde{\ee}\in\cb{\mm^{(m)} , \nnn}$, define
$$
\tilde{\ee}_{ji} (A) := (g^\dag_j \otimes\ii_\nn) [\tilde{\ee} (f_i \otimes A)] \quad \forall A\in\mm
$$
(where the superscript $^\dag$ labels the dual basis). We then have $\tilde{\ee}_{ji} \in\cb{\mm , \nn}$, as $\tilde{\ee}_{ji}$ is obtained by composing and tensoring weak*-continuous CB maps (recall that the maps $g^\dag_j : M_n (\C) \frecc \C$ and $f_i : \C \frecc M_m (\C)$ are CB by Remark \ref{rem:CB=Lin in dim finita}). Since
$$
\tilde{\ee} = \sum_{i=1}^{m^2} \sum_{j=1}^{n^2} (g_j f^\dag_i) \otimes \tilde{\ee}_{ji} \, ,
$$
the equality of sets \eqref{eq:eqoftens} follows.
\end{definition}

It is easy to check that the tensor product $\otimes$ defined above preserves
\begin{itemize}
\item[-] composition of maps: $(\ee_1 \otimes \ff_1) (\ee_2 \otimes \ff_2) = \ee_1\ee_2  \otimes \ff_1 \ff_2$;
\item[-] ordering: if $\ee_1 \preceq \ee_2$ and $\ff_1 \preceq \ff_2$, then $\ee_1 \otimes \ff_1 \preceq \ee_2 \otimes \ff_2$;
\item[-] least upper bounds: if $\ee_\lam \Uparrow \ee$ and $\ff_\mu \Uparrow \ff $, then $\ee_\lam \otimes \ff_\mu \Uparrow  \ee\otimes \ff$ (where $(\lam_1 ,\mu_1) \leq (\lam_2 ,\mu_2)$ iff $\lam_1 \leq \lam_2$ and $\mu_1 \leq \mu_2$);
\item[-] quantum channels: if $\ee\in\cpn{\mm_1,\nn_1}$ and $\ff\in\cpn{\mm_2,\nn_2}$, then $\ee\otimes \ff \in \cpn{\mm_1 \votimes\mm_2 ,\nn_1 \votimes\nn_2}$.
\end{itemize}
Moreover, when tensoring the elementary maps $E_1\odot_{\mm_1} F_1$ and $E_2\odot_{\mm_2} F_2$, we clearly obtain
$$
(E_1\odot_{\mm_1} F_1) \otimes (E_2\odot_{\mm_2} F_2) = (E_1\otimes E_2)\odot_{\mm_1\votimes\mm_2} (F_1\otimes F_2) \, .
$$
In particular, we see that, if $\vv$ is another Hilbert space, then $(E\odot_\mm F)\otimes \ii_\vv = (E\otimes I_\vv)\odot_{\mm\votimes\elle{\vv}}(F\otimes I_\vv)$.

\section{Quantum supermaps}\label{sez. centr.}

In this section we introduce the central object in our study, i.e.~a particular set of linear maps $\SS : \cb{\mm_1,\nn_1} \frecc \cb{\mm_2,\nn_2}$ which mathematically describe the physically admissible transformations of quantum channels. These maps were introduced and studied in \cite{CDaP1,CDaP2} in the case where $\mm_i = \elle{\hh_i}$ and $\nn_i = \elle{\kk_i}$ are the full algebras of linear operators on finite dimensional Hilbert spaces $\hh_i$ and $\kk_i$. The main difference in the infinite dimensional case is the role of normality, which will be crucial for our dilation theorem (see Theorem \ref{teo. centr.} of the next section).

Let us start from some basic terminology:
\begin{definition}{Definition}
Suppose $\mm_1$, $\mm_2$, $\nn_1$, $\nn_2$ are von Neumann algebras. A linear map $\SS:  \cb{\mm_1,\nn_1} \frecc \cb{\mm_2,\nn_2}$ is
\begin{itemize}
\item[-] {\em positive} if $\SS (\ee) \succeq 0$ for all $\ee\succeq 0$;
\item[-] {\em completely positive (CP)} if the map 
$$\II_n \otimes \SS:  \cb{\mm_1^{(n)} , \nn_1^{(n)}} \to \cb{\mm_2^{(n)}, \nn_2^{(n)}}$$  is positive for every $n\in\N$, where $\II_n$ is the identity map on the linear space $\cb{M_n (\C),M_n (\C)}$;
\item[-] {\em normal} if $\SS(\ee_n) \Uparrow \SS(\ee)$ for all sequences $\{ \ee_n \}_{n\in\N}$ in $\cp{\mm_1,\nn_1}$ such that $\ee_n \Uparrow \ee$.
\end{itemize}
\end{definition}

Note that in the above definition of complete positivity we used the identification $\cb{\mmn , \nnn} = \cb{M_n (\C),M_n (\C)} \hotimes\, \cb{\mm , \nn}$ estabilished in Remark \ref{rem:id dei CBn}.

\begin{definition}{Remark}
Not every CP map $\SS:  \cb{\mm_1,\nn_1} \frecc \cb{\mm_2,\nn_2}$ is normal, even though, by definition, $\SS$ transforms normal maps into normal maps.  A simple example of non-normal CP map is the following: suppose $\mm_1 = \C$ and $\nn_1 = \lk$, with $\kk$ infinite dimensional. In this case, we have the natural identifications $\cb{\C,\lk} = \lk$ and $\cp{\C,\lk} = \lk_+$, and elements $\{\ee_n\}_{n\in\N}$ and $\ee$ in $\cp{\C,\lk}$ satisfy $\ee_n\Uparrow\ee$ iff $\ee_n (1)\uparrow\ee(1)$ in $\lk_+$. Consider a singular state $\rho :  \lk  \frecc \C$, i.e.~a positive functional such that $\rho (K)=0$ for every compact operator $K \in \lk$ and $\rho (I_\kk) =1$. Define the linear map $\SS: \cb{\C,\lk} \frecc \cb{\mm_2,\nn_2}$ given by $\SS (\ee)  =  \rho (\ee (1)) \ff$, where $\ff\in\cp{\mm_2,\nn_2}$ is fixed. Since $\rho$ is CP (see Proposition 3.8 in \cite{Paul}), it is easy to check that $\SS$ is CP. However, $\SS$ is not normal: consider for example a Hilbert basis $\{e_i\}_{i\in  \N}$ for $\kk$ and let $P_n$ be the orthogonal projection onto $\spanno{e_i \mid i \leq n}$. If $\ee_n,\ee\in\cp{\C,\lk}$ are given by $\ee_n (1) = P_n$ and $\ee (1) = I_\kk$, in this way one has $\ee_n \Uparrow  \ee$, whereas $\SS (\ee_n) =0$ and $\SS (\ee)  = \ff$. Hence, $\SS$ is not normal.     
\end{definition}

We are now in position to define quantum supermaps.
\begin{definition}{Definition}
A {\em quantum supermap} (or simply, \emph{supermap}) is a linear normal CP map $\SS : \cb{\mm_1,\nn_1} \frecc \cb{\mm_2,\nn_2}$.
\end{definition}

The convex set of quantum supermaps from $\cb{\mm_1,\nn_1}$ to $\cb{\mm_2,\nn_2}$ will be denoted by $\cpq{\mm_1 , \nn_1 ; \mm_2 , \nn_2}$. A partial order $\ll$ can be introduced in it as follows: given two maps $\SS_1, \SS_2 \in \cpq{\mm_1 , \nn_1 ; \mm_2 , \nn_2}$, we write $\SS_1 \ll \SS_2$ if  $\SS_2 - \SS_1 \in \cpq{\mm_1 , \nn_1 ; \mm_2 , \nn_2}$. 

We now specialize the definition of quantum supermaps to the following two main cases of interest.

\begin{definition}{Definition}
A quantum supermap $\SS \in \cpq{\mm_1 , \nn_1 ; \mm_2 , \nn_2}$ is
\begin{itemize}
\item[-] {\em deterministic} if it preserves the set of quantum channels, i.e.~if $\SS (\ee) \in \cpn{\mm_2, \nn_2}$ for all $\ee\in\cpn{\mm_1,\nn_1}$;
\item[-] {\em probabilistic} if a deterministic supermap $\mathsf T \in \cpq{\mm_1, \nn_1; \mm_2,\nn_2 }$ exists such that $\SS \ll \mathsf T$.
\end{itemize}
\end{definition}
Deterministic supermaps are a particular case of probabilistic supermaps. We will label by $\cpqn{\mm_1, \nn_1; \mm_2,\nn_2}$ the subset of deterministic supermaps in $\cpq{\mm_1, \nn_1; \mm_2,\nn_2}$.

Obviously, composing two quantum supermaps one still obtains a supermap: if $\SS_1  \in \cpq{\mm_1, \nn_1; \mm_2,\nn_2}$ and $\SS_2 \in \cpq{\mm_2, \nn_2; \mm_3,\nn_3}$, the composition map $\SS_2 \SS_1$ is an element in $\cpq{\mm_1, \nn_1; \mm_3,\nn_3}$. Similarly, the composition of two probabilistic [resp.~deterministic] supermaps  is a probabilistic [resp.~deterministic] supermap.

We now introduce two examples of supermaps which will play a very important role in the next section.

\begin{theorem}{Proposition}\label{compo}
{\rm (Concatenation)}
Given two maps $\aa \in \cp{\nn_1, \nn_2}$ and $\bb \in \cp{\mm_2, \mm_1}$, define the map
$$
\CC_{\aa,\bb} : \cb{\mm_1,\nn_1} \frecc \cb {\mm_2,\nn_2} \, , \qquad \CC_{\aa,\bb} (\ee) = \aa \ee \bb \, .
$$
Then $\CC_{\aa,\bb}\in\cpq{\mm_1, \nn_1; \mm_2, \nn_2}$. Moreover, if $\aa$ and $\bb$ are quantum channels, then $\CC_{\aa,\bb}$ is deterministic.
\end{theorem}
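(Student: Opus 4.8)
The plan is to verify the three defining properties of a quantum supermap—complete positivity, normality, and (weak*-continuous) complete boundedness of the values $\CC_{\aa,\bb}(\ee)$—directly from the corresponding properties of $\aa$ and $\bb$, exploiting the composition properties recorded just after Remark \ref{rem:CB=Lin in dim finita}. First I would check that $\CC_{\aa,\bb}$ is well defined, i.e.~that $\CC_{\aa,\bb}(\ee) = \aa\ee\bb \in \cb{\mm_2,\nn_2}$ whenever $\ee\in\cb{\mm_1,\nn_1}$. Since $\bb\in\cp{\mm_2,\mm_1}\subset\cb{\mm_2,\mm_1}$ and $\aa\in\cp{\nn_1,\nn_2}\subset\cb{\nn_1,\nn_2}$ by Theorem \ref{CB = span CP}, the composition $\aa\ee\bb$ of three weak*-continuous CB maps is again in $\cb{\mm_2,\nn_2}$ by the composition property; linearity of $\ee\mapsto\aa\ee\bb$ is immediate.

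For complete positivity, I would fix $n\in\N$ and identify $\II_n\otimes\CC_{\aa,\bb}$ acting on $\cb{\mm_1^{(n)},\nn_1^{(n)}}$ via Remark \ref{rem:id dei CBn}. The key observation is that, under this identification, $\II_n\otimes\CC_{\aa,\bb}$ is exactly the concatenation map $\CC_{\ii_n\otimes\aa,\,\ii_n\otimes\bb}$ sending $\tilde\ee\mapsto(\ii_n\otimes\aa)\,\tilde\ee\,(\ii_n\otimes\bb)$; this follows because the tensor product of maps preserves composition, as recorded in Section \ref{sez. amplificaz.}. Since $\aa,\bb$ are CP, the amplifications $\ii_n\otimes\aa$ and $\ii_n\otimes\bb$ are positive, so for any $\tilde\ee\succeq 0$ the map $(\ii_n\otimes\aa)\,\tilde\ee\,(\ii_n\otimes\bb)$ is a positive map (composition of a positive map with a CP map on each side preserves positivity). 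Hence $\II_n\otimes\CC_{\aa,\bb}$ is positive for every $n$, which is complete positivity of $\CC_{\aa,\bb}$; taking $n=1$ gives plain positivity as a special case.

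For normality, suppose $\{\ee_k\}_{k\in\N}$ and $\ee$ in $\cp{\mm_1,\nn_1}$ satisfy $\ee_k\Uparrow\ee$. I must show $\CC_{\aa,\bb}(\ee_k)\Uparrow\CC_{\aa,\bb}(\ee)$. By Proposition \ref{Teo. Berb. 2}, $\ee_k\Uparrow\ee$ means $\ee_k\preceq\ee_{k+1}$, the net is CP-bounded, and $\wklim_k\ee_k(A)=\ee(A)$ for all $A$. Pre- and post-composition with CP maps preserves the order $\preceq$ and CP-boundedness, so $\{\aa\ee_k\bb\}$ is CP-increasing and CP-bounded by $\aa\ee\bb$; it remains to identify its least upper bound as $\aa\ee\bb$. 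The cleanest route is to evaluate on $A\in(\mm_2)_+$: writing $\bb(A)\in(\mm_1)_+$ and using normality (weak*-continuity) of $\ee\mapsto\ee(\bb(A))$ together with weak*-continuity of $\aa$, one gets $\wklim_k \aa(\ee_k(\bb(A)))=\aa(\ee(\bb(A)))$, and by the uniqueness clause in Proposition \ref{Teo. Berb. 2} the least upper bound of $\{\CC_{\aa,\bb}(\ee_k)\}$ is precisely $\CC_{\aa,\bb}(\ee)$. The main obstacle is this last step: I expect the care to lie in justifying that the weak*-limit of $\ee_k(\bb(A))$ is preserved under application of $\aa$, which is where normality (equivalently weak*-continuity) of $\aa$ is genuinely used, and in confirming the argument matches the definition of $\Uparrow$ rather than merely pointwise convergence. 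Finally, for the deterministic claim, if $\aa,\bb$ are quantum channels then $\aa(I_{\nn_1})=I_{\nn_2}$ and $\bb(I_{\mm_2})=I_{\mm_1}$; for $\ee\in\cpn{\mm_1,\nn_1}$ we compute $\CC_{\aa,\bb}(\ee)(I_{\mm_2})=\aa(\ee(\bb(I_{\mm_2})))=\aa(\ee(I_{\mm_1}))=\aa(I_{\nn_1})=I_{\nn_2}$, and $\CC_{\aa,\bb}(\ee)\in\cpn{\mm_2,\nn_2}$ follows since the values already lie in $\cp{\mm_2,\nn_2}$.
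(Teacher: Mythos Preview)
Your proposal is correct and follows essentially the same route as the paper: you identify $(\II_n\otimes\CC_{\aa,\bb})(\tilde\ee)=(\ii_n\otimes\aa)\,\tilde\ee\,(\ii_n\otimes\bb)$ for complete positivity, use CP-monotonicity together with weak*-continuity of $\aa$ and the uniqueness clause of Proposition~\ref{Teo. Berb. 2} for normality, and check unitality directly for the deterministic claim. The only cosmetic point is that in the CP step you should say the composite is a \emph{CP} map (i.e.~$\succeq 0$), not merely positive, but this is exactly what composition of CP maps gives, so the argument stands.
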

\begin{proof}
The map $\CC_{\aa,\bb}$ is normal: if $\ee_n  \Uparrow \ee$, then the sequence $\{\aa \ee_n\bb\}_{n \in \mathbb N}$ is CP-increasing and CP-bounded by $\aa \ee \bb$. Using Proposition \ref{Teo. Berb. 2}, we have $\wklim_n \aa \ee_n\bb (A) = \aa \ee \bb (A)$ for all $A\in\mm_2$, hence $\aa \ee_n\bb \Uparrow \aa \ee\bb$, i.e.~$\CC_{\aa,\bb}$ is normal.    
To prove complete positivity, note that for every map $\tilde \ee \in \cb{\mmn,\nnn}$ one has $(\II_n \otimes \CC_{\aa,\bb}) (\tilde \ee)  =  (\ii_n \otimes \aa)  \tilde \ee (\ii_n \otimes \bb)$. Therefore, if $\tilde \ee \succeq 0$, then also $(\II_n \otimes \CC_{\aa,\bb})(\tilde \ee)\succeq 0$, hence $\II_n \otimes \CC_{\aa,\bb}$ is positive and $\CC_{\aa,\bb}$ is CP. Finally, if $\aa$ and $\bb$ are quantum channels, then $\aa\ee\bb\in\cpn{\mm_2,\nn_2}$ for all $\ee\in\cpn{\mm_1,\nn_1}$, i.e.~$\CC_{\aa,\bb}$ is deterministic.
\end{proof}

\begin{theorem}{Proposition}\label{ampli}
{\rm (Amplification)}
Suppose $\vv$ is a Hilbert space, and define the amplification supermap
$$
\PI_\vv : \cb{\mm,\nn} \frecc \cb{\mm\votimes \lv , \nn\votimes \lv} \, , \qquad \PI_\vv (\ee) = \ee \otimes \ii_\vv \, ,
$$
where we recall that $\ii_\vv := \ii_{\lv}$ (cf.~Proposition \ref{tensor product of maps} for the definition of tensor product). Then the map $\PI_\vv$ is a deterministic supermap, that is, $\PI_\vv \in \cpqn{\mm,\nn ; \mm\votimes \lv, \nn\votimes \lv}$.
\end{theorem}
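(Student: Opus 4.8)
The plan is to verify in turn the four conditions that make $\PI_\vv$ a deterministic supermap: that it is well-defined and linear as a map $\cb{\mm,\nn}\frecc\cb{\mm\votimes\lv,\nn\votimes\lv}$, that it is normal, that it is CP, and finally that it preserves channels. Three of these are immediate from the compatibility properties of the tensor product $\otimes$ collected at the end of Section \ref{sez. amplificaz.}, so the only real work is complete positivity.

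Well-definedness is exactly Proposition \ref{tensor product of maps}, which guarantees $\ee\otimes\ii_\vv\in\cb{\mm\votimes\lv,\nn\votimes\lv}$ for every $\ee\in\cb{\mm,\nn}$, while linearity of $\PI_\vv$ is just bilinearity of $\otimes$ in its first argument. For the deterministic property, I would note that $\ii_\vv=\ii_{\lv}$ is the identity $\ast$-homomorphism and hence lies in $\cpn{\lv,\lv}$; the channel-preservation property of $\otimes$ then gives $\ee\otimes\ii_\vv\in\cpn{\mm\votimes\lv,\nn\votimes\lv}$ whenever $\ee\in\cpn{\mm,\nn}$, i.e.~$\PI_\vv(\ee)$ is a channel whenever $\ee$ is. Normality is equally direct: given a sequence $\{\ee_n\}_{n\in\N}$ in $\cp{\mm,\nn}$ with $\ee_n\Uparrow\ee$, I would apply the least-upper-bound-preservation property of $\otimes$ with the constant net $\ii_\vv\Uparrow\ii_\vv$, obtaining $\PI_\vv(\ee_n)=\ee_n\otimes\ii_\vv\Uparrow\ee\otimes\ii_\vv=\PI_\vv(\ee)$, which is precisely normality of $\PI_\vv$.

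The main step is complete positivity, and here the key is to identify the amplified map $\II_n\otimes\PI_\vv$ explicitly. Using associativity of the von Neumann tensor product one has $(\mm\votimes\lv)^{(n)}=\mmn\votimes\lv$ and $(\nn\votimes\lv)^{(n)}=\nnn\votimes\lv$, so $\II_n\otimes\PI_\vv$ is a map $\cb{\mmn,\nnn}\frecc\cb{\mmn\votimes\lv,\nnn\votimes\lv}$. I would then show that it is again an amplification, namely
\begin{equation*}
(\II_n\otimes\PI_\vv)(\tilde\ee)=\tilde\ee\otimes\ii_\vv\qquad\forall\,\tilde\ee\in\cb{\mmn,\nnn}.
\end{equation*}
This is checked on elementary tensors $\tilde\ee=\gg\otimes\ee$ with $\gg\in\cb{M_n(\C),M_n(\C)}$ and $\ee\in\cb{\mm,\nn}$, where both sides act on $X\otimes A\otimes T$ as $\gg(X)\otimes\ee(A)\otimes\ii_\vv(T)$ after regrouping the tensor factors, and then extended by linearity using the identification $\cb{\mmn,\nnn}=\cb{M_n(\C),M_n(\C)}\hotimes\cb{\mm,\nn}$ of Remark \ref{rem:id dei CBn}.

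Once this identity is in hand, positivity is immediate: if $\tilde\ee\succeq 0$, i.e.~$\tilde\ee\in\cp{\mmn,\nnn}$, then, since $\ii_\vv$ is CP, Proposition \ref{tensor product of maps} yields $\tilde\ee\otimes\ii_\vv\in\cp{\mmn\votimes\lv,\nnn\votimes\lv}$, so $(\II_n\otimes\PI_\vv)(\tilde\ee)\succeq 0$. Hence $\II_n\otimes\PI_\vv$ is positive for every $n$, and $\PI_\vv$ is CP. The one point requiring care---and the only genuine obstacle---is the bookkeeping of tensor factors in the displayed identity, i.e.~checking that amplifying by $\ii_n$ and by $\ii_\vv$ really commute under the canonical identifications of Remark \ref{rem:id dei CBn}; everything else reduces to results already established in Sections \ref{sez. succ. cresc.} and \ref{sez. amplificaz.}.
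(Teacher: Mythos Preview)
Your proof is correct and follows essentially the same approach as the paper's: both verify channel preservation, normality, and complete positivity via the identity $(\II_n\otimes\PI_\vv)(\tilde\ee)=\tilde\ee\otimes\ii_\vv$. The only minor difference is that for normality you invoke the least-upper-bound preservation property of $\otimes$ stated (without proof) at the end of Section~\ref{sez. amplificaz.}, whereas the paper gives a direct self-contained argument in this special case---showing that $\ee_n\otimes\ii_\vv$ is CP-increasing and CP-bounded by $\ee\otimes\ii_\vv$, and identifying the limit on elementary tensors $A\otimes B$ via Proposition~\ref{tensor product of maps}; conversely, your treatment of complete positivity is slightly more careful than the paper's, which simply asserts the key identity without the elementary-tensor check.
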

\begin{proof}
If  $\ee_n\Uparrow  \ee$, then the sequence $\{\ee_n \otimes \ii_\vv\}_{n \in \mathbb N}$ is CP-increasing and CP-bounded by $\ee \otimes \ii_\vv$, hence $\ee_n \otimes \ii_\vv \Uparrow \tilde{\aa}$ for some $\tilde{\aa}\in\cp{\mm\votimes\lv,\nn\votimes\lv}$ by Proposition \ref{Teo. Berb. 2}. We have
\begin{align*}
\tilde{\aa} (A\otimes B) & = \wklim_n (\ee_n \otimes \ii_\vv) (A\otimes B) = \wklim_n \ee_n (A) \otimes B = \ee (A) \otimes B \\
& = (\ee \otimes \ii_\vv) (A\otimes B)
\end{align*}
for all $A\in\mm$ and $B\in\lv$, which implies $\tilde{\aa} = \ee \otimes \ii_\vv$ by Proposition \ref{tensor product of maps}. Thus, $\ee_n\otimes\ii_\vv \Uparrow \ee\otimes\ii_\vv$, i.e.~$\Pi_\vv$ is normal. Clearly, if $\ee$ is unital, so is $\Pi_\vv (\ee) = \ee \otimes \ii_\vv$. To prove complete positivity, note that for every $\tilde \ee\in \cp{\mmn , \nnn}$ we have $ (\II_n \otimes  \Pi_\vv)   (\tilde \ee) = \tilde \ee \otimes \ii_\vv \succeq 0$, hence $\II_n \otimes  \Pi_\vv$ is positive and $\Pi_\vv$ is CP.
\end{proof}

The main result in the next section is that every deterministic supermap in the set $\cpqn{\mm_1,\elle{\kk_1} ; \mm_2,\elle{\kk_2}}$ is the composition of an amplification followed by a concatenation.

\section{Dilation of deterministic supermaps}\label{sez. stine}

This section contains the central result of our paper, namely the following dilation theorem for deterministic supermaps. 

\begin{theorem}{Theorem}\label{teo. centr.}
{\rm (Dilation of deterministic supermaps)}
Suppose $\mm_1$, $\mm_2$ are von Neumann algebras. A linear map $\SS : \cb{\mm_1 , \elle{\kk_1}} \frecc \cb{\mm_2 , \elle{\kk_2}}$ is a deterministic supermap if and only if there exists a triple $(\vv ,\, V ,\, \ff)$, where
\begin{itemize}
\item[-] $\vv$ is a separable Hilbert space
\item[-] $V:\kk_2\frecc \kk_1 \otimes \vv$ is an isometry
\item[-] $\ff$ is a quantum channel in $\cpn{\mm_2,\mm_1\votimes\lv}$
\end{itemize}
such that
\begin{equation}\label{eq. centr. 2}
[\SS (\ee)](A) = V^\ast \left[ (\ee\otimes \ii_{\vv}) \ff(A) \right] V \quad \forall \ee\in\cb{\mm_1 , \elle{\kk_1}} \, , \, \forall A\in\mm_2 \, .
\end{equation}
The triple $(\vv ,\, V ,\, \ff)$ can always be chosen in a way that
\begin{equation}\label{dens. in hhat1}
\vv = \spannochiuso{(u^\ast\otimes I_{\vv}) Vv \mid u\in\kk_1 \, , \, v \in\kk_2} \, .
\end{equation}
\end{theorem}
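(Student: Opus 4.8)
\emph{Sufficiency.} Given a triple $(\vv,V,\ff)$ as in the statement, the plan is to recognize the right-hand side of \eqref{eq. centr. 2} as a composition of the two supermaps introduced in Section~\ref{sez. centr.}. The amplification $\PI_\vv(\ee)=\ee\otimes\ii_\vv$ is a deterministic supermap by Proposition~\ref{ampli}. Since $V$ is an isometry, the map $V^\ast\odot_{\elle{\kk_1\otimes\vv}}V$ is a quantum channel in $\cpn{\elle{\kk_1\otimes\vv},\elle{\kk_2}}$ by Proposition~\ref{prop:prop. di odot}(2) together with $V^\ast V=I_{\kk_2}$; as $\ff$ is a channel by hypothesis, Proposition~\ref{compo} shows that $\CC_{V^\ast\odot V,\,\ff}$ is deterministic. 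Because \eqref{eq. centr. 2} is precisely the identity $\SS=\CC_{V^\ast\odot V,\,\ff}\circ\PI_\vv$ (using $\elle{\kk_1}\votimes\lv=\elle{\kk_1\otimes\vv}$), the map $\SS$ is a deterministic supermap, being a composition of such.

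\emph{Necessity.} Fix representations $\mm_i\subset\elle{\hh_i}$. By Kraus's Theorem~\ref{Teo. Stines.} and Theorem~\ref{CB = span CP}, the space $\cb{\mm_1,\elle{\kk_1}}$ is linearly spanned by the elementary maps $E^\ast\odot_{\mm_1}F$ with $E,F\in\elle{\kk_1,\hh_1}$, so $\SS$ is determined by the operators $[\SS(E^\ast\odot_{\mm_1}F)](A)$. I would therefore run a GNS/Stinespring construction: on the algebraic tensor product $\mm_2\hotimes\elle{\kk_1,\hh_1}\hotimes\kk_2$ define
\[
\scal{A\otimes E\otimes u}{A'\otimes F\otimes v}:=\scal{u}{[\SS(E^\ast\odot_{\mm_1}F)](A^\ast A')\,v}_{\kk_2} .
\]
For $\zeta=\sum_k A_k\otimes E_k\otimes v_k$ one rewrites $\scal{\zeta}{\zeta}=\scal{\mathbf v}{[(\II_n\otimes\SS)(\mathbf E^\ast\odot\mathbf E)](\mathbf A^\ast\mathbf A)\,\mathbf v}$, where $\mathbf v=\sum_k\delta_k\otimes v_k$, the Gram operator $\mathbf A^\ast\mathbf A=[A_k^\ast A_l]$ lies in $(\mm_2^{(n)})_+$, and $\mathbf E^\ast\odot\mathbf E$ (for $\mathbf E=\bigoplus_k E_k$) is the CP element of $\cp{\mm_1^{(n)},\elle{\kk_1}^{(n)}}$ whose $(k,l)$-block is $E_k^\ast\odot_{\mm_1}E_l$; positivity of the form is then immediate from the complete positivity of $\SS$. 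Passing to the quotient by the null space and completing produces a Hilbert space $\hh_\pi$ carrying the representation $\pi(B)[A\otimes E\otimes v]=[BA\otimes E\otimes v]$ of $\mm_2$, and it is here that I would use the normality of $\SS$ to ensure $\pi$ is normal (without it $\pi$ could be singular, as the counterexample in Section~\ref{sez. centr.} illustrates).

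\emph{The main obstacle.} The decisive step is to recover the tensor factorization $\hh_\pi\cong\hh_1\otimes\vv$ that pins the range of $\ff$ down to $\mm_1\votimes\lv$. Using Proposition~\ref{prop:prop. di odot}(3), which yields $E^\ast c\odot_{\mm_1}F=E^\ast\odot_{\mm_1}cF$ for $c\in\mm_1'$, I would introduce a second representation $\rho(c)[A\otimes E\otimes v]:=[A\otimes cE\otimes v]$ of the commutant $\mm_1'$; item~(3) is exactly what makes $\rho(c)^\ast=\rho(c^\ast)$, and $\rho$ commutes with $\pi(\mm_2)$. The hard part is to prove that $\rho$ is unitarily equivalent to the amplified identity representation $c\mapsto c\otimes I_\vv$ of $\mm_1'$ on $\hh_1\otimes\vv$; this is where the freedom of $E$ to range over the full space $\elle{\kk_1,\hh_1}$ on which $\mm_1'$ acts is essential, and it is the feature that distinguishes this theorem from the generic operator-space dilations mentioned in the Introduction. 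Granting the equivalence, the commutation $[\rho(c),\pi(B)]=0$ forces $\pi(\mm_2)\subset\rho(\mm_1')'=\mm_1\votimes\lv$.

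\emph{Conclusion.} Once $\hh_\pi\cong\hh_1\otimes\vv$ with $\rho(c)=c\otimes I_\vv$, the covariance of $\rho$ gives $[I\otimes\eta u^\ast\otimes v]=\eta\otimes\theta(u,v)$ for a map $\theta:\kk_1\times\kk_2\frecc\vv$. I would set $\vv:=\spannochiuso{\theta(u,v)\mid u\in\kk_1,\,v\in\kk_2}$ — which is separable and built to give the density condition \eqref{dens. in hhat1} — define the bounded operator $V:\kk_2\frecc\kk_1\otimes\vv$ by $(u^\ast\otimes I_\vv)Vv=\theta(u,v)$, and put $\ff(A):=W^\ast\pi(A)W$ with $W=I_{\hh_1}\otimes\iota$ the inclusion of $\hh_1\otimes\vv$ into $\hh_\pi$, so that $\ff\in\cpn{\mm_2,\mm_1\votimes\lv}$. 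Evaluating the GNS inner product then gives $[\SS(E^\ast\odot_{\mm_1}F)](A)=V^\ast(E^\ast\otimes I_\vv)\ff(A)(F\otimes I_\vv)V$, which is \eqref{eq. centr. 2} on elementary maps; linearity together with the Kraus limit of Theorem~\ref{Teo. Stines.} extends it to every $\ee\in\cb{\mm_1,\elle{\kk_1}}$. Finally, substituting quantum channels $\ee\in\cpn{\mm_1,\elle{\kk_1}}$ into \eqref{eq. centr. 2} and invoking that $\SS$ is deterministic, hence $[\SS(\ee)](I_{\mm_2})=I_{\kk_2}$, I expect to deduce first $V^\ast V=I_{\kk_2}$ (so $V$ is an isometry) and then $\ff(I_{\mm_2})=I$ (so $\ff$ is a channel), which completes the triple.
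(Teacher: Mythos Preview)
Your sufficiency argument is fine and matches the paper. The necessity argument, however, has a genuine gap at precisely the point you flag as ``the hard part'': you give no mechanism for showing that your representation $\rho$ of the \emph{commutant} $\mm_1'$ on $\hh_\pi$ is unitarily equivalent to an amplification $c\mapsto c\otimes I_\vv$. For a general von Neumann algebra $\mm_1'$ this is simply false---normal representations of $\mm_1'$ are classified by multiplicity data that need not be uniform---so without further input the step ``$\rho(\mm_1')'=\mm_1\votimes\lv$'' is unjustified. The freedom of $E$ to range over $\elle{\kk_1,\hh_1}$ does not by itself force an amplification; what is needed is a normal representation of the \emph{full type~I factor} $\elle{\hh_1}$, since only those are automatically amplifications.

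This is exactly where the paper's argument diverges from yours, and where the deterministic hypothesis enters essentially rather than as an afterthought. The paper runs \emph{two} GNS constructions, not one. On the smaller space $\elle{\hh_1}\hotimes\kk_2$ it defines the form using $[\SS(E^\ast\odot_{\mm_1}F)](I_{\mm_2})$ (evaluation at the identity only), and it is Lemma~\ref{lemma agg.}---a consequence of determinism via Lemma~\ref{to lemma agg.}---that yields $[\SS(E^\ast\odot_{\mm_1}BF)](I_{\mm_2})=[\SS(E^\ast B\odot_{\mm_1}F)](I_{\mm_2})$ for \emph{all} $B\in\elle{\hh_1}$, not just $B\in\mm_1'$. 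This makes $\pi_1(B)[E\otimes v]:=[BE\otimes v]$ a well-defined normal $\ast$-representation of the full factor $\elle{\hh_1}$, hence an amplification $\hh_1\otimes\uu$. Your Proposition~\ref{prop:prop. di odot}(3) identity, valid only for $B\in\mm_1'$, appears later in the paper in the role you suggest---showing that $\gg(A)$ lands in $\mm_1\votimes\lu$---but only \emph{after} the spatial factorization has been obtained. Your proposal reverses this order and thereby loses the crucial leverage of determinism. (Note also that the paper first reduces to the case $\hh_1\cong\kk_1$ infinite-dimensional by tensoring with $\ell^2$, which is what makes Lemma~\ref{lemma agg.} applicable; your sketch omits this reduction.) In short: using determinism only at the end to verify $V^\ast V=I_{\kk_2}$ and $\ff(I_{\mm_2})=I$ is not enough---it must be used inside the construction to upgrade $\rho$ from $\mm_1'$ to $\elle{\hh_1}$.
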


In Eq.~\eqref{dens. in hhat1}, the adjoint $u^\ast$ of $u\in\kk_1$ is the linear functional $u^\ast : w\mapsto \scal{u}{w}$ on $\kk_1$.

\begin{definition}{Definition}\label{def:min}
If a Hilbert space $\vv$, an isometry $V:\kk_2\frecc \kk_1 \otimes \vv$, and a quantum channel $\ff\in\cpn{\mm_2,\mm_1\votimes\lv}$ are such that Eq.~\eqref{eq. centr. 2} holds, then we say that the triple $(\vv ,\, V ,\, \ff)$ is a {\em dilation} of the supermap $\SS$.  
If also Eq.~\eqref{dens. in hhat1} holds, then we say that the dilation $(\vv ,\, V ,\, \ff)$ is \emph{minimal}.
\end{definition}

The importance of the minimality property is highlighted by the following fact.
\begin{theorem}{Proposition}\label{prop: minimality}
Let  $(\vv ,\, V ,\, \ff)$  and $(\vv' ,\, V' ,\, \ff')$ be two  dilations of the deterministic supermap $\SS\in\cpqn{\mm_1, \elle{\kk_1} ; \mm_2, \elle{\kk_2}}$. If $(\vv ,\, V ,\, \ff)$ is minimal,  then there exists a unique isometry $W : \vv \frecc \vv^\prime$ such that $V^\prime = (I_{\kk_1} \otimes W) V$. Moreover, the relation $\ff(A) = (I_{\mm_1} \otimes W^\ast) \ff'(A) (I_{\mm_1} \otimes W)$ holds for all $A\in\mm_2$.
\end{theorem}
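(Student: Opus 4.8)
The plan is to build the intertwining isometry $W$ explicitly on the dense subspace singled out by the minimality condition \eqref{dens. in hhat1}, and then to read off the relation between $\ff$ and $\ff'$ from the two dilation identities. Throughout I realize $\mm_1$ concretely as $\mm_1\subseteq\elle{\hh_1}$ and abbreviate $\xi_{u,v}:=(u^\ast\otimes I_\vv)Vv$ and $\xi'_{u,v}:=(u^\ast\otimes I_{\vv'})V'v$ for $u\in\kk_1$, $v\in\kk_2$; by minimality $\spanno{\xi_{u,v}}$ is dense in $\vv$. The first move is to express the Gram data of the vectors $\xi_{u,v}$ purely in terms of $\SS$. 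I fix a unit vector $\psi\in\hh_1$ and, for $u_1,u_2\in\kk_1$, consider the elementary map $\ee_{u_1,u_2}:=(u_1\psi^\ast)\odot_{\mm_1}(\psi u_2^\ast)\in\cb{\mm_1,\elle{\kk_1}}$, which satisfies $\ee_{u_1,u_2}(I_{\mm_1})=u_1 u_2^\ast$. Taking $A=I_{\mm_2}$ in \eqref{eq. centr. 2} and using that $\ff$ is a quantum channel, so $\ff(I_{\mm_2})=I_{\hh_1}\otimes I_\vv$ and $(\ee\otimes\ii_\vv)(I_{\hh_1}\otimes I_\vv)=\ee(I_{\mm_1})\otimes I_\vv$, a short computation gives
$$
\scal{v_1}{[\SS(\ee_{u_1,u_2})](I_{\mm_2})v_2}=\scal{Vv_1}{(u_1 u_2^\ast\otimes I_\vv)Vv_2}=\scal{\xi_{u_1,v_1}}{\xi_{u_2,v_2}} .
$$
Since the left-hand side depends only on $\SS$ (and the fixed $\psi$), the identical formula holds for the primed dilation, whence $\scal{\xi_{u_1,v_1}}{\xi_{u_2,v_2}}=\scal{\xi'_{u_1,v_1}}{\xi'_{u_2,v_2}}$ for all $u_i,v_i$.

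This equality of Gram data lets me define $W\xi_{u,v}:=\xi'_{u,v}$ and extend it by linearity: it is well defined and isometric on $\spanno{\xi_{u,v}}$, hence extends to an isometry $W:\vv\frecc\vv'$ by density. Using the simple-tensor identity $W(u^\ast\otimes I_\vv)=(u^\ast\otimes I_{\vv'})(I_{\kk_1}\otimes W)$, I obtain $(u^\ast\otimes I_{\vv'})V'v=(u^\ast\otimes I_{\vv'})(I_{\kk_1}\otimes W)Vv$ for every $u,v$; since the family $\{u^\ast\otimes I_{\vv'}\}_{u\in\kk_1}$ jointly separates the vectors of $\kk_1\otimes\vv'$, this yields $V'=(I_{\kk_1}\otimes W)V$. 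Uniqueness is immediate, as any $W'$ with the same property must agree with $W$ on each $\xi_{u,v}$, hence on the dense span.

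For the final identity I substitute $V'=(I_{\kk_1}\otimes W)V$ into the primed version of \eqref{eq. centr. 2}. Setting $\tilde\ff(A):=(I_{\mm_1}\otimes W^\ast)\ff'(A)(I_{\mm_1}\otimes W)\in\mm_1\votimes\lv$, the key step is the intertwining identity
$$
(I_{\kk_1}\otimes W^\ast)\,(\ee\otimes\ii_{\vv'})(Y)\,(I_{\kk_1}\otimes W)=(\ee\otimes\ii_\vv)\big((I_{\mm_1}\otimes W^\ast)Y(I_{\mm_1}\otimes W)\big)
$$
for $Y\in\mm_1\votimes\lv'$, which I verify on elementary tensors $Y=B\otimes X$ and extend by weak*-continuity and Theorem \ref{CB = span CP}. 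It gives $[\SS(\ee)](A)=V^\ast[(\ee\otimes\ii_\vv)\tilde\ff(A)]V$, so that $V^\ast[(\ee\otimes\ii_\vv)D(A)]V=0$ with $D(A):=\ff(A)-\tilde\ff(A)$. Finally I feed in the elementary maps $(u_2\psi_2^\ast)\odot_{\mm_1}(\psi_1 u_1^\ast)$; using $(E\odot_{\mm_1}F)\otimes\ii_\vv=(E\otimes I_\vv)\odot_{\mm_1\votimes\lv}(F\otimes I_\vv)$ the vanishing becomes $\scal{\psi_2\otimes\xi_{u_2,v_2}}{D(A)(\psi_1\otimes\xi_{u_1,v_1})}=0$, and since the $\psi_i$ range over $\hh_1$ and the $\xi_{u_i,v_i}$ over a dense subset of $\vv$, the vectors $\psi\otimes\xi_{u,v}$ are total in $\hh_1\otimes\vv$; hence $D(A)=0$, i.e.\ $\ff(A)=\tilde\ff(A)$, as claimed.

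I expect the main obstacle to lie in the two places where minimality \eqref{dens. in hhat1} must be converted into a genuine density/totality statement. The first is arranging the Gram computation so that its right-hand side is manifestly independent of the chosen dilation (which is exactly what forces $W$ to be well defined and isometric). The second, and more delicate, is killing $D(A)$: this requires combining the factorization of the amplified elementary maps with the totality of $\{\psi\otimes\xi_{u,v}\}$ in $\hh_1\otimes\vv$, rather than mere density of $\{\xi_{u,v}\}$ in $\vv$. The weak*-continuity extension of the intertwining identity is routine but should be stated carefully, since both sides must be checked to land in $\mm_1\votimes\lv$ before $\ee\otimes\ii_\vv$ is applied.
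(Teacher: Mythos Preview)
Your proof is correct and follows essentially the same route as the paper: you build $W$ on the dense span $\spanno{\xi_{u,v}}$ by matching Gram data via the elementary maps $\ee_{u_1,u_2}=(u_1\psi^\ast)\odot_{\mm_1}(\psi u_2^\ast)$ evaluated at $I_{\mm_2}$, exactly as the paper does with its maps $\ee_{u,w}$, and you deduce $V'=(I_{\kk_1}\otimes W)V$ and uniqueness in the same way. The only cosmetic difference is in the verification of the $\ff$ identity: the paper computes the matrix elements $\scal{(E\otimes I_\vv)Vv}{\ff(A)(F\otimes I_\vv)Vw}$ directly for arbitrary $E,F\in\elle{\kk_1,\hh_1}$ and invokes Lemma~\ref{lemma:span} for density in $\hh_1\otimes\vv$, whereas you first isolate the intertwining identity for $(\ee\otimes\ii_\vv)$ and then test with the rank-one choices $E=\psi_2 u_2^\ast$, $F=\psi_1 u_1^\ast$, which produces the total set $\{\psi\otimes\xi_{u,v}\}$ without needing Lemma~\ref{lemma:span}. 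Both arguments are equivalent in substance.
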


The proofs of Theorem \ref{teo. centr.} and Proposition \ref{prop: minimality} will be given at the end of this section.   
\begin{definition}{Remark}
In Proposition \ref{prop: minimality}, if also the dilation $(\vv' ,\, V' ,\, \ff')$ is minimal, then the isometry $W$ is actually unitary. Indeed, let $W' : \vv' \frecc \vv$ be the isometry such that $V = (I_{\kk_1} \otimes W') V'$. We have $V = (I_{\kk_1} \otimes W') (I_{\kk_1} \otimes W) V = (I_{\kk_1} \otimes W'W) V$. Uniqueness then implies $W'W=I_{\vv}$, hence $W$ is unitary.
\end{definition}

\begin{definition}{Remark}
As claimed at the end of the previous section, Theorem \ref{teo. centr.} shows that every deterministic supermap $\SS\in\cpqn{\mm_1,\elle{\kk_1} ; \mm_2,\elle{\kk_2}}$ is the composition of an amplification followed by a concatenation. Indeed, setting $\aa = V^\ast\odot_{\elle{\kk_1\otimes\vv}} V$, we have $\aa\in\cpn{\elle{\kk_1\otimes\vv},\elle{\kk_2}}$, and Eq.~\eqref{eq. centr. 2} gives $\SS =\CC_{\aa,\ff} \, \Pi_\vv$.
\end{definition}

\begin{definition}{Remark}
It is useful to connect Theorem \ref{teo. centr.} with Eq.~\ref{eq. intro 1} and the previous results of \cite{CDaP1,CDaP2}. So, let us assume $\mm_1=\elle{\hh_1}$ and $\mm_2\subset\elle{\hh_2}$. We claim that a linear map $\SS:\cb{\elle{\hh_1},\elle{\kk_1}}\frecc\cb{\mm_2,\elle{\kk_2}}$ is a deterministic supermap if and only if there exist two separable Hilbert spaces $\vv,\uu$ and two isometries  $V:\kk_2\frecc \kk_1 \otimes \vv$, $U:\hh_1\otimes\vv\frecc \hh_2 \otimes \uu$ such that
\begin{equation}\label{eq. centr. old}
[\SS (\ee)](A) = V^\ast \left[ (\ee\otimes \ii_\vv) (U^\ast(A\otimes I_\uu)U) \right] V 
\end{equation}
for all $\ee\in\cb{\elle{\hh_1} , \elle{\kk_1}}$ and $A\in\mm_2$. Indeed, by Stinespring theorem (Theorem 4.3 p.~165 in \cite{AJP} and the discussion following it) every quantum channel $\ff\in\cpn{\mm_2,\elle{\hh_1}\votimes\lv}=\cpn{\mm_2,\elle{\hh_1\otimes\vv}}$ can be written as
$$
\ff(A)=U^\ast(A\otimes I_\uu)U \quad \forall A\in\mm_2
$$
for some separable Hilbert space $\uu$ and some isometry $U:\hh_1\otimes\vv\frecc \hh_2 \otimes \uu$. 
Eq.~\eqref{eq. centr. old} then follows by Eq.~\eqref{eq. centr. 2}, thus recovering the main result of \cite{CDaP1,CDaP2}.
\end{definition}

\begin{definition}{Remark}\label{rem:opspa2}
Theorem \ref{teo. centr.} can be compared with an analogous result in the theory of operator spaces, namely the Christensen-Effros-Sinclair-Pisier (CSPS) theorem for maps $\varphi : \lh\hagotimes\lh \frecc \lk$ which are {\em completely bounded (CB)} in the sense of operator spaces; here, $\lh\hagotimes\lh$ is the algebraic tensor product $\lh\hotimes\lh$ endowed with the operator space structure given by the Haagerup tensor norm (see Chapter 17 in \cite{Paul} for a review of these topics). Indeed, one can show that, if a linear map $\SS : \cb{\mm_1,\elle{\kk_1}} \frecc \cb{\mm_2,\elle{\kk_2}}$ is CP {\em and} probabilistic, then it is automatically CB. In this case, if moreover $\mm_i = \elle{\hh_i}$, regarding the linear spaces $\cb{\elle{\hh_i},\elle{\kk_i}}$ as dual operator spaces according to Remark \ref{rem:opspa1}, normality of $\SS$ is equivalent to its weak*-continuity. These facts can be proven with some efforts as direct consequences of definitions, or more easily checked {\em a posteriori} by making use of Eq.~\eqref{eq. centr. 2} in Theorem \ref{teo. centr.}. Being an operator space, $\cb{\mm_2,\elle{\kk_2}}$ can be completely isometrically immersed into some $\lk$ by Ruan theorem (Theorem 13.4 in \cite{Paul}). On the other hand, by the completely isometric isomorphism $\cb{\elle{\hh_1},\elle{\kk_1}} \simeq \elle{\hh_1,\kk_1}\whotimes\elle{\kk_1,\hh_1}$ explained in Remark \ref{rem:opspa1}, $\SS$ can be regarded as a CB map from $\elle{\hh_1,\kk_1}\hagotimes\elle{\kk_1,\hh_1}$ into $\lk$. Assuming $\hh_1 = \kk_1 = \hh$, CSPS theorem (in the form of Theorem 17.12 of \cite{Paul}) then applies, implying the existence of an Hilbert space $\uhat$, two operators $S,T : \kk\frecc\uhat$ and two unital $\ast$-homomorphisms $\pi_1,\pi_2: \lh \frecc \elle{\uhat}$ such that
\begin{equation}\label{eq:CSPS}
\SS(E\otimes F) = S^\ast \pi_1 (E) \pi_2 (F) T \quad \forall E,F\in\lh \, .
\end{equation}

However, we stress that this expression is very different from the the dilation of Theorem \ref{teo. centr.} above for deterministic supermaps. In particular, our central Eq.~\eqref{eq. centr. 2} {\em does not} follow from Eq.~\eqref{eq:CSPS} in any way. The main novelty of Theorem \ref{teo. centr.} with respect to CSPS theorem may be traced back to the requirement that deterministic supermaps preserve quantum channels. Indeed, this is a very strong request, which can not be employed in the CSPS dilation of Eq.~\eqref{eq:CSPS} for the reason that Ruan theorem gives no means to characterize the image of the subset of quantum channels $\cpn{\mm_2,\elle{\kk_2}}$ under the immersion $\cb{\mm_2,\elle{\kk_2}} \hookrightarrow \lk$. In other words, it is not possible to translate the requrement that a deterministic supermap $\SS$ preserves the set of quantum channels into Eq.~\eqref{eq:CSPS}. Instead, we will see that, in order to prove Theorem \ref{teo. centr.}, one needs to explicitely construct {\em two} Stinespring-type dilations $(\uhat_1,\pi_1,U_1)$ and $(\uhat_2,\pi_2,U_2)$ associated to $\SS$ (see the proof of Proposition \ref{teo. centr. prel.} below), and make an essential use of the quantum channel preserving property in the construction of the dilation $(\uhat_1,\pi_1,U_1)$ (via Lemma \ref{lemma agg.} below).

Of course, one can recover our dilation \eqref{eq. centr. 2} from CSPS Eq.~\eqref{eq:CSPS} in the simple case $\mm_2 = \C$, for which the equality $\cb{\mm_2,\elle{\kk_2}} = \elle{\kk_2}$ is trivial and does not require Ruan theorem. We leave the details of the proof to the reader. Note however that even in this case the proof  still needs an application of Lemma \ref{lemma agg.} below.
\end{definition}

\begin{definition}{Remark}\label{rem:teo.centr.pred.}
As anticipated in the Introduction, Eq.~\eqref{eq. centr. 2}  shows that all deterministic supermaps can be obtained by connecting quantum devices in suitable circuits.   
   Such a physical interpretation is clear in the Schr\"odinger picture: indeed, turning Eq.~\eqref{eq. centr. 2} into its predual, we obtain  
$$
[\SS (\ee)]_\ast  (\rho) = \ff_\ast \left[(  \ee \otimes \ii_\vv)_\ast ( V \rho V^\ast)  \right]
$$
for all elements $\rho$ in the set $\trcl{\kk_2}$ of trace class operators on $\kk_2$ and $\ee \in \cb{\mm_1 , \elle{\kk_1}}$.  
The above equation means that the higher-order transformation $\SS$ can be obtained in the following way:  
\begin{enumerate}
\item apply an invertible transformation (corresponding to the isometry $V$), which transforms  the system $\kk_2$ into the composite system $\kk_1 \otimes \vv$;
\item use the input device (corresponding to the predual quantum operation $\ee_*$) on system $\kk_1$, thus transforming it into system $\hh_1$;
\item apply a physical transformation (corresponding to the predual channel $\ff_*$). 
\end{enumerate}
In particular, it $\mm_i=\elle{\hh_i}$, we can take the Stinespring dilation $\ff(A) = U^\ast (A\otimes I_\uu) U$ of $\ff$. The last equation then rewrites
$$
[\SS (\ee)]_\ast  (\rho) = {\rm tr}_\uu \left\{ U \left[( \ee \otimes \ii_\vv)_\ast ( V \rho V^\ast)  \right] U^\ast \right\}
$$
where ${\rm tr}_\uu$ denotes the partial trace over $\uu$. If $\rho$ is a quantum state (i.e.~$\rho\geq 0$ and $\trt{\rho} =1$), this means that the quantum system with Hilbert space $\kk_2$ first undergoes the invertible evolution $V$, then the quantum channel $(\ee \otimes \ii_\vv)_\ast$, and finally the invertible evolution $U$, after which the ancillary system with Hilbert space $\uu$ is discarded.  It is interesting to note that the same kind of  sequential composition of  invertible evolutions  also appears in a very different context: the reconstruction  of quantum stochastic processes from correlation kernels \cite{belavkin,lindblad,parthasarathy}. That context  is very different from the present framework of higher-order maps, and it is a remarkable feature of Theorem \ref{teo. centr.}  that any deterministic supermap on the space of quantum operations can be achieved through a two-step sequence of invertible evolutions.  
\end{definition}

Theorem \ref{teo. centr.} contains as a special case the Stinespring dilation of quantum channels. This fact is illustrated in the following two examples.

\begin{definition}{Example}
Suppose that  $\mm_1  = \mm_2 = \C$, the trivial von Neumann algebra. In this case we have the identification $\cb{\C , \elle{\kk_i}} = \elle{\kk_i}$. Precisely, the element $\ee\in \cb{\C , \kk_i}$ is identified with the operator $A_\ee = \ee (1) \in \elle{\kk_i}$. Using the fact that  $\cpn{\mm_2,\mm_1\votimes\lv} = \{ I_\vv \}$ we then obtain that  Eq.~\eqref{eq. centr. 2} becomes
$$
[\SS(\ee)](1) = V^\ast (A_\ee\otimes I_\vv) V \, ,
$$
which is just Stinespring dilation for normal CP maps. A linear map $\SS : \elle{\kk_1} \frecc \elle{\kk_2}$ is thus in $\cpqn{\C,\elle{\kk_1} ; \C,\elle{\kk_2}}$ if and only if it is a unital normal CP map, i.e.~a quantum channel.
\end{definition}

\begin{definition}{Example}
Suppose now that $\kk_1 = \kk_2  =\C$. In this case we have the identification  $\cb{\mm_i , \C} = \mm_{i\, \ast}$, the predual space of $\mm_i$ (see e.g.~Proposition 3.8 in \cite{Paul}). Precisely, the CP map $\ee\in \cb{\mm_i , \C}$ is identified with the element $\rho_\ee\in\mm_{i\, \ast}$ given by $\ee(A) = \rho_\ee(A) \ \forall A\in\mm_i$.
Moreover, the isometry $V : \C  \to \C \otimes \vv = \vv$ is identified with a vector $v \in \vv$ with $\no {v} = 1$, and Eq.~\eqref{eq. centr. 2} becomes
\begin{align*}
[\SS (\ee)](A) & =  \scal{v}{\left[(\ee\otimes \ii_{\vv}) \ff(A)\right] v}\\
& = (\rho_\ee\otimes \omega_v)(\ff(A)) \\
& = [\ff_\ast (\rho_\ee\otimes \omega_v)] (A) \, ,
\end{align*}
where $\omega_v \in \elle{\vv}_\ast$ is the linear form $\omega_v : A\mapsto \scal{v}{Av}$. Thus, $\SS (\ee) = \ff_\ast (\rho_\ee\otimes \omega_v)$, hence $\SS$, viewed as a linear map $\SS: \mm_{1\, \ast} \frecc \mm_{2\, \ast}$, is CP and trace preserving. In other words, $\SS$ is a quantum channel in the Schr\"odinger picture.
\end{definition}

The rest of this section is devoted to the  proof of Theorem \ref{teo. centr.}, which first requires some auxiliary lemmas.

\begin{theorem}{Lemma}\label{to lemma agg.}
Suppose $\mm \subset \lh$, and let $\SS \in \cpqn{\mm,\lh;\nn,\lk}$. If $\ee, \ff \in \cp{\mm,\lh}$ are such that $\ee(I_\hh)  = \ff (I_\hh)$, then $[\SS(\ee)] (I_\nn)  =  [\SS(\ff)] (I_\nn)$. 
\end{theorem}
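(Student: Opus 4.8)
The plan is to use the determinism hypothesis by completing both $\ee$ and $\ff$ to genuine quantum channels that differ only by one fixed common CP map, so that this common map cancels after applying $\SS$. Set $c := \ee(I_\hh) = \ff(I_\hh)$, a positive element of $\lh$. Since $\SS$ is linear and evaluation at $I_\nn$ is a linear operation on $\cb{\nn,\lk}$, for every $t>0$ the maps $t\ee,\,t\ff\in\cp{\mm,\lh}$ satisfy $[\SS(t\ee)](I_\nn) = t\,[\SS(\ee)](I_\nn)$ and likewise for $\ff$; hence, choosing $t$ with $t\no{c}_\infty \leq 1$, it suffices to prove the claim under the additional assumption $c\leq I_\hh$.

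Next I construct a common complement. Fix a unit vector $\xi\in\hh$ and define $\gg:\mm\frecc\lh$ by $\gg(A) = \scal{\xi}{A\xi}\,(I_\hh - c)$. Being the composition of the normal state $A\mapsto\scal{\xi}{A\xi}$ (a CP map $\mm\frecc\C$) with the normal CP map $\C\frecc\lh$, $z\mapsto z(I_\hh - c)$, which is well defined since $I_\hh - c\geq 0$, the map $\gg$ lies in $\cp{\mm,\lh}$, and it satisfies $\gg(I_\hh) = I_\hh - c$.

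Now put $\tilde\ee := \ee + \gg$ and $\tilde\ff := \ff + \gg$. Both are normal CP, being sums of such maps, and both are unital: $\tilde\ee(I_\hh) = c + (I_\hh - c) = I_\hh = \tilde\ff(I_\hh)$. Hence $\tilde\ee,\tilde\ff\in\cpn{\mm,\lh}$, and since $\SS$ is deterministic, $\SS(\tilde\ee)$ and $\SS(\tilde\ff)$ are quantum channels in $\cpn{\nn,\lk}$, so that $[\SS(\tilde\ee)](I_\nn) = I_\kk = [\SS(\tilde\ff)](I_\nn)$. Expanding via linearity of $\SS$ and of evaluation at $I_\nn$ gives $[\SS(\ee)](I_\nn) + [\SS(\gg)](I_\nn) = I_\kk = [\SS(\ff)](I_\nn) + [\SS(\gg)](I_\nn)$, and cancelling the common term $[\SS(\gg)](I_\nn)$ yields $[\SS(\ee)](I_\nn) = [\SS(\ff)](I_\nn)$.

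The proof is short, and the determinism hypothesis enters exactly once, namely to pin both completed channels to the value $I_\kk$ at $I_\nn$. The only points demanding a little care are the two preparatory reductions: rescaling to ensure $I_\hh - c\geq 0$ (so that a CP complement $\gg$ exists at all), and exploiting the linearity of $\SS$ (so that the common complement $\gg$ drops out). Neither is a genuine obstacle, so I expect no serious difficulty; the conceptual content lies entirely in recognizing that completing $\ee$ and $\ff$ to a \emph{common} channel trivializes the problem.
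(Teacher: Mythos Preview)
Your proof is correct and follows essentially the same strategy as the paper's: reduce by linearity to the case $c\le I_\hh$, add a common CP complement with value $I_\hh - c$ at the identity, and use determinism to cancel. The only cosmetic difference is the choice of complement --- the paper takes $\aa = (I_\hh - c)^{1/2}\odot_\mm (I_\hh - c)^{1/2}$ rather than your state-based $\gg(A)=\scal{\xi}{A\xi}(I_\hh - c)$, but both work equally well.
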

\begin{proof}
By linearity, it is enough to prove the claim for $\ee (I_\hh) = \ff (I_\hh) \leq I_\hh$.  Let $A := I_\hh - \ee (I_\hh)$,  $\aa:= A^{\frac 12} \odot_\mm A^{\frac 12}$, $\ee' :=\ee + \aa$, and $\ff' := \ff+\aa$. With this definition, $\ee^\prime ,\ff^\prime \in \cpn{\mm,\lh}$. Since $\SS$ is deterministic, one has 
$$
\begin{array}{lll}
I_\kk &= [\SS (\ee')] (I_\nn)  &= [\SS(\ee)] (I_\nn) + [\SS(\aa)](I_\nn) \\
I_\kk &= [\SS (\ff')] (I_\nn)  &= [\SS (\ff)](I_\nn) + [\SS(\aa)](I_\nn).
\end{array}
$$
By comparison, this implies that $[\SS(\ee)] (I_\nn) = [\SS (\ff)] (I_\nn)$.
\end{proof}

\begin{theorem}{Lemma}\label{lemma norm.}
Suppose $\mm \subset \lh$, and let $\SS \in \cpqn{\mm,\lh;\nn,\lk}$. Then, for all $\ee\in\cp{\lh,\lh}$,
$$
[\SS (\ee\ff)] (I_\nn) = [\SS (\left.\ee\right|_\mm)] (I_\nn) \quad \forall \ff\in\cpn{\mm,\lh} \, .
$$
\end{theorem}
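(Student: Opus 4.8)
The plan is to deduce the identity directly from the preceding Lemma \ref{to lemma agg.}, which asserts that a deterministic supermap $\SS$ sends any two maps in $\cp{\mm,\lh}$ that agree on $I_\hh$ to maps that agree on $I_\nn$ after evaluation. Thus the whole task reduces to checking that $\ee\ff$ and $\left.\ee\right|_\mm$ are legitimate elements of $\cp{\mm,\lh}$ and that they take the same value at $I_\hh$.

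First I would verify that both maps lie in $\cp{\mm,\lh}$, so that Lemma \ref{to lemma agg.} applies. Since $\ff\in\cpn{\mm,\lh}\subset\cp{\mm,\lh}$ and $\ee\in\cp{\lh,\lh}$, the composition property recorded before Remark \ref{restr-CP} gives $\ee\ff\in\cp{\mm,\lh}$; likewise, Remark \ref{restr-CP} gives $\left.\ee\right|_\mm=\ee\,\ii_{\mm,\lh}\in\cp{\mm,\lh}$. The key observation is then the evaluation at the identity. Recalling that $\mm\subset\lh$ is unital with $I_\mm=I_\hh$, the channel condition $\ff\in\cpn{\mm,\lh}$ means $\ff(I_\hh)=I_\hh$, whence $(\ee\ff)(I_\hh)=\ee(\ff(I_\hh))=\ee(I_\hh)$, while $\left.\ee\right|_\mm(I_\hh)=\ee(I_\hh)$ by the very definition of the restriction. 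Hence $(\ee\ff)(I_\hh)=\left.\ee\right|_\mm(I_\hh)$, and Lemma \ref{to lemma agg.} immediately yields $[\SS(\ee\ff)](I_\nn)=[\SS(\left.\ee\right|_\mm)](I_\nn)$.

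There is no genuine obstacle here: the content of the statement is already packaged in Lemma \ref{to lemma agg.}, and the only thing one must notice is that post-composing a \emph{unital} channel $\ff$ with an arbitrary normal CP map $\ee$ leaves the value at the identity unchanged, so that $\ee\ff$ and $\left.\ee\right|_\mm$ become indistinguishable from the viewpoint of that lemma. I expect this short argument to be exactly the role of the present lemma as a stepping stone, presumably toward establishing a normality-type property of $\SS$ used later in the dilation theorem.
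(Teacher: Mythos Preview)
Your proposal is correct and follows exactly the paper's approach: both argue that $\ee\ff$ and $\left.\ee\right|_\mm$ lie in $\cp{\mm,\lh}$ and agree at $I_\hh$ (since $\ff$ is unital), then invoke Lemma \ref{to lemma agg.}. Your write-up is simply a slightly more detailed version of the paper's two-line proof.
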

\begin{proof}
Note that the restriction $\left.\ee\right|_\mm$ belongs to $\cp{\mm,\elle{\hh}}$ by Remark \ref{restr-CP}. Therefore, since $\ee\ff (I_\hh)  = \left.\ee\right|_\mm (I_\hh)$ for all $\ff\in\cpn{\mm,\lh}$, the claim is an immediate consequence of Lemma \ref{to lemma agg.}.
\end{proof}

\begin{theorem}{Lemma}\label{lemma agg.}
Suppose $\mm \subset \lh$, and let $\SS \in \cpqn{\mm,\lh;\nn,\lk}$. Then
$$
[\SS (\ee (I_\hh \odot_\mm A))] (I_\nn) = [\SS (\ee  (A\odot_\mm I_\hh))](I_\nn)
$$
for all $\ee \in\cb{\lh,\lh}$ and $A \in\lh$.
In particular, 
$$
[\SS (E \odot_\mm A F)] (I_\nn)  = [\SS ( EA  \odot_\mm  F)] (I_\nn)  \quad \forall E, F , A \in \lh \, .
$$
\end{theorem}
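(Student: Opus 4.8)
The plan is to reduce everything to Lemma \ref{to lemma agg.}, which already asserts that for \emph{completely positive} maps the value $[\SS(\cdot)](I_\nn)$ depends only on the image of $I_\hh$. First observe that the two maps in the statement, $\ee(I_\hh\odot_\mm A)$ and $\ee(A\odot_\mm I_\hh)$, both lie in $\cb{\mm,\lh}$: the elementary maps $I_\hh\odot_\mm A$ and $A\odot_\mm I_\hh$ are CB by Proposition \ref{prop:prop. di odot}, and composition with $\ee\in\cb{\lh,\lh}$ preserves CB-ness, so $\SS$ is defined on them. The key elementary point is that both maps send the unit to the same operator: since $(I_\hh\odot_\mm A)(I_\hh)=A=(A\odot_\mm I_\hh)(I_\hh)$, applying $\ee$ gives $[\ee(I_\hh\odot_\mm A)](I_\hh)=\ee(A)=[\ee(A\odot_\mm I_\hh)](I_\hh)$. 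Thus the whole lemma reduces to showing that $[\SS(\cdot)](I_\nn)$ is insensitive to a CB map as long as its value at $I_\hh$ is fixed.

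The decisive step is therefore to promote Lemma \ref{to lemma agg.} from $\cp{\mm,\lh}$ to all of $\cb{\mm,\lh}$, in the form: $\gg(I_\hh)=0$ implies $[\SS(\gg)](I_\nn)=0$ for every $\gg\in\cb{\mm,\lh}$. By Theorem \ref{CB = span CP} I can write $\gg=\sum_{k=0}^3 i^k\gg_k$ with $\gg_k\in\cp{\mm,\lh}$ (here I use crucially that the codomain has the form $\lk$, with $\kk=\hh$, so the spanning result applies). Each $\gg_k(I_\hh)$ is positive, hence self-adjoint, and the vanishing of $\gg(I_\hh)$ separates into its self-adjoint and anti-self-adjoint parts, forcing $\gg_0(I_\hh)=\gg_2(I_\hh)$ and $\gg_1(I_\hh)=\gg_3(I_\hh)$. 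Lemma \ref{to lemma agg.} then yields $[\SS(\gg_0)](I_\nn)=[\SS(\gg_2)](I_\nn)$ and $[\SS(\gg_1)](I_\nn)=[\SS(\gg_3)](I_\nn)$, and summing against the coefficients $i^k$ gives $[\SS(\gg)](I_\nn)=0$ by linearity of $\SS$ and of evaluation at $I_\nn$.

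With this extension in hand, the first displayed identity is immediate: set $\gg:=\ee(I_\hh\odot_\mm A)-\ee(A\odot_\mm I_\hh)$, note $\gg(I_\hh)=0$ by the observation above, and conclude $[\SS(\gg)](I_\nn)=0$. For the ``in particular'' statement I would specialize the first part to $\ee=E\odot_{\lh}F\in\cb{\lh,\lh}$ (again CB by Proposition \ref{prop:prop. di odot}). A direct computation gives $\ee(I_\hh\odot_\mm A)=E\odot_\mm AF$ (both act as $B\mapsto EBAF$) and $\ee(A\odot_\mm I_\hh)=EA\odot_\mm F$ (both act as $B\mapsto EABF$), so the first identity becomes exactly the claimed one.

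The only genuinely non-routine step is the promotion from CP to CB maps; everything else is a matter of evaluating the elementary maps at the unit and bookkeeping. The subtlety there is that the four-term decomposition guarantees only $\sum_{k} i^k\gg_k(I_\hh)=0$, not any individual comparability of the $\gg_k$; it is precisely the self-adjointness of each positive operator $\gg_k(I_\hh)$ that lets me split this single equation into the two equalities required to invoke Lemma \ref{to lemma agg.}.
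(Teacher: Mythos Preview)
Your proof is correct. Both your argument and the paper's reduce the claim to Lemma~\ref{to lemma agg.}, but you take a somewhat different route to get there.

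The paper first reduces by linearity to the case $A^\ast=A$ and $\ee\in\cp{\lh,\lh}$, and then writes down an explicit decomposition
\[
A\odot_\mm I_\hh - I_\hh\odot_\mm A \;=\; \tfrac{1}{2i}\bigl(\ee_+ - \ee_-\bigr),
\qquad \ee_\pm := (A\pm iI_\hh)^\ast \odot_\mm (A\pm iI_\hh),
\]
noting that $\ee_+(I_\hh)=\ee_-(I_\hh)$ and hence $\ee\ee_+(I_\hh)=\ee\ee_-(I_\hh)$, so Lemma~\ref{to lemma agg.} applies directly to the CP maps $\ee\ee_\pm$. Your approach instead upgrades Lemma~\ref{to lemma agg.} once and for all to the statement ``$\gg(I_\hh)=0$ implies $[\SS(\gg)](I_\nn)=0$ for all $\gg\in\cb{\mm,\lh}$'' via Theorem~\ref{CB = span CP} and the self-adjointness trick, and then applies this to $\gg=\ee(I_\hh\odot_\mm A)-\ee(A\odot_\mm I_\hh)$. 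Your version is a bit cleaner in that it isolates a reusable intermediate lemma and avoids the explicit polarization-type identity; the paper's version has the small advantage of being entirely self-contained once one has reduced to CP $\ee$, whereas yours leans on the full strength of Theorem~\ref{CB = span CP} (which the paper also uses implicitly in its reduction of $\ee$ to the CP case, so there is no real economy either way).
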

\begin{proof}
By linearity, it is enough to prove the claim for $A^* = A$ and  for $\ee \in \cp{\lh,\lh}$. One has
$$
A\odot_\mm I_\hh -  I_\hh \odot_\mm A   =  \frac 1{2i}  (\ee_+ - \ee_-) \, ,
$$
where $\ee_+ , \ee_- \in \cp{\mm,\lh}$ are given by
$$
\ee_\pm := (A \pm iI_\hh)^\ast \odot_\mm (A \pm iI_\hh) \, .
$$
Since $\ee_+(I_\hh) = \ee_-(I_\hh)$, we can apply Lemma \ref{to lemma agg.} to the maps $\ee\ee_+$ and $\ee\ee_-$, and obtain
\begin{align*}
& [\SS ( \ee (A \odot_\mm I_\hh))] (I_\nn)- [\SS(\ee (I_\hh \odot_\mm A))] (I_\nn) = \\
& \qquad \qquad \qquad \qquad \qquad \qquad = \frac{1}{2i} ([\SS (\ee\ee_+)] (I_\nn) - [\SS (\ee\ee_-)] (I_\nn)) \\
& \qquad \qquad \qquad \qquad \qquad \qquad = 0 \, ,
\end{align*}
hence the claim.

The last statement trivially follows taking $\ee=E\odot_\mm F$.
\end{proof}

\begin{theorem}{Lemma}\label{prop. sulla forma ass.}
Suppose $\mm_1\subset\elle{\hh_1}$, and let $\SS$ be a (not necessarily deterministic) supermap in the set $\cpq{\mm_1,\elle{\kk_1};\mm_2,\elle{\kk_2}}$. Define a sesquilinear form $\scal{\cdot}{\cdot}_\SS$ on the algebraic tensor product $\elle{\kk_1,\hh_1} \hotimes \mm_2 \hotimes \kk_2$ as follows
$$
\scal{E_1\otimes A_1 \otimes v_1}{E_2\otimes A_2 \otimes v_2}_\SS := \scal{v_1}{\left[ \SS \left( E_1^\ast \odot_{\mm_1} E_2 \right) \right] \left( A_1^\ast A_2 \right) v_2} \, .
$$
Then, the sesquilinear form $\scal{\cdot}{\cdot}_\SS$ is positive semidefinite.

If also $\TT\in \cpq{\mm_1,\elle{\kk_1};\mm_2,\elle{\kk_2}}$ and $\TT\ll\SS$, then
$$
0\leq \scal{\phi}{\phi}_\TT \leq \scal{\phi}{\phi}_\SS \quad \forall \phi\in \elle{\kk_1,\hh_1} \hotimes \mm_2 \hotimes \kk_2 \, .
$$
\end{theorem}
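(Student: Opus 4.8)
The plan is to establish positive semidefiniteness by realizing $\scal{\phi}{\phi}_\SS$ as the matrix element, against a fixed vector, of a positive operator obtained by pushing a positive element through two completely positive maps. Write a generic element as a finite sum $\phi = \sum_{k=1}^n E_k\otimes A_k\otimes v_k$, and abbreviate $\mathcal{E}_{kl} := E_k^\ast\odot_{\mm_1}E_l \in\cb{\mm_1,\elle{\kk_1}}$ (this is in $\cb{\mm_1,\elle{\kk_1}}$ by item (1) of Proposition \ref{prop:prop. di odot}). By sesquilinearity,
\[
\scal{\phi}{\phi}_\SS = \sum_{k,l=1}^n \scal{v_k}{[\SS(\mathcal{E}_{kl})](A_k^\ast A_l)\, v_l} ,
\]
and the goal is to recognize the right-hand side as $\scal{\mathbf v}{\mathbf F(\mathbf A)\,\mathbf v}$ for a suitable CP map $\mathbf F$, positive element $\mathbf A$, and vector $\mathbf v$.

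First I would build a single CP map that carries all the $\mathcal{E}_{kl}$ as its blocks. Define the block operator $\hat W : \C^n\otimes\kk_1 \frecc \C^n\otimes\hh_1$ by $\hat W(\delta_k\otimes u) = \delta_k\otimes E_k u$, and set $\mathbf E := \hat W^\ast \odot_{\mm_1^{(n)}} \hat W$. By item (2) of Proposition \ref{prop:prop. di odot}, $\mathbf E\in\cp{\mm_1^{(n)},\elle{\kk_1}^{(n)}}$, and a short computation on matrix units gives $\mathbf E(e_{pq}\otimes B) = e_{pq}\otimes E_p^\ast B E_q$; equivalently, the $(k,l)$-block of $\mathbf E(\tilde B)$ is $\mathcal{E}_{kl}(B_{kl})$. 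Reading this through the identification $\cb{\mm_1^{(n)},\elle{\kk_1}^{(n)}} = \cb{M_n(\C),M_n(\C)}\hotimes\cb{\mm_1,\elle{\kk_1}}$ of Remark \ref{rem:id dei CBn}, I would exhibit the entrywise decomposition $\mathbf E = \sum_{k,l}\Lambda_{kl}\otimes\mathcal{E}_{kl}$, where $\Lambda_{kl}(X) = X_{kl}\,e_{kl}$. Applying complete positivity of $\SS$ then shows that $\mathbf F := (\II_n\otimes\SS)(\mathbf E) = \sum_{k,l}\Lambda_{kl}\otimes\SS(\mathcal{E}_{kl})$ lies in $\cp{\mm_2^{(n)},\elle{\kk_2}^{(n)}}$, with the $(k,l)$-block of $\mathbf F(\tilde A)$ equal to $[\SS(\mathcal{E}_{kl})](A_{kl})$.

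Next I would feed in the positive element $\mathbf A := (A_k^\ast A_l)_{k,l}\in(\mm_2^{(n)})_+$; its positivity is the standard fact that such a matrix is positive in $M_n(\mm_2) = \mm_2^{(n)}$ (concretely, representing $\mm_2\subset\elle{\hh_2}$, it equals $Z^\ast Z$ for $Z:\C^n\otimes\hh_2\frecc\hh_2$, $Z(\delta_l\otimes\xi)=A_l\xi$). Since $\mathbf F$ is CP, the operator $\mathbf F(\mathbf A)$ is positive on $\C^n\otimes\kk_2$; pairing it with $\mathbf v := \sum_k \delta_k\otimes v_k$ and reading off the blocks yields
\[
0 \leq \scal{\mathbf v}{\mathbf F(\mathbf A)\,\mathbf v} = \sum_{k,l}\scal{v_k}{[\SS(\mathcal{E}_{kl})](A_k^\ast A_l)\, v_l} = \scal{\phi}{\phi}_\SS ,
\]
which is the asserted positivity.

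For the comparison statement, I would exploit that $\SS\mapsto\scal{\phi}{\phi}_\SS$ is linear in the supermap, so $\scal{\phi}{\phi}_\SS = \scal{\phi}{\phi}_\TT + \scal{\phi}{\phi}_{\SS-\TT}$. By definition of $\ll$, the difference $\SS-\TT$ again belongs to $\cpq{\mm_1,\elle{\kk_1};\mm_2,\elle{\kk_2}}$, so the first part applied separately to $\TT$ and to $\SS-\TT$ gives $\scal{\phi}{\phi}_\TT\geq 0$ and $\scal{\phi}{\phi}_{\SS-\TT}\geq 0$; adding these yields $0\leq\scal{\phi}{\phi}_\TT\leq\scal{\phi}{\phi}_\SS$. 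The main obstacle is the bookkeeping of the identification in Remark \ref{rem:id dei CBn}: one must verify that the single elementary map $\hat W^\ast\odot_{\mm_1^{(n)}}\hat W$ genuinely decomposes into the entrywise tensor form $\sum_{k,l}\Lambda_{kl}\otimes\mathcal{E}_{kl}$, so that $\II_n\otimes\SS$ acts block by block as $\SS(\mathcal{E}_{kl})$. Once this is settled, the positivity is merely the composition of the positive operations $\mathbf E\mapsto\mathbf F\mapsto\mathbf F(\mathbf A)$.
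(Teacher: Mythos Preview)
Your proof is correct and follows essentially the same approach as the paper: both construct the block-diagonal operator $\hat W=\sum_k e_{kk}\otimes E_k$ (the paper's $\tilde E$), form the CP map $\hat W^\ast\odot_{\mm_1^{(n)}}\hat W$, apply $\II_n\otimes\SS$, evaluate on the positive matrix $(A_k^\ast A_l)_{k,l}$ (the paper's $\tilde A^\ast\tilde A$), and pair with $\mathbf v=\sum_k\delta_k\otimes v_k$; the comparison statement is likewise derived in both by linearity in the supermap. Your entrywise maps $\Lambda_{kl}$ coincide with the paper's $e_{kk}\odot_{M_n(\C)}e_{ll}$, so even the bookkeeping you flag as the main obstacle is handled identically.
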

\begin{proof}
Let $\phi = \sum_{i=1}^n E_i \otimes A_i \otimes v_i$ be a generic element in the linear space $\elle{\kk_1,\hh_1} \hotimes \mm_2 \hotimes \kk_2$. 
Let $\{e_i\}_{i=1}^n$ be the standard basis for the Hilbert space $\C^n$, and $\{e_{ij}\}_{i,j=1}^n$ be the standard basis of the matrix space $M_n(\C)$, given by $e_{ij} (e_k) = \delta_{jk} e_i$. Define 
\begin{align*}
\tilde v &:=  \sum_{i=1}^n   e_i \otimes v_i  \in \kk_2^{(n)} \\
\tilde A &:=  \sum_{i=1}^n  e_{1i} \otimes A_i \in \mm_2^{(n)} \\
\tilde E &: = \sum_{i=1}^n e_{ii} \otimes E_i \in \elle{\kk_1^{(n)} , \hh_1^{(n)}} \, .
\end{align*}
With these definitions, we have $\tilde{E}^\ast \odot_{\mm_1^{(n)}} \tilde{E}  = \sum_{i,j=1}^n  (e_{ii} \odot_{M_n(\C)} e_{jj} ) \otimes (E_i^\ast \odot_{\mm_1} E_j)$ and $\tilde A^*\tilde A  =\sum_{i,j=1}^n e_{ij} \otimes A_i^* A_j$. Hence, we obtain
\begin{align*}
(\II_n \otimes \SS)  (\tilde{E}^\ast \odot_{\mm_1^{(n)}} \tilde E) & = \sum_{i,j} (e_{ii} \odot_{M_n(\C)} e_{jj} ) \otimes \SS (E_i^\ast \odot_{\mm_1} E_j) 
\end{align*}
and 
\begin{align*}
[(\II_n \otimes \SS)  (\tilde{E}^\ast \odot_{\mm_1^{(n)}} \tilde E) ] (\tilde A^* \tilde A ) &= \sum_{i,j} e_{ij}  \otimes [\SS (E_i^\ast \odot_{\mm_1} E_j)] (A^*_i A_j) . 
\end{align*}
Complete positivity of $\SS$ then implies
\begin{align*}
0  &\leq  \scal{\tilde v}{  [(\II_n \otimes \SS)(\tilde{E}^\ast \odot_{\mm_1^{(n)}} \tilde E)] (\tilde A^*\tilde A) \tilde v }    \\
&=  \sum_{i,j}   \scal{v_i}{[\SS(E_i^\ast \odot_{\mm_1} E_j)] (A_i^* A_j) v_j}   \\
& =  \scal{\phi}{\phi}_\SS \, ,
\end{align*}
which shows that the sesquilinear form $\scal{\cdot}{\cdot}_\SS$ is positive semidefinite.

Since the sesquilinear forms $\scal{\cdot}{\cdot}_\TT$, $\scal{\cdot}{\cdot}_\SS$ and $\scal{\cdot}{\cdot}_{\SS-\TT}$ are all positive semidefinite, the second statement in the lemma follows from
$$
\scal{\phi}{\phi}_\TT = \scal{\phi}{\phi}_\SS - \scal{\phi}{\phi}_{\SS-\TT} \leq \scal{\phi}{\phi}_\SS \, .
$$
\end{proof}

In the next two lemmas, we \emph{do not} assume separability as a part in the definition of Hilbert spaces.

\begin{theorem}{Lemma}\label{normal pi}
Let $\hh$ be separable, $\{e_i\}_{i\in \mathbb N}$ be a Hilbert basis for  $\hh$, and $P_n$ be the orthogonal projection onto $\spanno{  e_i \mid i \leq n}$. A unital $\ast$-homomorphism $\pi: \lh \frecc \elle{\uhat}$ (with $\uhat$ not assumed separable) is normal if and only if $\pi (P_n) \uparrow  I_{\uhat}$.
\end{theorem}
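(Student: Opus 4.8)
The plan is to treat the two implications separately, the forward one being immediate and the converse carrying all the content. For the ``only if'' direction I would first note that $P_n \uparrow I_\hh$: the $P_n$ form an increasing sequence of projections whose ranges exhaust a dense subspace, so their least upper bound in $\lh$ is $I_\hh$. If $\pi$ is normal it preserves this supremum, and since $\pi$ is unital we get $\pi(P_n)\uparrow\pi(I_\hh)=I_{\uhat}$.

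For the ``if'' direction, set $Q_n:=\pi(P_n)$. As $\pi$ is a $\ast$-homomorphism these are projections, and the hypothesis $Q_n\uparrow I_{\uhat}$ forces $Q_n\to I_{\uhat}$ strongly, i.e. $\no{(I_{\uhat}-Q_n)\xi}\to 0$ for every $\xi\in\uhat$; I would record this together with the two standard facts that $\pi$ is positive and contractive ($\no{\pi(X)}_\infty\le\no{X}_\infty$). Given an increasing bounded sequence $A_k\uparrow A$ in $\elle{\hh}_+$, I set $C_k:=A-A_k$, so $C_k\downarrow 0$ and $0\le C_k\le A$. Positivity makes $\{\pi(A_k)\}$ increasing and bounded above by $\pi(A)$, so it has a least upper bound $B\le\pi(A)$; equivalently $\pi(C_k)\downarrow D:=\pi(A)-B\ge 0$, and the whole problem reduces to proving $D=0$, for which it suffices to show $\scal{\xi}{D\xi}=\inf_k\scal{\xi}{\pi(C_k)\xi}=0$ for each $\xi$.

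The heart of the argument is to split a vector along $Q_n$ and its complement. Using the elementary inequality $\scal{\eta+\zeta}{T(\eta+\zeta)}\le 2\scal{\eta}{T\eta}+2\scal{\zeta}{T\zeta}$, valid for any $T\ge 0$, applied with $T=\pi(C_k)$, $\eta=Q_n\xi$, $\zeta=(I_{\uhat}-Q_n)\xi$, and using $Q_n\pi(C_k)Q_n=\pi(P_n C_k P_n)$, I obtain
\begin{equation*}
\scal{\xi}{\pi(C_k)\xi}\le 2\scal{\xi}{\pi(P_n C_k P_n)\xi}+2M\no{(I_{\uhat}-Q_n)\xi}^2,
\end{equation*}
where $M=\no{A}_\infty$ bounds $\no{\pi(C_k)}_\infty\le\no{C_k}_\infty\le M$. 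The key point is that $P_n C_k P_n$ lives in the finite-dimensional corner $P_n\lh P_n\cong M_n(\C)$, where the decreasing sequence $P_n C_k P_n\downarrow 0$ (a compression of $C_k\downarrow 0$) automatically converges in norm; since $\pi$ is contractive, $\no{\pi(P_n C_k P_n)}_\infty\le\no{P_n C_k P_n}_\infty\to 0$ as $k\to\infty$ for each fixed $n$. Then for a given $\epsilon>0$ I would first choose $n$ making the second term below $2M\epsilon$, and then choose $k$ making the first term below $2\epsilon$, concluding $\inf_k\scal{\xi}{\pi(C_k)\xi}\le 2(1+M)\epsilon$ and hence, $\epsilon$ being arbitrary, $\scal{\xi}{D\xi}=0$.

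I expect the main obstacle to be precisely the non-monotonicity of the corner compressions: $Q_n\pi(C)Q_n$ is \emph{not} increasing in $n$ (already false for $2\times 2$ matrices), so one cannot simply pass to a supremum over $n$. The remedy is the two-parameter $\epsilon$-argument above, in which $n$ is tuned to absorb the off-corner remainder via the strong convergence $Q_n\to I_{\uhat}$, while $k$ is tuned inside the fixed finite-dimensional corner, where contractivity of $\pi$ lets finite-dimensional norm convergence do the work in place of any normality of $\pi$ on the corner.
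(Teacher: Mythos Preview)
Your argument is correct and genuinely different from the paper's. The paper disposes of the converse in one stroke by invoking the structural decomposition $\pi=\pi_{\rm nor}\oplus\pi_{\rm sin}$ of a representation of $\lh$ into a normal and a singular part (Kadison--Ringrose, Prop.~10.4.13): since each $P_n$ is compact, $\pi(P_n)=\pi_{\rm nor}(P_n)\uparrow\pi_{\rm nor}(I_\hh)$, so the hypothesis forces $\pi_{\rm nor}(I_\hh)=I_{\uhat}$ and hence $\pi_{\rm sin}=0$. Your route avoids this external result entirely and proves normality directly from the definition, using only that $\pi$ is a positive contraction and that $Q_n\to I_{\uhat}$ strongly: the key step---compressing $C_k$ to the finite-rank corner $P_n\lh P_n$ where decreasing positive sequences automatically converge in norm, then absorbing the off-corner remainder via $\no{(I_{\uhat}-Q_n)\xi}\to 0$---is exactly right, and your two-parameter $\epsilon$ argument handles the non-monotonicity in $n$ cleanly. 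The paper's proof is shorter but imports a nontrivial structure theorem; yours is self-contained and more elementary, at the cost of an explicit estimate.
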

\begin{proof}
Since $P_n  \uparrow I_\hh$, if $\pi$ is normal one must necessarily have $\pi(P_n) \uparrow \pi (I_\hh) = I_\uhat$. Conversely, assume that $\pi (P_n) \uparrow I_\uhat$.   Let us decompose $\pi$ into the orthogonal sum of $\ast$-homomorphisms $\pi = \pi_{\rm nor} \oplus \pi_{\rm sin}$, where $\pi_{\rm nor}$ is normal and  $\pi_{\rm sin}$ is singular, that is  $\pi_{\rm sin} (K) = 0$ for every compact operator $K\in\lh$ (see e.g.~Proposition 10.4.13, p.~757 of \cite{KadRin}).  We then have $\pi (P_n) = \pi_{\rm nor}  (P_n) \uparrow \pi_{\rm nor}  (I_\hh)$ by normality, hence $\pi_{\rm nor}  (I_\hh) = I_\uhat$. On the other hand, $I_\uhat = \pi_{\rm nor} (I_\hh) \oplus \pi_{sin} (I_\hh)$. This implies $\pi_{\rm sin} (I_\hh) = 0$, and, therefore, $\pi_{\rm sin} =0$.
\end{proof}

\begin{theorem}{Lemma}\label{lemma sep. di K}
Let $\hh$ be separable and  $\pi : \lh \frecc \elle{\uhat}$ be a normal unital $\ast$-homomorphism (with $\uhat$ not assumed separable). If there exists a separable subset $\ss\subset\uhat$ such that the linear space
\begin{equation}\label{denso1}
\spanno{\pi (A) v \mid A\in\lh , v\in\ss}
\end{equation}
is dense in $\uhat$, then $\uhat$ is separable.
\end{theorem}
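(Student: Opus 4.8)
The plan is to reduce the separability of $\uhat$ to that of a single ``multiplicity space'' obtained by compressing $\pi$ with a rank-one projection, exploiting that $\lh$, although not norm-separable, is assembled from the \emph{countable} family of matrix units attached to a fixed Hilbert basis of the separable space $\hh$. First I would fix the basis $\{e_i\}_{i\in\N}$ and the projections $P_n$ of Lemma~\ref{normal pi}, and introduce the matrix units $E_{ij}\in\lh$ given by $E_{ij}w = \scal{e_j}{w}e_i$, so that $E_{ij}^\ast = E_{ji}$, $E_{ij}E_{kl} = \delta_{jk}E_{il}$, and $\sum_{i\le n}E_{ii} = P_n \uparrow I_\hh$. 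By Lemma~\ref{normal pi}, normality of $\pi$ yields $\sum_{i\le n}\pi(E_{ii}) = \pi(P_n)\uparrow I_\uhat$; thus $\{\pi(E_{ii})\}_{i\in\N}$ is an orthogonal family of projections summing \emph{strongly} to $I_\uhat$, giving the orthogonal decomposition $\uhat = \bigoplus_{i\in\N}\pi(E_{ii})\uhat$.

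Next I would set $\kk_0 := \pi(E_{11})\uhat$ and observe that each $\pi(E_{i1})$ restricts to a unitary from $\kk_0$ onto $\pi(E_{ii})\uhat$, since $\pi(E_{i1})^\ast\pi(E_{i1}) = \pi(E_{11})$ and $\pi(E_{i1})\pi(E_{i1})^\ast = \pi(E_{ii})$. Consequently $\uhat$ is unitarily equivalent to the countable orthogonal sum $\bigoplus_{i\in\N}\kk_0$, and since a countable orthogonal sum of separable Hilbert spaces is separable, it suffices to prove that $\kk_0$ is separable.

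To this end I would compress the density hypothesis \eqref{denso1} by $\pi(E_{11})$: as $\pi(E_{11})$ is a bounded surjection onto $\kk_0$, the set $\{\pi(E_{11}A)v \mid A\in\lh,\, v\in\ss\}$ has dense linear span in $\kk_0$. The key computation is that $E_{11}A$ is the rank-one operator $w\mapsto\scal{A^\ast e_1}{w}e_1$, depending only on the vector $u := A^\ast e_1\in\hh$; expanding $u = \sum_j \scal{e_j}{u}e_j$ gives $E_{11}A = \sum_j \overline{\scal{e_j}{u}}\,E_{1j}$ with convergence in the uniform norm. Since a $\ast$-homomorphism is norm contractive, $\pi$ transports this to $\pi(E_{11}A) = \sum_j \overline{\scal{e_j}{u}}\,\pi(E_{1j})$ in $\elle{\uhat}$, so that $\{\pi(E_{1j})v \mid j\in\N,\, v\in\ss\}$ already has dense span in $\kk_0$. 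Finally, choosing a countable dense subset $D\subset\ss$ (possible because $\ss$ is separable) and using continuity of each $\pi(E_{1j})$, the countable family $\{\pi(E_{1j})v \mid j\in\N,\, v\in D\}$ has dense span in $\kk_0$, establishing separability.

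The main obstacle is precisely that $\lh$ is not norm-separable, so one cannot approximate the arbitrary operators $A$ of \eqref{denso1} by a countable norm-dense family directly; the device that overcomes this is the compression to $\kk_0$, after which the relevant operators collapse to the countable family of units $\{E_{1j}\}$ parametrized by the separable space $\hh$, and norm-continuity of $\pi$ finishes the job. I would also take care that reassembling $\uhat$ out of $\kk_0$ requires the \emph{strong} (not merely weak) convergence $\sum_i\pi(E_{ii}) = I_\uhat$, which is exactly what normality of $\pi$ supplies via Lemma~\ref{normal pi}.
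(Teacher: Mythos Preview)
Your proof is correct, but it follows a genuinely different route from the paper's. The paper observes that the norm-separable ideal $\lzh$ of compact operators already suffices to generate the dense span in~\eqref{denso1}: since $\pi(P_n)\to I_{\uhat}$ strongly, one has $\pi(A)v=\lim_n\pi(AP_n)v$ with $AP_n\in\lzh$, so the closure of \eqref{denso1} coincides with the closure of $\spanno{\pi(A)v\mid A\in\lzh,\,v\in\ss}$, which is separable because $\lzh$ and $\ss$ are separable and $(A,v)\mapsto\pi(A)v$ is continuous. Your argument instead decomposes $\uhat$ via the matrix units into a countable orthogonal sum of copies of the multiplicity space $\kk_0=\pi(E_{11})\uhat$, and then proves $\kk_0$ separable by compressing the density hypothesis down to rank-one operators. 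The paper's approach is more economical; yours is more structural, and in effect reproves the fact (used later in Proposition~\ref{teo. centr. prel.} via a reference) that every normal unital representation of $\lh$ is a multiple of the identity representation, $\uhat\simeq\hh\otimes\kk_0$. Both arguments hinge on the same core idea you correctly identified: the non-separability of $\lh$ in norm is circumvented because normality of $\pi$ lets one replace arbitrary $A\in\lh$ by operators drawn from a norm-separable family (either $\lzh$, or the finite-rank span of the $E_{ij}$).
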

\begin{proof}
Since the Hilbert space $\hh$ is separable, the Banach subspace $\lzh$ of the compact operators in $\lh$ is separable. Let $P_n$ be defined as in the previous proposition. By normality of $\pi$, we have $\lim_n \no{\pi(P_n) v - v} = 0$ for all $v\in\uhat$ (Lemma 5.1.4 in \cite{KadRinI}). Therefore, $\pi (A) v = \lim_n \pi (AP_n)v$ for all $A\in \lh$ and $v\in\uhat$, where $AP_n\in\lzh$ because $P_n$ has finite rank. Therefore, the closure of the linear space defined in \eqref{denso1} coincides with the closure of the linear space spanned by the set $\left\{\pi (A) v \mid A\in\lzh , v\in\ss\right\}$, which is separable by separability of $\lzh$ and $\ss$ and by continuity of the mapping $\lzh \times \ss \ni (A,v) \mapsto \pi(A) v \in \uhat$. Separability of $\uhat$ then follows.
\end{proof}

We are now in position to prove the existence of the dilation of Theorem \ref{teo. centr.} in the special case $\mm_1 \subset \elle{\hh} =  \mm_2$ and $\dim \hh = \infty$.

\begin{theorem}{Proposition}\label{teo. centr. prel.}
Let $\dim \hh = \infty$, and assume $\mm\subset\lh$. If the linear map $\SS : \cb{\mm,\lh} \frecc \cb{\nn,\lk}$ is a deterministic supermap, then there exists a Hilbert space $\uu$, an isometry $U:\kk\frecc \hh \otimes \uu$ and a quantum channel $\gg\in\cpn{\nn,\mm\votimes\lu}$ such that
\begin{equation}\label{eq. centr.}
[\SS (\ee)](A) = U^\ast \left[(\ee\otimes \ii_\uu) \gg (A) \right] U \quad \forall \ee\in\cb{\mm,\lh} \, , \, \forall A\in\nn \, .
\end{equation}
\end{theorem}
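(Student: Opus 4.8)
The plan is to build $\uu$, $U$ and $\gg$ out of a single GNS-type construction governed by the positive form of Lemma \ref{prop. sulla forma ass.} evaluated at $A=I_\nn$, and to exploit the channel-preserving hypothesis to turn the ``input slot'' into a \emph{normal} representation of $\lh$, whose multiplicity space is exactly $\uu$.

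Concretely, I would first introduce the positive semidefinite form on $\lh\hotimes\kk$ given by $\scal{E_1\otimes v_1}{E_2\otimes v_2}_0:=\scal{v_1}{[\SS(E_1^\ast\odot_\mm E_2)](I_\nn)v_2}$ (the restriction of the form of Lemma \ref{prop. sulla forma ass.} to $A_1=A_2=I_\nn$), and pass to the associated Hilbert space $\uhat$ with canonical quotient map $\Lam$. On $\uhat$ I would define a representation of $\lh$ by left multiplication in the input slot, $\pi(B)\Lam(E\otimes v):=\Lam(BE\otimes v)$. The decisive algebraic input here is Lemma \ref{lemma agg.}, which yields $\scal{\Lam(G\otimes v_1)}{\pi(A)\Lam(F\otimes v_2)}_0=\scal{\pi(A^\ast)\Lam(G\otimes v_1)}{\Lam(F\otimes v_2)}_0$; this forces $\pi(A)^\ast=\pi(A^\ast)$, so $\pi$ is a unital $\ast$-homomorphism, and contractivity is then automatic from positivity of $\pi$ on $\lh_+$ together with $\pi(I_\hh)=I_\uhat$. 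This is precisely the step in which the deterministic (channel-preserving) property is used in an essential way.

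The main obstacle is to prove that $\pi$ is \emph{normal}, which is exactly where the normality of $\SS$ enters and is the crux of the argument. By Lemma \ref{normal pi} it suffices to show $\pi(P_n)\uparrow I_\uhat$, where $P_n$ is the projection onto $\spanno{e_i\mid i\leq n}$ for a fixed Hilbert basis of $\hh$. Setting $\Theta(B):=[\SS(E^\ast\odot_\mm BE)](I_\nn)$ and $Q_k:=P_k-P_{k-1}$, Lemma \ref{lemma agg.} rewrites the increments as $\Theta(Q_k)=[\SS((Q_kE)^\ast\odot_\mm Q_kE)](I_\nn)\succeq 0$, so $\Theta(P_n)$ is increasing. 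The operators $\{Q_kE\}_k$ form a Kraus family with $\sum_k(Q_kE)^\ast(Q_kE)=E^\ast E$, so by Theorem \ref{Teo. Stines.} the partial sums $\sum_{k\leq n}(Q_kE)^\ast\odot_\mm(Q_kE)$ are CP-increasing and CP-bounded, converging to some $\ee'\in\cp{\mm,\lh}$ with $\ee'(I_\hh)=E^\ast E=(E^\ast\odot_\mm E)(I_\hh)$. Normality of $\SS$ then gives $\scal{v}{\Theta(P_n)v}\uparrow\scal{v}{[\SS(\ee')](I_\nn)v}$, while Lemma \ref{to lemma agg.} identifies $[\SS(\ee')](I_\nn)=[\SS(E^\ast\odot_\mm E)](I_\nn)$ because $\ee'$ and $E^\ast\odot_\mm E$ agree at $I_\hh$. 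Hence $\scal{\Lam(E\otimes v)}{\pi(P_n)\Lam(E\otimes v)}\to\no{\Lam(E\otimes v)}^2$, which by density and uniform boundedness forces $\pi(P_n)\uparrow I_\uhat$, so $\pi$ is normal.

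Once $\pi$ is a normal unital representation of $\elle{\hh}$ with $\dim\hh=\infty$, the standard structure theorem for normal representations of $\elle{\hh}$ furnishes a Hilbert space $\uu$ and a unitary identification $\uhat\cong\hh\otimes\uu$ under which $\pi(B)=B\otimes I_\uu$; separability of $\uu$ follows from Lemma \ref{lemma sep. di K}, applied to the separable set $\Lam(I_\hh\otimes\kk)$, whose $\pi(\lh)$-orbit $\spanno{\pi(B)\Lam(I_\hh\otimes v)\mid B\in\lh,\,v\in\kk}$ is dense in $\uhat$. I would then set $Uv:=\Lam(I_\hh\otimes v)$; this is an isometry because $[\SS(I_\hh\odot_\mm I_\hh)](I_\nn)=[\SS(\ii_{\mm,\lh})](I_\nn)=I_\kk$ by determinism (the inclusion $\ii_{\mm,\lh}$ being a channel), and one has $\pi(E)Uv=\Lam(E\otimes v)$. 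Finally I would define $\gg(A)$ on $\hh\otimes\uu$ through the matrix elements $\scal{\Lam(E_1\otimes v_1)}{\gg(A)\Lam(E_2\otimes v_2)}:=\scal{v_1}{[\SS(E_1^\ast\odot_\mm E_2)](A)v_2}$: complete positivity of $\SS$ makes this a well-defined CP map, unitality and normality are inherited from $\SS$, and $\gg(A)$ commutes with $\pi(C)=C\otimes I_\uu$ for every $C\in\mm'$ by item (3) of Proposition \ref{prop:prop. di odot} (which gives $E_1^\ast\odot_\mm CE_2=E_1^\ast C\odot_\mm E_2$), so $\gg$ takes values in the commutant $\pi(\mm')'=\mm\votimes\lu$ and lies in $\cpn{\nn,\mm\votimes\lu}$. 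Combining these ingredients, for elementary $\ee=E_1^\ast\odot_\mm E_2$ one checks, using $(\ee\otimes\ii_\uu)\gg(A)=\pi(E_1)^\ast\gg(A)\pi(E_2)$ and $\pi(E_i)U=\Lam(E_i\otimes\,\cdot\,)$, that $U^\ast[(\ee\otimes\ii_\uu)\gg(A)]U=[\SS(\ee)](A)$; Eq.~\eqref{eq. centr.} for all $\ee\in\cb{\mm,\lh}$ then follows by linearity, Theorem \ref{CB = span CP} and normality.
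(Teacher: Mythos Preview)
Your proposal is correct and follows essentially the same route as the paper: the GNS space built from the form $\scal{E_1\otimes v_1}{E_2\otimes v_2}_0=\scal{v_1}{[\SS(E_1^\ast\odot_\mm E_2)](I_\nn)v_2}$, the left-multiplication representation $\pi$ of $\lh$ made into a $\ast$-homomorphism via Lemma \ref{lemma agg.}, normality of $\pi$ from normality of $\SS$ together with Lemma \ref{to lemma agg.}/\ref{lemma norm.}, and then the structure theorem $\uhat\cong\hh\otimes\uu$. The only notable difference is organizational: the paper introduces a second GNS space $\uhat_2$ (for the full form $\scal{\cdot}{\cdot}_\SS$ on $\lh\hotimes\nn\hotimes\kk$) with a representation $\pi_2$ of $\nn$ and an isometry $U_2:\uhat_1\to\uhat_2$, and then sets $\gg(A)=U_2^\ast\pi_2(A)U_2$, which makes well-definedness, boundedness and complete positivity of $\gg$ automatic; you instead define $\gg(A)$ directly through its matrix elements and invoke Lemma \ref{prop. sulla forma ass.} for these properties, which is slightly more economical but requires spelling out the Cauchy--Schwarz argument for well-definedness and the bound $|\scal{\Lam\phi}{\gg(A)\Lam\phi}|\le\no{A}_\infty\no{\Lam\phi}^2$ for self-adjoint $A$.
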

\begin{proof}
Suppose that  $\SS : \cb{\mm,\lh} \frecc \cb{\nn,\lk}$ is a deterministic supermap. Let $\scal{\cdot}{\cdot}_1$ be the positive semidefinite sesquilinear form in $\lh \hotimes \kk$ defined by
$$
\scal{E_1\otimes v_1}{E_2\otimes v_2}_1: = \scal{E_1 \otimes I_\nn \otimes v_1}{ E_2 \otimes I_\nn \otimes v_2 }_\SS .
$$
Let $\rr_1$ be its kernel and $\uhat_1$ be the Hilbert space completion of the quotient space $\lh \hotimes \kk / \rr_1$ (not assumed separable). We denote by $\scal{\cdot}{\cdot}_1$ and $\no{\cdot}_1$ the scalar product and norm in $\uhat_1$.

Moreover, let $\rr_2$ be the kernel of the positive semidefinite sesquilinear form $\scal{\cdot}{\cdot}_\SS$, and let $\uhat_2$ be the Hilbert space completion (not assumed separable) of the quotient space $\lh \hotimes \nn \hotimes \kk / \rr_2$ with respect to such form. We denote by $\scal{\cdot}{\cdot}_2$ and $\no{\cdot}_2$ the resulting scalar product and norm in $\uhat_2$.

We define two linear maps
$$
\begin{array}{ll}
U_1: \kk \frecc \lh \hotimes \kk  \qquad  &U_1 v = I_\hh \otimes v \\
U_2 : \lh \hotimes \kk \frecc \lh \hotimes \nn \hotimes \kk  \qquad & U_2 (E\otimes v) = E\otimes I_\nn \otimes v \, .
\end{array}
$$
It is easy to verify that $U_1$ and $U_2$ extend to isometries $U_1: \kk \frecc \uhat_1$ and  $U_2: \uhat_1 \frecc \uhat_2$, respectively. Indeed, for $U_1$ we have the equality
\begin{align*}
\no{U_1 v }_1^2 &= \scal{I_\hh \otimes I_\nn \otimes v}{I_\hh \otimes I_\nn \otimes v}_\SS \\
&  = \scal{v}{[\SS(I_\hh \odot_\mm I_\hh)](I_\nn) v} \\
&= \no{v}^2 \, ,
\end{align*}
where we used the fact that $\SS$ is deterministic, implying $[\SS (I_\hh \odot_\mm I_\hh)] (I_\nn) = I_\kk$.
For $U_2$, taking $\phi = \sum_{i=1}^n  E_i \otimes v_i$ we have the equality
\begin{align*}
\no{U_2  \phi}_2^2  & = \sum_{i,j}  \scal{ E_i \otimes  I_\nn \otimes v_i} { E_j \otimes I_\nn \otimes v_j    }_\SS \\
& = \sum_{i,j}  \scal{ E_i \otimes v_i}{E_j \otimes v_j}_1  \\
& =  \no{\phi}_1^2 \, .  
\end{align*}

For $B\in\lh$, we introduce the linear operator $\pi_1 (B)$ on $\lh \hotimes \kk$ defined by
$$
[\pi_1 (B)] (E\otimes v) := B E\otimes v
$$
for all $E\in\lh$, $v\in\kk$. We claim that $\pi_1 (B)$ extends to a bounded linear operator on $\uhat_1$, that $\pi_1$ is a normal unital $\ast$-homomorphism of $\lh$ into $\elle{\uhat_1}$, and that $\uhat_1$ is separable. Indeed, for every  $\phi = \sum_{i=1}^n E_i \otimes v_i$ and $\psi = \sum_{j=1}^m F_j \otimes w_j$, we have
\begin{align*}
\scal{\phi}{\pi_1 (B) \psi }_1 & =  \sum_{i,j} \scal{v_i}{\left[ \SS (E_i^\ast \odot_\mm BF_j  ) \right] (I_\kk) w_j} \\
& =  \sum_{i,j} \scal{v_i}{\left[ \SS (E_i^\ast B \odot_\mm F_j  ) \right] (I_\kk) w_j} \\
& =  \scal{\pi_1 (B^\ast)\phi}{\psi}_1 ,
\end{align*}
where we used Lemma \ref{lemma agg.}. Note that $\pi_1 (I_\hh)$ is the identity on $\lh\hotimes\kk$, and
$$
\pi_1 (B_1) \pi_1 (B_2) = \pi_1 (B_1 B_2) \quad \forall B_1 ,B_2 \in \lh \, .
$$
It follows that, for all $\phi \in \lh \hotimes \kk$, the map $\omega_\phi : B \mapsto \scal{\phi}{\pi_1 (B) \phi}_1$ is a positive linear functional on $\lh$, hence
$$
\no{\pi_1 (B) \phi}_1^2= \omega_\phi (B^*B) \leq \no{B^*B}_\infty \omega_\phi (I_\hh) =\no{B}_\infty^2\no{\phi}_1^2 .
$$
Therefore,  $\pi_1 (B)$ extends to a bounded operator on $\uhat_1$, and $\pi_1$ is a unital $\ast$-homomorphism of $\lh$ into $\elle{\uhat_1}$.  
We now prove that $\pi_1$ is normal.  Let $\{e_i\}_{i\in \mathbb N} $ be a Hilbert basis for $\hh$, $Q_i$ be the orthogonal projection onto $\C e_i$, and $P_n$ be the orthogonal projection onto  $\spanno{e_i \mid  i\leq n }$. By Lemma \ref{normal pi}, to prove that $\pi_1$ is normal it is enough to prove that $\pi_1 (P_n)\uparrow I_{\uhat_1}$. For every $\phi = E\otimes v$, $\psi = F \otimes w$ we have
\begin{align*}
\scal{\phi}{\pi_1 (P_n) \psi}_1 & =  \sum_{i=1}^n \scal{\pi_1 (Q_i) \phi}{\pi_1 (Q_i) \psi} _1\\
&= \sum_{i=1}^n \scal{v}{[\SS (E^\ast Q_i \odot_\mm Q_i F)](I_\nn) w} \\
& =  \scal{v}{[\SS ((E^\ast \odot_{\lh} F) \ff_n )](I_\nn) w} ,
\end{align*}
where $\ff_n = \sum_{i=1}^n Q_i \odot_\mm Q_i \in \cp{\mm,\lh}$.
Let $\ff \in \cpn{\mm,\lh}$ be the quantum channel defined by $\ff_n \Uparrow \ff$. 
Using the polarization identity $E^\ast \odot_{\lh} F  = \frac{1}{4}  \sum_{k=0}^3 i^k  (i^k E + F)^\ast \odot_{\lh} (i^k E + F)$, the normality of $\SS$ and Lemma \ref{lemma norm.}, we then obtain
\begin{align*}
& \lim_n \scal{\phi}{\pi_1 (P_n) \psi}_1 
=\lim_n  \scal{v}{[\SS ((E^\ast \odot_{\lh} F) \ff_n )](I_\nn) w} \\
& \qquad \qquad \quad =  \frac 1 4 \sum_{k=0}^3    i^k    \lim_n  \scal{v}{[\SS (((i^k E + F)^\ast \odot_{\lh} (i^k E + F)) \ff_n )](I_\nn) w} \\
& \qquad \qquad \quad =  \frac 1 4 \sum_{k=0}^3    i^k      \scal{v}{[\SS (((i^k E + F)^\ast \odot_{\lh} (i^k E + F)) \ff )](I_\nn) w}\\
& \qquad \qquad \quad =  \frac 1 4 \sum_{k=0}^3    i^k      \scal{v}{[\SS ((i^k E + F)^\ast \odot_\mm (i^k E + F))](I_\nn) w}\\
& \qquad \qquad \quad = \scal{v}{[\SS (E^\ast \odot_\mm F)](I_\nn) w} \\
& \qquad \qquad \quad =  \scal{\phi}{\psi}_1 \, .
\end{align*}
This relation extends by linearity to all $\phi ,\psi  \in \lh\hotimes \kk$, and, since the sequence $\{ \pi_1 (P_n) \}_{n\in\N}$ is norm bounded, by density to all $\phi ,\psi \in \uhat_1$. Therefore, we obtain $\wklim_n  \pi_1 (P_n) = I_{\uhat_1}$, thus concluding the proof of normality of $\pi_1$. Note that the linear space $\spanno{E\otimes v = \pi_1 (E) U_1 v \mid E\in\lh,\, v\in\kk}$ is dense in $\uhat_1$ by definition, hence, using Lemma \ref{lemma sep. di K} with $\uhat \equiv \uhat_1$ and $\ss \equiv U_1\kk$,  we  obtain that $\uhat_1$ is separable.

For $C\in\nn$, we define a linear operator $\pi_2 (C)$ on $\lh \hotimes \nn \hotimes \kk$ given by
$$
[\pi_2 (C)] (E\otimes A\otimes v) := E\otimes C A\otimes v
$$
for all $E\in\lh$, $A\in\nn$, $v\in\kk$. We claim that $\pi_2 (C)$ extends to a bounded linear operator on $\uhat_2$ and that $\pi_2$ is a unital $\ast$-homomorphism of $\nn$ into $\elle{\uhat_2}$.
Indeed, for all vectors $\phi,\psi\in\lh\hotimes\nn\hotimes\kk$, with $\phi = \sum_{i=1}^n E_i \otimes A_i \otimes v_i$ and $\psi = \sum_{j=1}^m F_j \otimes B_j \otimes w_j$, we have
\begin{eqnarray*}
\scal{\phi}{\pi_2 (C) \psi}_2 & = & \sum_{i,j} \scal{v_i}{\left[ \SS (E_i^\ast \odot_\mm F_j ) \right] (A_i^\ast C B_j) w_j} \\
& = & \scal{\pi_2 (C^\ast)\phi}{\psi}_2 .
\end{eqnarray*}
Clearly, $\pi_2 (I_\nn)$ is the identity on $\lh\hotimes\nn\hotimes\kk$. Moreover, $\pi_2 (C_1) \pi_2 (C_2) = \pi_2 (C_1 C_2)$. 
The same argument used for $\pi_1$ then shows that $\pi_2 (C)$ extends to a bounded operator on $\elle{\uhat_2}$, and $\pi_2$ is a unital $\ast$-homomorphism of $\nn$ into $\elle{\uhat_2}$. 

We now introduce the following linear map
$$
\gg : \nn \frecc \elle{\uhat_1} \, , \qquad \gg(A) = U_2^\ast \pi_2 (A) U_2 \, .
$$
Clearly, the map $\gg$ is CP and unital. If $\{A_n\}_{n\in\N}$ is an increasing sequence in $\nn$ such that $A_n \uparrow A$, then, for all vectors $\phi,\psi\in\lh\hotimes\kk$, with $\phi = \sum_{i=1}^m E_i \otimes v_i$, $\psi = \sum_{j=1}^k F_j \otimes w_j$, we have
\begin{eqnarray*}
\lim_n \scal{\phi}{\gg(A_n)\psi}_1 & = & \lim_n \scal{U_2 \phi}{\pi_2 (A_n) U_2 \psi}_2 \\
& = & \lim_n \sum_{i,j} \scal{E_i \otimes I_\nn \otimes v_i}{F_j \otimes A_n \otimes w_j}_\SS \\
& = & \lim_n \sum_{i,j} \scal{v_i}{[\SS(E_i^\ast\odot_\mm F_j)] (A_n) w_j} \\
& = & \sum_{i,j} \scal{v_i}{[\SS(E_i^\ast\odot_\mm F_j)] (A) w_j} \\
& = & \scal{\phi}{\gg(A)\psi}_1
\end{eqnarray*}
as a consequence of weak*-continuity of $\SS(E_i^\ast\odot_\mm F_j)$. Hence, $\gg$ is normal, and, therefore, we have $\gg\in\cpn{\nn,\elle{\uhat_1}}$.

By Lemma 2.2 p.~139 in \cite{QTOS76} (or Corollary 10.4.14 in \cite{KadRin}), separability of $\uhat_1$ and normality of $\pi_1$ imply that there exists a (separable) Hilbert space $\uu$ such that $\uhat_1 = \hh\otimes\uu$ and $\pi_1 (B)  =  B \otimes I_\uu$ for all $B\in\lh$. We now prove that $\gg(A) \in \mm\votimes \elle{\uu}$ for all $A\in\nn$, i.e.~actually $\gg\in\cpn{\nn,\mm\votimes \elle{\uu}}$. By Proposition 1.6 p.~184 in \cite{Tak} and by von Neumann's  double commutation theorem (Theorem 3.9 p.~74 in \cite{Tak}), it is enough to show that $\gg(A) (B\otimes I_\uu) = (B\otimes I_\uu) \gg(A)$ for all $A\in\nn$ and $B\in\mm'$. So, for $\phi,\psi\in\lh\hotimes\kk$ with $\phi = \sum_{i=1}^n E_i \otimes v_i$, $\psi = \sum_{j=1}^m F_j \otimes w_j$, we have
\begin{eqnarray*}
\scal{\phi}{\gg(A) (B\otimes I_\uu) \psi}_1 & = & \scal{U_2\phi}{\pi_2 (A) U_2 \pi_1 (B) \psi}_1 \\
& = & \sum_{i,j} \scal{E_i \otimes I_\nn \otimes v_i}{BF_j \otimes A \otimes w_j}_\SS \\
& = & \sum_{i,j} \scal{v_i}{[\SS(E_i^\ast\odot_\mm BF_j)] (A) w_j} \\
& = & \sum_{i,j} \scal{v_i}{[\SS(E_i^\ast B \odot_\mm F_j)] (A) w_j} \\
& = & \scal{(B^\ast\otimes I_\uu)\phi}{\gg(A)\psi}_1 \, ,
\end{eqnarray*}
where the equality $E_i^\ast\odot_\mm BF_j = E_i^\ast B\odot_\mm F_j$ comes from item (3) of Proposition \ref{prop:prop. di odot}. Hence $\gg(A) \in (\mm'\votimes \C I_\uu)' = \mm\votimes\lu$, as claimed.

We conclude with the proof of Eq.~\eqref{eq. centr.}. If $E\in\lh$, $A\in\nn$ and $v,w\in\kk$, then we have, for $\ee = E^\ast \odot_\mm E$,
\begin{eqnarray*}
\scal{v}{\left[ \SS (\ee) \right] (A) w} & = & \scal{E\otimes I_\nn \otimes v}{E \otimes A \otimes w}_\SS \\
& = & \scal{U_2 \pi_1 (E) U_1 v}{\pi_2 (A) U_2 \pi_1 (E) U_1 w}_2 \\
& = & \scal{\pi_1 (E) U_1 v}{\gg(A) \pi_1 (E) U_1 w}_1 \\
& = & \scal{v}{U_1^\ast (E^\ast \otimes I_\uu) \gg(A) (E \otimes I_\uu) U_1 w} \\
& = & \scal{v}{U_1^\ast [(\ee \otimes \ii_\uu) \gg(A)] U_1 w} \, .
\end{eqnarray*}
Setting $U :=U_1$, we then obtain Eq.~\eqref{eq. centr.} in the special case $\ee = E^\ast \odot_\mm E$. 
The equality for generic $\ee\in\cp{\mm,\lh}$ then follows by Kraus Theorem \ref{Teo. Stines.} using normality of $\SS$ and of the  amplification supermap $\PI_\uu :  \ee \mapsto \ee \otimes \ii_\uu$. Finally,  linearity and Theorem \ref{CB = span CP} extend the equality to all $\ee \in \cb{\mm,\lh}$. This concludes the proof of Proposition \ref{teo. centr. prel.}.
\end{proof}

We still need another easy auxiliary lemma before coming to the proof of Theorem \ref{teo. centr.}.

\begin{theorem}{Lemma}\label{lemma:span}
Let $\kk$, $\vv$ be Hilbert spaces, and let $\ss$ be a linear subspace of $\kk\otimes\vv$. The following facts are equivalent:
\begin{itemize}
\item[{\rm (i)}] $\vv = \spannochiuso{(u^\ast\otimes I_\vv) v \mid v\in\ss \, , \, u\in\kk}$;
\item[{\rm (ii)}] the equality $\hh_0\otimes\vv = \spannochiuso{(A\otimes I_\vv)v \mid v\in\ss \, , \, A\in\elle{\kk,\hh_0}}$ holds for some Hilbert space $\hh_0$;
\item[{\rm (iii)}] the equality $\hh\otimes \vv = \spannochiuso{(A\otimes I_\vv) v \mid v\in\ss \, , \, A\in\elle{\kk,\hh}}$ holds for all Hilbert spaces $\hh$.
\end{itemize}
\end{theorem}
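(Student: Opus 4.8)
The plan is to reduce the whole lemma to a single identity about one fixed closed subspace of $\vv$. I would set $W := \spannochiuso{(u^\ast\otimes I_\vv) v \mid v\in\ss \, , \, u\in\kk}$, so that (i) is exactly the statement $W=\vv$, and then prove that for every Hilbert space $\hh$
\begin{equation}
\spannochiuso{(A\otimes I_\vv) v \mid v\in\ss \, , \, A\in\elle{\kk,\hh}} = \hh\otimes W \, .
\end{equation}
Once this is established, the three equivalences drop out: (iii) says $\hh\otimes\vv = \hh\otimes W$ for all $\hh$, (ii) says the same for some nonzero $\hh_0$, and for any nonzero $\hh$ one has $\hh\otimes W = \hh\otimes\vv$ if and only if $W=\vv$ — indeed, if $W\subsetneq\vv$, then picking a unit vector $f\in\hh$ and some $w\in\vv\setminus W$ gives $f\otimes w\notin\hh\otimes W$. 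Thus (i), (ii) and (iii) are each equivalent to $W=\vv$, and I would organize the write-up around this common reformulation rather than a cyclic chain of implications.

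To prove the identity I would fix an orthonormal basis $\{f_k\}$ of $\hh$ and view each $f_k$ as the operator $\C\frecc\hh$, $\lambda\mapsto\lambda f_k$. For ``$\subseteq$'', decompose $A=\sum_k f_k f_k^\ast A$; each $f_k^\ast A\in\elle{\kk,\C}$ is of the form $u_k^\ast$ for a unique $u_k\in\kk$ by Riesz representation, so that
\begin{equation}
(A\otimes I_\vv) v = \sum_k (f_k\otimes I_\vv)(u_k^\ast\otimes I_\vv) v = \sum_k f_k\otimes\left[(u_k^\ast\otimes I_\vv) v\right] .
\end{equation}
Every summand lies in $\hh\otimes W$, and the series converges there (its terms are mutually orthogonal, with total square-norm equal to $\no{(A\otimes I_\vv) v}^2$), so its limit lies in the closed space $\hh\otimes W$. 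For ``$\supseteq$'' it is enough to reach the generating vectors $f\otimes w$ with $f\in\hh$ and $w=(u^\ast\otimes I_\vv) v\in W$; choosing the rank-one operator $A=f u^\ast\in\elle{\kk,\hh}$ gives $(A\otimes I_\vv) v = f\otimes w$, so these vectors lie in the left-hand span and, closing up, exhaust $\hh\otimes W$.

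The only delicate point I anticipate is the convergence of the series in the ``$\subseteq$'' inclusion when $\hh$ is infinite dimensional. I would settle it by writing the partial sums as $(P_N A\otimes I_\vv) v$, where $P_N$ is the projection onto $\spanno{f_k\mid k\le N}$; since $\no{P_N A}\le\no A$ uniformly and $P_N A\to A$ strongly, one gets $(P_N A\otimes I_\vv) v\to(A\otimes I_\vv) v$, and the orthogonality of the summands then identifies the limit as displayed above. Everything else — the Riesz identification, the rank-one construction, and the elementary fact that $\hh\otimes W$ is a proper closed subspace of $\hh\otimes\vv$ whenever $W\subsetneq\vv$ — is routine, so the substance of the proof is concentrated entirely in the single identity.
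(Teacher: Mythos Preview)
Your proof is correct and takes a genuinely different route from the paper's. The paper argues cyclically: (i)$\Rightarrow$(ii) and (iii)$\Rightarrow$(i) are the trivial specializations $\hh_0=\C$ and $\hh=\C$; the substantive step is (ii)$\Rightarrow$(iii), where one notes that the closed span in question is invariant under $\lh\otimes I_\vv$, so its orthogonal projection must have the form $I_\hh\otimes P$ by a commutant argument (invoking Proposition~1.6 p.~184 in \cite{Tak}), and then a transfer from $\hh_0$ to $\hh$ via rank-one operators $E_i = e e_i^\ast$ forces $P=I_\vv$. You instead prove the sharper identity
\[
\spannochiuso{(A\otimes I_\vv)v\mid v\in\ss,\ A\in\elle{\kk,\hh}}=\hh\otimes W
\]
for \emph{every} $\hh$ directly, via orthonormal-basis decomposition and rank-one maps, and then read off all three equivalences at once as the single condition $W=\vv$. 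Your argument is more elementary---it avoids any appeal to the structure of commutants in tensor products---and yields strictly more information, namely the exact value of the span rather than merely its fullness under the hypothesis; the paper's argument is a bit shorter once the commutant fact is taken as known.
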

\begin{proof}
Clearly, condition {\rm (i)} implies {\rm (ii)} by taking $\hh_0 \equiv \C$, and condition {\rm (iii)} implies {\rm (i)} by taking $\hh \equiv \C$.

We now suppose that condition {\rm (ii)} holds, and prove {\rm (iii)}. If $\hh$ is a Hilbert space, let $\hhat = \spannochiuso{(A\otimes I_\vv) v \mid v\in\ss \, , \, A\in\elle{\kk,\hh}}$. Denote by $\hat{P}$ the orthogonal projection of $\hh\otimes \vv$ onto $\hhat$. Since $(\lh\otimes I_\vv)\hhat \subset \hhat$, the operator $\hat{P}$ commutes with $\lh\otimes I_\vv$, hence $\hat{P} = I_\hh\otimes P$ for some orthogonal projection $P$ of $\vv$ by Proposition 1.6 p.~184 in \cite{Tak}. Choose a Hilbert basis $\{ e_i \}_{i\in I}$ of $\hh_0$, and fix a vector $e\in\hh$ with $\no{e}=1$. Defining $E_i = ee_i^\ast \in \elle{\hh_0,\hh}$, we have $\sum_{i\in I} E_i^\ast E_i = I_{\hh_0}$, where the sum converges in the strong sense if $\# I = \infty$. It follows that, for all $A\in\elle{\kk,\hh_0}$ and $v\in\ss$,
\begin{eqnarray*}
(I_{\hh_0} \otimes P)(A\otimes I_\vv) v & = & \sum_{i\in I} (E_i^\ast \otimes I_\vv) (I_\hh \otimes P)(E_i A\otimes I_\vv) v \\
& = & \sum_{i\in I} (E_i^\ast \otimes I_\vv) \hat{P} (E_i A\otimes I_\vv) v \\
& = & \sum_{i\in I} (E_i^\ast \otimes I_\vv) (E_i A\otimes I_\vv) v \\
& = & (A\otimes I_\vv) v \, ,
\end{eqnarray*}
where convergence of the sum is in the norm topology of $\hh_0\otimes\vv$. 
By density, we conclude $I_{\hh_0} \otimes P = I_{\hh_0\otimes\vv}$, hence $P=I_\vv$, i.e.~$\hhat = \hh\otimes\vv$.
\end{proof}

We are now in position to prove Theorem \ref{teo. centr.}.

\begin{proof}(Proof of  Theorem \ref{teo. centr.})
The `if' part of the statement follows since $\SS = \CC_{\aa,\ff} \PI_\vv$, where $\aa\in\cpn{\elle{\kk_1}\votimes\lv,\elle{\kk_2}}$ is the quantum channel $\aa := V^\ast \odot_{\elle{\kk_1}\votimes\lv} V$, and $\CC_{\aa,\ff} : \cb{\mm_1\votimes\lv,\elle{\kk_1}\votimes\lv} \frecc \cb{\mm_2,\elle{\kk_2}}$ and $\PI_\vv : \cb{\mm_1,\elle{\kk_1}} \frecc \cb{\mm_1\votimes\lv,\elle{\kk_1}\votimes\lv}$ are the concatenation and amplification supermaps defined in Propositions \ref{compo} and \ref{ampli}, respectively.

Conversely, suppose $\SS\in\cpqn{\mm_1, \elle{\kk_1} ; \mm_2, \elle{\kk_2}}$. We can assume without restriction that $\mm_1\subset\elle{\hh_1}$ for some Hilbert space $\hh_1$. Let $\ell^2$ denote the Hilbert space of square-summable sequences, and define an isometry $T$ as follows
$$
T :  \kk_1 \to \kk_1 \otimes \ell^2 \, , \qquad   Tv = v \otimes e \, ,
$$
where $e \in \ell^2$ is a fixed vector with $\no{e} = 1$. Then, define two deterministic supermaps
\begin{align*}
\TT & : \cb{\mm_1 \votimes \elle{\ell^2} , \elle{\kk_1\otimes\ell^2}} \frecc \cb{\mm_1, \elle{\kk_1}} \\
\tilde{\SS} & : \cb{\mm_1 \votimes \elle{\ell^2}, \elle{\kk_1\otimes\ell^2}} \frecc \cb{\mm_2, \elle{\kk_2}}
\end{align*}
given by
$$
[\TT (\tilde{\ee})] (A) = T^\ast \tilde{\ee} (A \otimes I_{\ell^2}) T \quad \forall \tilde{\ee}\in \cb{\mm_1 \votimes \elle{\ell^2} , \elle{\kk_1\otimes\ell^2}} \, , \, \forall A\in\mm_1
$$
and
$$
\tilde{\SS} = \SS \TT \, .
$$
Since $\mm_1 \votimes \elle{\ell^2} \subset \elle{\hh_1\otimes \ell^2}$ and the Hilbert spaces $\hh_1\otimes \ell^2$ and $\kk_1\otimes \ell^2$ are isomorphic and infinite dimensional, we can apply Proposition \ref{teo. centr. prel.} to the deterministic supermap $\tilde{\SS}$ and obtain the existence of a Hilbert space $\uu$, an isometry $U:\kk_2\frecc \kk_1\otimes\ell^2\otimes\uu$ and a channel $\gg\in\cpn{\mm_2, \mm_1\votimes\elle{\ell^2}\votimes\lu}$ such that
$$
[\tilde{\SS}(\tilde{\ee})] (A)  =  U^\ast [(\tilde{\ee} \otimes \ii_\uu) \gg(A)] U \quad \forall \tilde{\ee} \in \cb{\mm_1 \votimes \elle{\ell^2}, \elle{\kk_1\otimes\ell^2}} , \forall A\in \mm_2 .
$$
On the other hand, we have $\TT (\ee\otimes\ii_{\ell^2}) = \ee$ for all $\ee\in\cb{\mm_1,\elle{\kk_1}}$ by directly inspecting the definition, hence $\tilde{\SS} (\ee\otimes\ii_{\ell^2}) = \SS(\ee)$. It follows that, for all $\ee\in\cb{\mm_1,\elle{\kk_1}}$ and $A\in\mm_2$,
\begin{align*}
[\SS (\ee)] (A) & = [\tilde{\SS} (\ee\otimes\ii_{\ell^2})] (A) \\
& = U^\ast [(\ee\otimes\ii_{\ell^2} \otimes \ii_\uu) \gg(A)] U \\
& = U^\ast [(\ee\otimes\ii_\ww) \gg(A)] U \, ,
\end{align*}
where we set $\ww : = \ell^2\otimes \uu$.

Now, let $\hat{\hh}_1$ be the following closed subspace of $\hh_1\otimes\ww$
\begin{equation}\label{eq:dens2}
\hat{\hh}_1 = \spannochiuso{(E\otimes I_\ww)Uv \mid v\in\kk_2 \, , \, E\in\elle{\kk_1,\hh_1}} \, ,
\end{equation}
and let $\hat{P}$ be the orthogonal projection of $\hh_1\otimes\ww$ onto $\hat{\hh}_1$. Since $(\elle{\hh_1}\otimes I_\ww) \hat{\hh}_1 \subset \hat{\hh}_1$, there is an orthogonal projection $P$ of $\ww$ such that $\hat{P} = I_{\hh_1}\otimes P$. Let $\vv=P\ww$, and define the operator $V:\kk_2\frecc \kk_1 \otimes \vv$ as $V := (I_{\kk_1} \otimes P) U$. From the fact that $\hat{P} = I_{\hh_1}\otimes P$, it clearly follows $\hat{P} (\hh_1\otimes \ww) = \hh_1\otimes \vv$ and $\hat{P} (E\otimes I_\ww) Uv = (E\otimes I_\vv) Vv$, so Eq.~\eqref{eq:dens2} can be turned into
$$
\hh_1\otimes\vv = \spannochiuso{(E\otimes I_\vv)Vv \mid v\in\kk_2 \, , \, E\in\elle{\kk_1,\hh_1}} \, .
$$
By Lemma \ref{lemma:span} (with $\ss \equiv V\kk_2$), we then have
$$
\vv = \spannochiuso{(u^\ast\otimes I_\vv)Vv \mid u\in \kk_1 \, , \, v\in\kk_2} \, .
$$
Define the quantum channel $\ff\in\cpn{\mm_2 , \mm_1\votimes \lv}$ given by
$$
\ff(A) := (I_{\hh_1} \otimes P) \gg(A) (I_{\hh_1} \otimes P^\ast) =(\ii_{\mm_1} \otimes \mathcal{P}) \gg (A) \quad \forall A\in\mm_2 \, ,
$$
with $\mathcal{P} := P\odot_{\elle{\ww}} P^\ast\in\cpn{\elle{\ww},\elle{\vv}}$.
Then, for $E\in\elle{\kk_1,\hh_1}$ and $\ee = E^\ast \odot_{\mm_1} E$,
\begin{eqnarray*}
[\SS(\ee)] (A) & = & U^\ast (E^\ast \otimes I_\ww) \gg(A) (E \otimes I_\ww) U \\
& = & U^\ast (E^\ast \otimes I_\ww) \hat{P}^\ast \hat{P} \gg(A) \hat{P}^\ast \hat{P} (E \otimes I_\ww) U \\
& = & V^\ast (E^\ast \otimes I_\vv) \ff(A) (E \otimes I_\vv) V \\
& = & V^\ast [(\ee \otimes \ii_\vv) \ff(A)] V
\end{eqnarray*}
for all $A\in\mm_2$. This equation extends to all $\ee\in\cb{\mm_1,\elle{\kk_1}}$ by the usual continuity and linearity argument. Finally, in order to show that $V$ is an isometry, pick a quantum channel $\ee\in\cpn{\mm_1,\elle{\kk_1}}$, and, since $\SS$ is deterministic,
$$
V^\ast V = V^\ast [(\ee \otimes \ii_\vv) \ff(I_{\mm_2})] V = [\SS (\ee)] (I_{\mm_2}) = I_{\kk_2} \, .
$$
This concludes the proof of Theorem \ref{teo. centr.}.
\end{proof}

We end the section with the proof of Proposition \ref{prop: minimality}. 

\begin{proof}(Proof of Proposition \ref{prop: minimality})
Define the linear space
$$
\vv_0 = \spanno{(u^\ast\otimes I_\vv)Vv \mid u\in\kk_1 \, , \, v\in\kk_2} \, ,
$$
and let $W:\vv_0 \frecc \vv'$ be the linear operator given by
$$
W(u^\ast\otimes I_\vv)Vv := (u^\ast\otimes I_{\vv'})V'v \, .
$$
We claim that $W$ is well defined and extends to an isometry $W:\vv \frecc \vv'$. As usual, we can assume with no restriction $\mm_1\subset\elle{\hh_1}$. Pick then a vector $e\in\hh_1$ with $\no{e}=1$. For all $u,w\in\kk_1$, define
$$
\ee_{u,w} := (ue^\ast) \odot_{\mm_1} (ew^\ast) \in \cb{\mm_1,\elle{\kk_1}} \, .
$$
If $\phi\in\vv_0$, with $\phi = \sum_{i=1}^n (u_i^\ast\otimes I_\vv)Vv_i$, we have
\begin{eqnarray*}
\no{W\phi}^2 & = & \sum_{i,j} \scal{v_j}{V^{\prime\ast}(u_j u_i^\ast \otimes I_{\vv'}) V' v_i} \\
& = & \sum_{i,j} \scal{v_j}{[\SS(\ee_{u_j,u_i})] (I_{\mm_2}) v_i} \\
& = & \sum_{i,j} \scal{v_j}{V^\ast (u_j u_i^\ast \otimes I_{\vv}) V v_i} \\
& = & \no{\phi}^2 \, .
\end{eqnarray*}
Thus, $W$ is well defined and isometric, and extends to an isometry $W:\vv \frecc \vv'$ by density of $\vv_0$ in $\vv$.

For all $u\in\kk_1$, $v\in\kk_2$ and $w\in\vv'$, we have
\begin{align*}
\scal{u\otimes w}{(I_{\kk_1} \otimes W)Vv} & = \scal{w}{(u^\ast\otimes I_{\vv'})(I_{\kk_1} \otimes W)Vv} \\
& = \scal{w}{W(u^\ast \otimes I_\vv)Vv} \\
& = \scal{w}{(u^\ast \otimes I_{\vv'})V'v} \\
& = \scal{u\otimes w}{V'v} \, ,
\end{align*}
hence $(I_{\kk_1} \otimes W)V = V'$.

If $E,F\in\elle{\kk_1,\hh_1}$ and $v,w\in\kk_2$, then, for all $A\in\mm_2$,
\begin{align*}
& \scal{(E\otimes I_\vv)Vv}{\ff(A) (F\otimes I_\vv)Vw} = \scal{v}{[\SS(E^\ast \odot_{\mm_1} F)] (A) w} \\
& \qquad \qquad \qquad = \scal{(E\otimes I_{\vv'})V'v}{\ff'(A) (F\otimes I_{\vv'})V'w} \\
& \qquad \qquad \qquad = \scal{(E\otimes I_\vv)Vv}{(I_{\hh_1} \otimes W^\ast) \ff'(A) (I_{\hh_1} \otimes W) (F\otimes I_\vv)Vw} \, .
\end{align*}
By the minimality condition \eqref{dens. in hhat1} and Lemma \ref{lemma:span}, we have
$$
\hh_1\otimes \vv = \spannochiuso{(E\otimes I_\vv)Vv \mid v\in\kk_2 \, , \, E\in\elle{\kk_1,\hh_1}} \, ,
$$
hence the last equation shows that $\ff(A) = (I_{\mm_1} \otimes W^\ast) \ff'(A) (I_{\mm_1} \otimes W)$.

We finally come to uniqueness of $W$. Suppose that $U : \vv\frecc \vv'$ is another isometry such that $(I_{\kk_1} \otimes U)V = V'$. Then, for all $u\in\kk_1$, $v\in\kk_2$ and $w\in\vv'$,
\begin{eqnarray*}
\scal{w}{U(u^\ast\otimes I_\vv) V v} & = & \scal{u\otimes w}{(I_{\kk_1} \otimes U) V v} \\
& = & \scal{u\otimes w}{V' v} \\
& = & \scal{w}{(u^\ast\otimes I_\vv) V' v} \\
& = & \scal{w}{W (u^\ast\otimes I_\vv) V v} \, ,
\end{eqnarray*}
i.e.~$U(u^\ast\otimes I_\vv) Vv = W(u^\ast\otimes I_\vv) Vv$. By the minimality condition \eqref{dens. in hhat1}, $U=W$.
\end{proof}

\subsection{An application of Theorem \ref{teo. centr.}:  transforming a quantum measurement into a quantum channel}\label{subsect:meastochan}
For simplicity we consider here quantum measurements with a countable set of outcomes, denoted by $X$. In the algebraic language,  a  measurement   on the quantum system with  Hilbert space $\kk_1$ and   with outcomes in  $X$  is described by a quantum channel $\ee \in \cpn{\mm_1,  \elle {\kk_1}}$,  where $\mm_1 \equiv \ell^\infty (X)$  is the von Neumann algebra of  the bounded complex functions (i.e.~sequences) on $X$ with uniform norm 
$\no{f}_{\infty}  : =  \sup_{i \in  X}  |f_i|$.
The channel $\ee$ maps the function $f \in \ell^\infty (X)$  into the operator 
\begin{equation}\label{eq:CP1=POVM}
\ee (f)  =   \sum_{i \in  X}    f_i   \, P_i  \in  \elle{\kk_1} \, ,  
\end{equation}   
where each $P_i$  is a non-negative operator in $\elle  {\kk_1}$ and $\sum_{i \in  X}  P_i  =  I_{\kk_1}$. Note that the map $i\mapsto P_i$ is a normalized {\em positive operator valued measure (POVM)} based on the discrete space $X$ and with values in $\elle{\kk_1}$. Actually, Eq.~\eqref{eq:CP1=POVM} allows us to identify the convex set of measurements $\cpn{\ell^\infty(X),  \elle {\kk_1}}$ with the set of {\em all} normalized $\elle{\kk_1}$-valued POVMs on $X$.\footnote{Indeed, by commutativity of $\ell^\infty (X)$ the set $\cpn{\ell^\infty (X),  \elle{\kk_1}}$ coincides with the set of all normalized weak*-continuous {\em positive} maps from $\ell^\infty (X)$ into $\elle{\kk_1}$ (Theorem 3.11 in \cite{Paul}). The latter set is just the set of all normalized $\elle{\kk_1}$-valued POVMs on $X$, the identification being the one given in Eq.~\eqref{eq:CP1=POVM}.}

The probability of  obtaining the outcome $i \in  X$ when the measurement is performed on a system prepared in the quantum state $\rho\in \trcl{\kk_1}$  ($\rho  \ge0$,  $\trt{\rho}=1$) is given by  the Born rule
\begin{align*}
p_i  =     \trt{\rho   P_i} \, ,
\end{align*}
and the expectation value of the function $f \in \ell^\infty (X)$ with respect to the probability distribution $p$ is given by  
\begin{align*}
\mathbb E_{p}  (f)  :=   \sum_{i\in X}   p_i  f_i  =    \trq{\rho  \ee (f)} \, .  
\end{align*}  
The above equation allows us to interpret the channel $\ee$ as an \emph{operator valued expectation}  (see e.g.~\cite{dirk}).  

Now, consider the deterministic supermaps sending quantum measurements in the set $\cp  {\ell^\infty (X),  \elle{\kk_1} }$ to quantum operations in $\cp {\mm_2,  \elle {\kk_2}}$, where  $\mm_2   \equiv \elle {\hh_2}  $. Our dilation Theorem  \ref{teo. centr.} (in the predual form of Remark \ref{rem:teo.centr.pred.}) states that every deterministic supermap  $\SS : \cb{\ell^\infty (X) , \elle{\kk_1}} \frecc \cb{\elle{\hh_2} , \elle{\kk_2}}$ is of the form
\begin{equation}\label{eq:non so}
[\SS (\ee)]_\ast (\rho) = \ff_\ast [(\ee\otimes \ii_{\vv})_\ast (V\rho V^\ast)] \quad \forall \ee\in\cb{\ell^\infty (X) , \elle{\kk_1}} \, , \, \forall \rho\in\trcl{\kk_2} \, ,
\end{equation}
where  $\vv$ is a Hilbert space, $V:\kk_2\frecc \kk_1 \otimes \vv$ an isometry and  $\ff\in\cpn{\elle{\hh_2},\ell^\infty (X)\votimes\lv}$ a quantum channel. In our case, we have the identification
$$
\ell^\infty (X) \votimes \lv \simeq \ell^\infty (X ; \lv) \, ,
$$
where $\ell^\infty (X ; \lv)$ is the von Neumann algebra of the bounded $\lv$-valued functions on $X$. Its predual space is
$$
(\ell^\infty (X) \votimes \lv)_\ast \simeq \ell^1 (X ; \trcl{\vv}) \, ,
$$
i.e.~the space of norm-summable sequences with index in $X$ and values in the Banach space of the trace class operators on $\vv$ (see Theorem 1.22.13 in \cite{Sakai}).
In the Schr\"odinger picture, the channel $\ff_\ast$  can be realized by first reading the classical information  carried by the system with algebra  $\ell^\infty (X)$ and, conditionally to the value $i \in  X$, by performing the quantum channel $\ff_{i\,\ast} : \trcl{\vv} \frecc \trcl{\hh_2}$ given by
$$
\ff_{i\,\ast} (\sigma) = \ff_\ast (\delta_i\, \sigma) \quad \forall\sigma \in \trcl{\vv} \, ,
$$
where $\delta_i\, \sigma\in\ell^1 (X ; \trcl{\vv})$ is the sequence $(\delta_i\, \sigma)_k = \delta_{ik} \, \sigma \ \forall k\in X$, $\delta_{ik}$ being the Kronecker delta. Hence, Eq.~\eqref{eq:non so} can be rewritten as
$$
[\SS (\ee)]_\ast (\rho) = \sum_{i\in X} \ff_{i\,\ast} [(\ee\otimes \ii_{\vv})_\ast (V\rho V^\ast)_i] \, .
$$
In other words, Theorem \ref{teo. centr.} states that the most general transformation of a quantum measurement on $\kk_1$ into a quantum channel from states on $\kk_2$  to states on  $\hh_2$ can be realized by 
\begin{enumerate} 
\item applying an invertible dynamics (the isometry  $V)$ that transforms the input system $\kk_2$  into the composite system  $\kk_1  \otimes \vv$, where $\vv$ is an ancillary system;
\item performing the given measurement ($\ee_*$, in the predual picture) on $\kk_1$, thus obtaining the outcome $i\in  X$;
\item conditionally to the outcome $i \in X$, applying a physical transformation  (the channel  $\ff_{i\, \ast}$) on  the ancillary system $\vv$, thus converting it into the output system $\hh_2$. 
\end{enumerate}

\section{Radon-Nikodym derivatives of supermaps}\label{sez. Radon}

The dilation theorem for deterministic supermaps will be generalized here to probabilistic supermaps. 
In this case,  the following theorem provides an analog of the Radon-Nikodym theorem for CP maps (compare with \cite{Arveson,BelStasz}, and see also \cite{Raginsky} for the particular case of quantum operations).

\begin{theorem}{Theorem}\label{teo. Radon-Nicodym}
{\rm (Radon-Nikodym theorem for supermaps)}
Suppose that $\SS$ is a deterministic supermap in $\cpqn{\mm_1,\elle{\kk_1} ; \mm_2,\elle{\kk_2}}$ and let $(\vv , \, V, \, \ff)$ be its minimal dilation. If $\TT \in\cpq{\mm_1,\elle{\kk_1} ; \mm_2,\elle{\kk_2}}$ is such that $\TT \ll \SS$, then there exists a unique element $\gg\in\cp{\mm_2, \mm_1\votimes \lv}$ with $\gg \preceq \ff$ and such that
\begin{equation}\label{radnic}
\left[\TT (\ee)\right]  (A)= V^\ast [(\ee \otimes \ii_\vv) \gg(A)] V \quad \forall \ee \in \cb{\mm_1,\elle{\kk_1}} \, , \, \forall A\in\mm_2 \, .
\end{equation}
\end{theorem}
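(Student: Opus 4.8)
The plan is to mimic Arveson's Radon--Nikodym theorem, realizing the derivative as a positive contraction sitting in a commutant inside the minimal dilation space of $\SS$. First I would fix a minimal Stinespring dilation $(\kk,\pi,W)$ of the channel $\ff\in\cpn{\mm_2,\mm_1\votimes\lv}$, so that $\ff(A)=W^\ast\pi(A)W$ with $\pi$ a normal unital $\ast$-homomorphism and $\kk=\spannochiuso{\pi(A)W\xi\mid A\in\mm_2,\ \xi\in\hh_1\otimes\vv}$, where $\mm_1\subset\elle{\hh_1}$. Writing $\Gamma_E:=W(E\otimes I_\vv)V:\kk_2\frecc\kk$ for $E\in\elle{\kk_1,\hh_1}$, Eq.~\eqref{eq. centr. 2} gives $[\SS(E^\ast\odot_{\mm_1}F)](A)=\Gamma_E^\ast\pi(A)\Gamma_F$, and the minimality property \eqref{dens. in hhat1} together with Lemma \ref{lemma:span} shows that $\{\pi(A)\Gamma_E v\}$ is total in $\kk$. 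A direct inner-product computation then identifies $\kk$ with the completion of $\elle{\kk_1,\hh_1}\hotimes\mm_2\hotimes\kk_2$ for the form $\scal{\cdot}{\cdot}_\SS$ of Lemma \ref{prop. sulla forma ass.}, via $E\otimes A\otimes v\mapsto\pi(A)\Gamma_E v$.

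Next, since $\TT\ll\SS$, Lemma \ref{prop. sulla forma ass.} gives $0\le\scal{\phi}{\phi}_\TT\le\scal{\phi}{\phi}_\SS$; hence the $\TT$-form has kernel containing that of $\scal{\cdot}{\cdot}_\SS$ and descends to a bounded positive sesquilinear form on $\kk$ dominated by the norm, so there is a unique $D\in\elle{\kk}$ with $0\le D\le I_\kk$ representing it, $\scal{\phi}{\psi}_\TT=\scal{[\phi]}{D[\psi]}_\kk$. The crucial algebraic step is $D\in\pi(\mm_2)'$. I would verify directly from the definition of the form that the operation $E\otimes A\otimes v\mapsto E\otimes CA\otimes v$ (which under the identification is $\pi(C)$) is symmetric for the $\TT$-form: both $\scal{\phi}{\pi(C)\psi}_\TT$ and $\scal{\pi(C^\ast)\phi}{\psi}_\TT$ reduce on simple tensors to $\scal{v_1}{[\TT(E_1^\ast\odot_{\mm_1}E_2)](A_1^\ast C A_2)v_2}$. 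This symmetry forces $D$ to commute with every $\pi(C)$.

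With $D$ in hand I would set $\gg(A):=W^\ast\pi(A)DW$. Using $[D,\pi(A)]=0$ and $D=(D^{1/2})^2$ this equals $(D^{1/2}W)^\ast\pi(A)(D^{1/2}W)$, so $\gg$ is CP, and it is normal because $\pi$ is; moreover $\ff(A)-\gg(A)=((I_\kk-D)^{1/2}W)^\ast\pi(A)((I_\kk-D)^{1/2}W)$ is CP, giving $\gg\preceq\ff$. The step I expect to be the main obstacle is showing that $\gg$ takes values in $\mm_1\votimes\lv$ rather than merely in $\elle{\hh_1\otimes\vv}$. By von Neumann's double commutant theorem it suffices to prove that $\gg(A)$ commutes with $B\otimes I_\vv$ for all $B\in\mm_1'$, and I would establish this by testing against the total set $\{(E\otimes I_\vv)Vv\}$: the matrix elements of $\gg(A)$ reduce to $\scal{v}{[\TT(E^\ast\odot_{\mm_1}F)](A)w}$, and the identity $E^\ast\odot_{\mm_1}BF=E^\ast B\odot_{\mm_1}F$ of Proposition \ref{prop:prop. di odot}(3) closes the commutation. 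This is exactly where minimality and the $\odot$-calculus do the real work; once it is done, $\gg\in\cp{\mm_2,\mm_1\votimes\lv}$.

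Finally, the defining formula \eqref{radnic} holds for elementary maps $\ee=E^\ast\odot_{\mm_1}F$ since $\Gamma_E^\ast D\pi(A)\Gamma_F=V^\ast(E^\ast\otimes I_\vv)\gg(A)(F\otimes I_\vv)V$, and it extends to all $\ee\in\cp{\mm_1,\elle{\kk_1}}$ by Kraus' Theorem \ref{Teo. Stines.} using normality of $\TT$ and of the amplification (Proposition \ref{ampli}), and then to all $\ee\in\cb{\mm_1,\elle{\kk_1}}$ by linearity and Theorem \ref{CB = span CP}. Uniqueness is immediate from minimality: if $\gg'\in\cp{\mm_2,\mm_1\votimes\lv}$ also satisfies \eqref{radnic}, then $V^\ast(E^\ast\otimes I_\vv)(\gg(A)-\gg'(A))(F\otimes I_\vv)V=0$ for all $E,F$, and since $\hh_1\otimes\vv=\spannochiuso{(E\otimes I_\vv)Vv\mid v\in\kk_2,\ E\in\elle{\kk_1,\hh_1}}$ by Lemma \ref{lemma:span}, we conclude $\gg(A)=\gg'(A)$.
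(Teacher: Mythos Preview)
Your proposal is correct and follows essentially the same route as the paper's proof: take the minimal Stinespring dilation of $\ff$, use Lemma \ref{prop. sulla forma ass.} and the domination $\TT\ll\SS$ to produce a positive contraction $D$ in $\pi(\mm_2)'$, set $\gg(A)=W^\ast\pi(A)DW$, then use Proposition \ref{prop:prop. di odot}(3) and the totality of $\{(E\otimes I_\vv)Vv\}$ to force $\gg(A)\in\mm_1\votimes\lv$, and finish by Kraus/normality and minimality for uniqueness. The only cosmetic difference is that you first identify the Stinespring space $\kk$ with the GNS-type completion of $\elle{\kk_1,\hh_1}\hotimes\mm_2\hotimes\kk_2$ under $\scal{\cdot}{\cdot}_\SS$, whereas the paper defines the $\TT$-form directly on the dense subspace $\uu_0$ of the Stinespring space; the content is identical.
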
 
\begin{proof}
Without loss of generality,  let us suppose $\mm_1 \subset \elle{\hh_1}$ for some suitable Hilbert space $\hh_1$. Hence, we can regard the quantum channel $\ff$ as an element in $\cpn{\mm_2, \elle{\hh_1 \otimes \vv}}$.  Consider the  Stinespring dilation of the channel $\ff $, given by
$$
\ff(A) = U^\ast \pi(A) U  
 \quad \forall A\in\mm_2,
$$
where $U: \hh_1\otimes\vv \frecc \uu$ is an isometry,  $\uu$ is a separable Hilbert space, and $\pi : \mm_2 \frecc \elle{\uu}$ is a normal unital $\ast$-homomorphism (see e.g.~Theorem 2.1 p.~137 of \cite{QTOS76}). 
In particular, we can take the minimal Stinespring dilation, which satisfies the relation
$$
\uu = \spannochiuso{\pi(A) U u \mid A\in\mm_2 \, , \, u\in\hh_1\otimes\vv} \, .
$$

Let us define the dense subset $\uu_0  \subset \uu$ as  
$$
\uu_0 :=\spanno{\pi(A) U   u \mid  A   \in  \mm_2 \, , \, u  \in \hh_0},
$$
where $\hh_0$ is the following dense subset of $\hh_1\otimes\vv$
\begin{equation}\label{eq:ins.denso}
\hh_0  :=  \spanno{(E\otimes I_\vv) Vv \mid E\in\elle{\kk_1, \hh_1} \, , \, v\in\kk_2}
\end{equation}
(see Eq.~\eqref{dens. in hhat1} and Lemma \ref{lemma:span} for the proof that $\hh_0$ is dense in $\hh_1\otimes\vv$).   

We now introduce a positive sesquilinear form  $\scal{\cdot}{\cdot}_0$ on $\uu_0$, which we will briefly see that is bounded and thus can be extended by continuity to a form on $\uu$. If $\phi = \sum_{i=1}^n \pi(A_i) U (E_i\otimes I_\vv) Vv_i$ and $\psi = \sum_{j=1}^m \pi(B_j) U (F_j\otimes I_\vv) Vw_j$ are two generic elements in $\uu_0$, define
\begin{align*}
\scal{\phi}{\psi}_0 & := \sum_{i,j} \scal{v_i}{[\TT ( E_i^\ast \odot_{\mm_1} F_j)] (A_i^\ast B_j) w_j} \\
& = \scal{\sum_i E_i\otimes A_i \otimes v_i}{\sum_j F_j\otimes B_j \otimes w_j}_\TT \, .
\end{align*}
We claim that $\scal{\cdot}{\cdot}_0$ is a well defined positive and bounded sesquilinear form on $\uu_0$. In order to show this, it is enough to prove that
$$
0\leq \scal{\phi}{\phi}_0 \leq \no{\phi}^2 \quad \forall \phi\in\uu_0 \, .
$$
Indeed, the first inequality is clear from Lemma \ref{prop. sulla forma ass.}. For the second, again by Lemma \ref{prop. sulla forma ass.} we have, for $\phi = \sum_{i=1}^n \pi(A_i) U (E_i\otimes I_\vv) Vv_i$,
\begin{align*}
\scal{\phi}{\phi}_0 & = \scal{\sum_i E_i\otimes A_i \otimes v_i}{\sum_j E_j\otimes A_j \otimes v_j}_\TT \\
& \leq \scal{\sum_i E_i\otimes A_i \otimes v_i}{\sum_j E_j\otimes A_j \otimes v_j}_\SS \\
& = \sum_{i,j} \scal{v_i}{[\SS(E_i^\ast \odot_{\mm_1} E_j)](A_i^\ast A_j) v_j} \\
& =  \sum_{i,j} \scal{v_i}{V^\ast (E_i^\ast \otimes I_\vv) \ff (A_i^\ast A_j)  (E_j \otimes I_\vv) V v_j} \\
& =  \sum_{i,j} \scal{v_i}{V^\ast (E_i^\ast \otimes I_\vv) U^\ast \pi(A_i^\ast A_j) U (E_j \otimes I_\vv) V v_j} \\
& = \scal{\sum_i \pi(A_i)U(E_i \otimes I_\vv) Vv_i}{\sum_j \pi(A_j) U (E_j \otimes I_\vv) V v_j} \\
& = \no{\phi}^2 \, .
\end{align*}
This concludes the proof of our claim.

We continue to denote by $\scal{\cdot}{\cdot}_0$ the previous form extended by continuity to the whole space $\uu$. Then, there exists a bounded operator $C\in\elle{\uu}$, with $0\leq C \leq I_\uu$, such that
$$
\scal{\phi}{\psi}_0 = \scal{\phi}{C\psi} \quad \forall \phi ,\psi\in\uu \, .
$$
Note that $C$ commutes with the von Neumann algebra $\mm_\pi  : = \pi  (\mm_2)$.\footnote{The linear space $\pi (\mm_2)$ is a von Neumann algebra in $\elle{\uu}$ by Proposition 3.12 p.~136 in \cite{Tak}.} Indeed, if $\phi=  \sum_{i=1}^n \pi(A_i) U (E_i\otimes I_\vv) V v_i $ is a generic element in $\uu_0$, then, for all $A\in\mm_2$,
\begin{align*}
\scal{\phi}{C\pi(A)\phi} & = \scal{\phi}{\pi(A)\phi}_0 \\  
& = \sum_{i,j} \scal{v_i}{[\TT(E_i^\ast \odot_{\mm_1} E_j)] (A_i^\ast A  A_j) v_j} \\
& = \scal{\pi(A^\ast) \phi}{\phi}_0 \\
& = \scal{\pi(A)^\ast \phi}{C\phi}\\
& = \scal{ \phi}{\pi(A) C\phi} .
\end{align*}
By density and the polarization identity we then obtain $C\pi(A) = \pi(A) C $ for all $A \in  \mm_2$.

We are now ready to define the map $\gg\in\cp{\mm_2,\elle{\hh_1\otimes\vv}}$ as
$$
\gg := (U^\ast \odot_{\elle{\uu}} U) (C^{\frac12} \odot_{\mm_\pi} C^{\frac12}) \, \pi \, .
$$

For all $E,F\in\elle{\kk_1,\hh_1}$, $A\in\mm_2$ and $v,w\in\kk_2$, we have   
\begin{align*}
\scal{v}{[\TT(E^\ast\odot_{\mm_1} F)](A) w} &  =  \scal{U(E\otimes I_\vv)Vv}{\pi(A)U(F\otimes I_\vv)Vw}_0 \\
&  =  \scal{U(E\otimes I_\vv)Vv}{C\pi(A)U(F\otimes I_\vv)Vw} \\
&  =  \scal{v}{V^\ast (E^\ast\otimes I_\vv) U^\ast C^{\frac 12} \pi(A) C^{\frac 12} U(F\otimes I_\vv)Vw} \\
&  =  \scal{v}{V^\ast (E^\ast\otimes I_\vv) \gg (A) (F\otimes I_\vv)Vw} \, . 
\end{align*}
Since $v,w \in\kk_2$ are arbitrary, we just proved the relation 
\begin{equation}\label{punto chiave}
[\TT(E^\ast\odot_{\mm_1} F)](A) = V^\ast (E^\ast\otimes I_\vv) \gg(A)(F\otimes I_\vv) V 
\end{equation}
for all $E,F\in \elle{\kk_1,\hh_1}$ and $A \in  \mm_2$.

Eq.~\eqref{punto chiave} allows us to prove that the range of the map $\gg$ is contained in  $\mm_1\votimes\lv$, i.e., that for all $A\in\mm_2$ we have $\gg (A) \in \mm_1\votimes\lv$.  To prove this, it is enough to show that $\gg(A)$ commutes with $\left( \mm_1\votimes\lv\right)^\prime  =  \mm_1^\prime \votimes \C I_\vv$. Indeed, for every $B\in\mm_1'$ and for a generic element $u= \sum_{i=1}^n (E_i\otimes I_\vv) V v_i  \in \hh_0$  we have 
\begin{align*}
\scal{u} { \gg(A) (B\otimes I_\vv)  u }  &=    
 \sum_{i,j} \scal{(E_i\otimes I_\vv) V v_i}{\gg(A)(B\otimes I_\vv)(E_j\otimes I_\vv) V v_j}\\
 &=  \sum_{i,j} \scal{  v_i}{ V^* (E_i^*\otimes I_\vv)\gg(A)(B E_j\otimes I_\vv) V v_j}\\
& = \sum_{i,j} \scal{v_i}{[\TT(E_i^\ast\odot_{\mm_1} BE_j)](A) v_j} \\
&=\sum_{i,j} \scal{v_i}{[\TT(E_i^\ast B\odot_{\mm_1} E_j)](A) v_j} \\
&=  \sum_{i,j} \scal{  v_i}{ V^* (E_i^*  B\otimes I_\vv)\gg(A)(E_j\otimes I_\vv) V v_j}\\
&= \sum_{i,j}\scal{(E_i\otimes I_\vv) V v_i}{(B\otimes I_\vv)\gg(A)(E_j\otimes I_\vv) V v_j}\\
&= \scal{u}{(B\otimes I_\vv) \gg(A)u} \, ,
\end{align*}
where the equality $E_i^\ast\odot_{\mm_1} BE_j = E_i^\ast B \odot_{\mm_1} E_j$ is item (3) of Proposition \ref{prop:prop. di odot}.
By the polarization identity and by density of $\hh_0$ in $\hh_1\otimes\vv$ we then obtain  $\gg(A) (B\otimes I_\vv) = (B\otimes I_\vv) \gg(A)$.

Moreover, from Eq.~\eqref{punto chiave} the desired relation Eq.~\eqref{radnic} easily follows:
indeed, Eq.~\eqref{punto chiave} proves Eq.~\eqref{radnic} for all $\ee\in\elle{\hh_1,\kk_1} \odot_{\mm_1} \elle{\kk_1,\hh_1}$; Eq.~\eqref{radnic} for all $\ee\in\cb{\mm_1,\elle{\kk_1}}$ then follows by linearity and normality of $\TT$ and Theorems \ref{CB = span CP} and \ref{Teo. Stines.}.

Note that Eq.~\eqref{radnic} determines $\gg$ uniquely:  if $\gg'  \in \cp{ \mm_2,    \elle{ \mm_1 \votimes \vv} }$ is a map satisfying  Eq.~\eqref{radnic}, then   for a generic element $u  \in  \hh_0$, written as $u= \sum_{i=1}^n  (E_i \otimes I_{\vv})  V v_i $,  we must have
\begin{align*}
\scal{u} {  \gg' (A)  u}  &  =    \sum_{i,j}  \scal{v_i}{   V^\ast (E_i^\ast\otimes I_\vv) \gg'(A)(E_j\otimes I_\vv) V   v_j} \\ 
&  =    \sum_{i,j}  \scal{v_i}{   V^*  \left[\TT (E_i^* \odot E_j)\right]  (A)   V v_j} \\
 &=    \sum_{i,j}  \scal{v_i}{   V^\ast (E_i^\ast\otimes I_\vv) \gg(A)(E_j\otimes I_\vv) V   v_j} \\ 
&  = \scal{u} {  \gg (A)  u}, 
\end{align*}
which, by polarization identity and by density of $\hh_0$, implies $\gg'(A) =  \gg(A)$ for every $A \in \mm_2$, and therefore $\gg' = \gg$.

To conclude we prove that the map $\gg$ has the property $\gg\preceq \ff$:  
\begin{align*}
\gg &= (U^\ast \odot_{\elle{\uu}} U) (C^{\frac12} \odot_{\mm_\pi} C^{\frac12}) \, \pi \\
 & \preceq   (U^\ast \odot_{\elle{\uu}} U)   \, \pi \\
 &  =  \ff,
\end{align*} 
the inequality $C^{\frac12} \odot_{\mm_\pi} C^{\frac12} \preceq \ii_{\mm_\pi,\,\elle{\uu}}$ being item (4) of Proposition \ref{prop:prop. di odot}.   
\end{proof}

\begin{definition}{Definition}
In Theorem \ref{teo. Radon-Nicodym}, the CP map $\gg\in\cp{\mm_2, \mm_1\votimes \lv}$ defined by Eq.~\eqref{radnic} is the {\em Radon-Nikodym derivative} of the supermap $\TT$ with respect to $\SS$.
\end{definition}

\begin{definition}{Remark}
Note that the validity of Theorem \ref{teo. Radon-Nicodym}  can be trivially extended to quantum supermaps that are bounded by positive multiples of deterministic supermaps, i.e.~to supermaps $\TT$ such that $\TT \ll   \lambda \SS$ for some positive $\lambda \in \R$ and some deterministic supermap $\SS$.  
\end{definition}

\section{Superinstruments}\label{sez. superstr.}
Here we apply the Radon-Nikodym theorem proven in the previous section to the study of \emph{quantum superinstruments}.  Quantum superinstruments describe measurement processes where the measured object is not a quantum system, as in ordinary instruments, but rather a quantum device.  While ordinary quantum instruments are defined as probability measures with values in the set of quantum operations (see \cite{DavLew}, and also \cite{QTOS76} for a more complete exposition), quantum superinstruments are defined as probability measures with values in the set of quantum supermaps.

\begin{definition}{Definition}
Let $\Omega$ be a measurable space with $\sigma$-algebra $\sigma (\Omega)$ and let $\SS$ be a map from $\sigma (\Omega)$ to $\cpq{\mm_1,\elle{\kk_1}; \mm_2, \elle{\kk_2}}$, sending the measurable set $B \in \sigma (\Omega)$ to the supermap $\SS_B  \in \cpq{\mm_1,\elle{\kk_1}; \mm_2, \elle{\kk_2}}$. We say that $\SS$ is a {\em quantum superinstrument} if it satisfies the following properties:
\begin{enumerate}
\item[{\rm (i)}] $\SS_\Omega$ is deterministic;
\item[{\rm (ii)}] if $n\in\N\cup\{\infty\}$ and $B = \bigcup_{i=1}^n  B_i$ with $B_i \cap B_j = \emptyset$ for $i\neq j$, then $\SS_B = \sum_{i=1}^n  \SS_{B_i}$, where if $n = \infty$ convergence of the series is understood in the following sense:
$$
[\SS_B  (\ee)] (A) = \wklim_k \sum_{i=1}^k [\SS_{B_i}  (\ee)] (A) \quad \forall \ee\in\cb{\mm_1,\elle{\kk_1}} , \forall A\in\mm_2 .
$$
\end{enumerate}
\end{definition}
We will briefly see that every quantum superinstrument is associated to an ordinary quantum instrument in a unique way. Before giving the precise statement, we recall the notion of quantum instrument, which is central in the statistical description of quantum measurements:
\begin{definition}{Definition}
A map $\jj:\sigma (\Omega) \frecc \cp{\mm,\nn}$  is a \emph{quantum instrument} if it satisfies the following properties:
\begin{enumerate}
\item[{\rm (i)}] $\jj_\Omega$ is a quantum channel; 
\item[{\rm (ii)}] if $n\in\N\cup\{\infty\}$ and $B = \bigcup_{i=1}^n  B_i$ with $B_i \cap B_j = \emptyset$ for $i\neq j$, then $\jj_B = \sum_{i=1}^n  \jj_{B_i}$, where if $n = \infty$ convergence of the series is understood in the following sense:
$$
\jj_B (A) = \wklim_k \sum_{i=1}^k \jj_{B_i} (A) \quad \forall A\in\mm \, .
$$ 
\end{enumerate}
\end{definition}

We then have the following dilation theorem for quantum superinstruments.
\begin{theorem}{Theorem}\label{Osawa}
{\rm (Dilation of quantum superinstruments)}
Suppose that $\SS: \sigma (\Omega) \frecc \cpq{\mm_1,\elle{\kk_1}; \mm_2, \elle{\kk_2}}$ is a quantum superinstrument and let $(\vv, \, V, \, \ff)$ be the minimal dilation of the deterministic supermap $\SS_\Omega $.  Then there exists a unique quantum instrument  $\jj:  \sigma (\Omega) \frecc  \cp{\mm_2,\mm_1 \votimes \lv}$ such that
\begin{equation}\label{postselect}
[\SS_B (\ee)]  (A)  =  V^\ast [(\ee\otimes \ii_\vv)   \jj_B (A)] V \quad \forall \ee\in\cb{\mm_1,\elle{\kk_1}} \, , \, \forall A\in\mm_2
\end{equation}
for all $B\in\sigma(\Omega)$.
\end{theorem}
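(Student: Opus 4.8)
The plan is to apply the Radon-Nikodym theorem for supermaps (Theorem \ref{teo. Radon-Nicodym}) separately to each value $\SS_B$ of the superinstrument, and then to promote the resulting family of Radon-Nikodym derivatives to a quantum instrument by exploiting the uniqueness clause of that theorem. First I would observe that, for every $B\in\sigma(\Omega)$, one has $\SS_B \ll \SS_\Omega$: indeed, finite additivity gives $\SS_\Omega - \SS_B = \SS_{\Omega\setminus B}$, which is itself a supermap, so $\SS_\Omega - \SS_B \in \cpq{\mm_1,\elle{\kk_1};\mm_2,\elle{\kk_2}}$. Since $(\vv,\,V,\,\ff)$ is the minimal dilation of the deterministic supermap $\SS_\Omega$, Theorem \ref{teo. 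Radon-Nicodym} then produces, for each $B$, a unique $\jj_B\in\cp{\mm_2,\mm_1\votimes\lv}$ with $\jj_B\preceq\ff$ satisfying Eq.~\eqref{postselect}. This defines the candidate map $\jj:\sigma(\Omega)\frecc\cp{\mm_2,\mm_1\votimes\lv}$. Uniqueness of $\jj$ is then immediate: any quantum instrument satisfying Eq.~\eqref{postselect} must have each $\jj_B$ equal to the Radon-Nikodym derivative of $\SS_B$ with respect to $\SS_\Omega$, which is unique by Theorem \ref{teo. Radon-Nicodym}.

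It then remains to verify that $\jj$ is a quantum instrument. For property (i), I would note that $\ff$ itself satisfies Eq.~\eqref{postselect} for $B=\Omega$ by the dilation formula Eq.~\eqref{eq. centr. 2}, so uniqueness of the Radon-Nikodym derivative forces $\jj_\Omega=\ff$, which is a quantum channel by the definition of dilation. For finite additivity, given disjoint $B_1,B_2$ I would observe that $\jj_{B_1}+\jj_{B_2}$ is again a normal CP map and satisfies Eq.~\eqref{postselect} for $\SS_{B_1\cup B_2}=\SS_{B_1}+\SS_{B_2}$; the uniqueness argument in the proof of Theorem \ref{teo. Radon-Nicodym} (which only uses Eq.~\eqref{radnic}, not the domination $\gg\preceq\ff$) then gives $\jj_{B_1\cup B_2}=\jj_{B_1}+\jj_{B_2}$.

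The genuinely delicate point is countable additivity. Writing $B=\bigcup_{i=1}^\infty B_i$ with the $B_i$ pairwise disjoint and $C_k:=\bigcup_{i=1}^k B_i$, finite additivity gives $\jj_{C_k}=\sum_{i=1}^k\jj_{B_i}$, and the relations $\jj_{C_{k+1}}-\jj_{C_k}=\jj_{B_{k+1}}\succeq 0$ and $\ff-\jj_{C_k}=\jj_{\Omega\setminus C_k}\succeq 0$ show that $\{\jj_{C_k}\}_{k}$ is CP-increasing and CP-bounded by $\ff$. Proposition \ref{Teo. Berb. 2} then yields a unique $\jj'\in\cp{\mm_2,\mm_1\votimes\lv}$ with $\jj_{C_k}\Uparrow\jj'$. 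To identify $\jj'$ with $\jj_B$, I would pass to the weak*-limit in the identity $[\SS_{C_k}(\ee)](A)=V^\ast[(\ee\otimes\ii_\vv)\jj_{C_k}(A)]V$. The main obstacle is justifying the interchange of $\wklim_k$ with the operation $X\mapsto V^\ast[(\ee\otimes\ii_\vv)X]V$, which I would handle using weak*-continuity of $\ee\otimes\ii_\vv$ (Proposition \ref{tensor product of maps}) together with weak*-continuity of the normal map $V^\ast\odot V$. Comparing the resulting limit with $\wklim_k[\SS_{C_k}(\ee)](A)=[\SS_B(\ee)](A)$, which is the defining convergence of the superinstrument, shows that $\jj'$ satisfies Eq.~\eqref{postselect} for $B$, whence $\jj'=\jj_B$ by uniqueness. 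This is precisely the weak*-convergence required for countable additivity of $\jj$, completing the proof.
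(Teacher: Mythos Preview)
Your proposal is correct and follows essentially the same route as the paper: apply the Radon-Nikodym theorem to each $\SS_B\ll\SS_\Omega$, identify $\jj_\Omega=\ff$, obtain finite additivity from uniqueness of the derivative, and deduce countable additivity by showing the CP-increasing, CP-bounded sequence $\sum_{i=1}^k\jj_{B_i}$ has limit satisfying Eq.~\eqref{postselect}. Your treatment is in fact slightly more careful than the paper's in two places: you make explicit that the uniqueness clause of Theorem \ref{teo. Radon-Nicodym} relies only on Eq.~\eqref{radnic} (needed for the finite-additivity step), and you justify the passage of $\wklim_k$ through $X\mapsto V^\ast[(\ee\otimes\ii_\vv)X]V$ via weak*-continuity, which the paper leaves implicit.
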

\begin{proof}  
Let $B \in \sigma(\Omega)$ be an arbitrary measurable set.   By additivity of the measure $\SS$, we have $\SS_{\Omega} = \SS_B  +  \SS_{\Omega\setminus B}$, that is, $\SS_B \ll \SS_\Omega$. Let $(\vv,V,\ff)$ be the minimal dilation of $\SS_\Omega$. By Theorem \ref{teo. Radon-Nicodym}, Eq.~\eqref{postselect} holds for some uniquely defined $\jj_B \in \cp{\mm_2,\mm_1\votimes\elle{\vv}}$, with $\jj_B \preceq \ff$. Clearly, for $B = \Omega$ one has $\jj_{\Omega} = \ff$, hence $\jj_{\Omega}$ is a quantum channel. Now, suppose that $n\in\N\cup\{\infty\}$ and $B = \bigcup_{i=1}^{n}  B_i$, with $B_i \cap B_j = \emptyset$ for $i\neq j$. If $n\in\N$, the equality $\jj_{\cup_{i=1}^{n}  B_i} = \sum_{i=1}^n  \jj_{B_i}$ easily follows by additivity of the superinstrument $\SS$ and uniqueness of the Radon-Nikodym derivative. If $n=\infty$, then the sequence of CP maps $\gg_n =\sum_{i=1}^n  \jj_{B_i}$ is CP-increasing and CP-bounded, since $\gg_n=\jj_{\cup_{i=1}^{n}  B_i}\preceq \ff$. Therefore, we have $\gg_n \Uparrow \gg_\infty$ for some $\gg_\infty \in \cp{\mm_2,\mm_1\votimes\elle{\vv}}$. We prove that $\gg_{\infty} = \jj_{B}$. Indeed, for every $\ee \in \cb{\mm_1,\elle{\kk_1}} $ and $A\in\mm_2$, by Proposition \ref{Teo. Berb. 2} we have
\begin{align*}
V^\ast [(\ee\otimes \ii_{\vv})   \gg_{\infty} (A)] V & = \wklim_n \sum_{i=1}^n V^\ast [(\ee\otimes \ii_{\vv}) \jj_{B_i} (A)] V \\
& = \wklim_n \sum_{i=1}^n [\SS_{B_i} (\ee)] (A) \\
& = [\SS_B (\ee)] (A) \, .
\end{align*}
By uniqueness of the Radon-Nikodym derivative we then conclude $\gg_{\infty} = \jj_B$.
\end{proof}

The physical interpretation of the dilation of quantum superinstruments is clear in the Schr\"odinger picture. Indeed, taking the predual of Eq.~\eqref{postselect}, we have for all $\rho \in \trcl{\kk_2}$ and $\ee \in \cb{\mm_1,\elle{\kk_1}}$
$$
[\SS_B (\ee)]_\ast (\rho) =  \jj_{B\, \ast} \left[(\ee \otimes \ii_\vv)_\ast   ( V \rho V^\ast)  \right] \, .
$$
This means that the system with Hilbert space $\kk_2$   (initially prepared in the quantum state $\rho$) undergoes an invertible evolution, given by the isometry $V$, that transforms it into the composite system with Hlbert space $ \kk_1  \otimes \vv $; then the system with Hilbert space $\kk_1$ is transformed by means of the quantum channel $\ee_*$, while nothing is done on the ancilla; finally, the quantum measurement described by the instrument $\jj_*$ is performed jointly on the system and ancilla.

\subsection{Application of Theorem \ref{Osawa}: Measuring a measurement}\label{subsect:measmeas}

Suppose that we want to characterize some property of a quantum measuring device on a system with Hilbert space $\kk_1$:  For example, we may have a device performing a projective measurement on an unknown orthonormal basis, and we may want to find out the basis.   In this case  the set of possible answers to our question is thus the set of all orthonormal bases. In a more abstract setting, the possible outcomes will constitute a measure space $\Omega$ with $\sigma$-algebra $\sigma (\Omega)$.  This includes also the case of full tomography of the measuring device \cite{lor,fiur,soto,eisert}, in which the outcomes in $\Omega$ label all possible  measuring devices.  The mathematical object describing our task will be a superinstrument  taking the given measurement as input and yielding an outcome in the set $B \in \sigma(\Omega)$ with some probability.   In the algebraic framework, we will describe the input measurement as a quantum channel $\ee \in \cp {  \mm_1, \elle{\kk_1}}$, where $\mm_1 \equiv \ell^\infty (X)$ is the algebra of the complex bounded functions on $X$ (see the discussion in Section \ref{subsect:meastochan}). 

\subsubsection{Outcome statistics for a measurement on a measuring device} 
If we only care about the outcomes in $\Omega$ and their statistical distribution, then the output of the superinstrument will be trivial, that is  $  \mm_2  \equiv  \elle {\kk_2}  \equiv \C$.  
In this case, Theorem \ref{Osawa} states that every superinstrument $\SS: \sigma (\Omega) \frecc \cpq{\ell^\infty(X),\elle{\kk_1}; \mathbb C, \mathbb C}$ will be of the form 
$$
\SS_B (\ee) =  \scal{v}{(\ee\otimes \ii_\vv) (\jj_B) v} \quad  \forall \ee\in\cb{\ell^\infty (X) , \elle{\kk_1}} \, , \, B \in \sigma(\Omega) \, ,
$$
where $\vv$ is an ancillary Hilbert space, $v\in\kk_1\otimes \vv$ is a unit vector, and $\jj:  \sigma (\Omega) \frecc \cp{\C,\ell^\infty (X) \votimes \lv} \simeq \ell^\infty (X;\lv)_+$ is just a weak*-countably additive positive measure on $\Omega$ with values in $\ell^\infty (X;\lv)$, satisfying $(\jj_\Omega)_i = I_\vv \ \forall i\in X$. Note that in this case each supermap $\SS_B$ is actually a linear map $\SS_B : \cb{\ell^\infty (X) , \elle{\kk_1}} \frecc \C$, and, if $\ee$ is a quantum channel, the map $B\mapsto \SS_B (\ee)$ is a probability measure on $\Omega$. In the Schr\"odinger picture
\begin{equation}\label{eq:non so2}
\SS_B (\ee) = [\jj_{B\,\ast} (\ee\otimes \ii_\vv)_\ast] (\omega_v) \, , 
\end{equation}
where $\omega_v$ is the state in $\trcl{\kk_1\otimes\vv}$ given by $\omega_v (A) := \scal{v}{Av} \ \forall A\in\elle{\kk_1\otimes\vv}$. Note that $\jj_{B\,\ast} : \ell^1 (X;\trcl{\vv}) \frecc \C$. Thus, if for all $i\in X$ we define the following normalized $\lv$-valued POVM on $\Omega$:
$$
Q_i :  \sigma (\Omega) \frecc  \lv \, , \qquad Q_{i, B} := (\jj_B)_i \, ,
$$
then we have
$$
\jj_{B\,\ast} (\delta_i \, \sigma) = \trt{\sigma Q_{i,B}} \quad \forall \sigma\in \trcl{\vv}
$$
and Eq.~\eqref{eq:non so2} becomes
$$
\SS_B (\ee) = \sum_{i\in X} \trq{Q_{i,B} (\ee\otimes \ii_\vv)_\ast (\omega_v)_i} \, ,
$$
which shows that, conditionally on the outcome $i\in X$, we just perform a measurement with POVM $Q_i$ on the states in $\trcl{\vv}$. 
In other words, Theorem \ref{Osawa} claims that the most general way to extract information about a  measuring device on system $\kk_1$ consists in
\begin{enumerate} 
\item preparing a pure bipartite state $\omega_v$ in $\kk_1 \otimes \vv$;
\item performing the given measurement $\ee$ on $\kk_1$, thus obtaining the outcome $i\in X$;
\item conditionally on the outcome $i \in X$, performing a measurement (the POVM $Q_i$) on  the ancillary system $\vv$, thus obtaining an outcome in $\Omega$.
\end{enumerate}
Note that the choice of the POVM $Q_i$ depends in general on the outcome of the first measurement $\ee$. 

\subsubsection{Tranformations of measuring devices induced by a higher-order measurement}
In a quantum measurement it is often interesting to consider not only the statistics of the outcomes, but also how the measured object changes due to the measurement process.   
For example, in the case of ordinary quantum measurements, one is interested in studying the state reduction due to the occurrence of particular measurement outcomes.  
We can ask the same question in the case of higher-order measurements on quantum devices: for example, we can imagine a measurement process where a measuring device is tested, and, due to the test, is transformed into a new measuring device.   
This situation is described mathematically by a quantum superinstrument with outcomes in an outcome set $\Omega$, sending measurements in $\cp{ \mm_1, \elle{\kk_1} }$ to measurements in $\cp{  \mm_2, \elle{\kk_2}}$, where $\mm_1 \equiv \ell^\infty (X)$ and $\mm_2 \equiv \ell^\infty (Y)$ for some countable sets $X$ and $Y$.  

In this case, it follows from Theorem \ref{Osawa} that every superinstrument $\SS: \sigma (\Omega) \frecc \cpq{\ell^\infty (X) ,\elle{\kk_1}; \ell^\infty (Y) , \elle{\kk_2}}$ is of the form 
$$
[\SS_B (\ee)]  (f)  =  V^\ast [(\ee\otimes \ii_\vv)   \jj_B (f)] V \quad  \forall \ee\in\cb{\ell^\infty (X),\elle{\kk_1}} \, , \, \forall f\in\ell^\infty (Y)
$$
for all $B\in\sigma(\Omega)$, where $\vv$ is an ancillary Hilbert space, $V\in \elle{\kk_2,  \kk_1\otimes  \vv} $ is an isometry,   and $\jj:  \sigma (\Omega) \frecc  \cp{\ell^\infty (Y) , \ell^\infty (X;\lv)}$ is an instrument. Note that, by commutativity of $\ell^\infty (Y)$, the set $\cp{\ell^\infty (Y) , \ell^\infty (X;\lv)}$ coincides with the set of weak*-continuous {\em positive} maps from $\ell^\infty (Y)$ into $\ell^\infty (X;\lv)$. If for all $i\in X$ we define the positive map
$$
\jj_{i,B} : \ell^\infty (Y) \frecc \lv \, , \qquad \jj_{i,B} (f) := \jj_B (f)_i \, ,
$$
then each mapping $\jj_i :\sigma(\Omega)\frecc \cp{\ell^\infty (Y) , \lv}$ is an instrument, with predual
$$
\jj_{i,B \,\ast} : \trcl{\vv} \frecc \ell^1 (Y) \, , \qquad \jj_{i,B \,\ast} (\sigma) = \jj_{B\,\ast} (\delta_i\,\sigma)
$$
for all $B\in\sigma(\Omega)$. From the relation
$$
[\SS_B (\ee)]_\ast  (\rho)  =  [\jj_{B\,\ast} (\ee\otimes \ii_\vv)_\ast](V\rho V^\ast) = \sum_{i\in X} \jj_{i,B \,\ast} [(\ee\otimes \ii_\vv)_\ast (V\rho V^\ast)_i] \, ,
$$
holding for all states $\rho\in\trcl{\kk_2}$, we then see that the most general measurement on a quantum measuring device can be implemented by   
\begin{enumerate} 
\item applying an invertible dynamics (the isometry  $V)$ that transforms the input system $\kk_2$  into the composite system  $\kk_1  \otimes \vv$, where $\vv$ is an ancillary system;
\item performing the given measurement $\ee$ on $\kk_1$, thus obtaining the outcome $i\in X$;
\item conditionally to the outcome $i \in X$, performing a quantum measurement  (the predual instrument  $\jj_{i\,\ast}$), thus obtaining an outcome in $\Omega$ and transforming the ancillary system $\vv$ into the classical system described by the commutative algebra $\ell^\infty(Y)$. 
\end{enumerate}

If we assume that the set $\Omega$ also is countable, then the instrument $\jj:  \sigma (\Omega) \frecc  \cp{\ell^\infty(Y) , \ell^\infty(X,\lv)}$ is completely specified by its action on singleton sets, that is, by the countable set of quantum operations $\{\jj_{\omega}  \in \cp{\ell^\infty(Y) , \ell^\infty(X,\lv)} \mid \omega \in  \Omega\} $.  
In this case, if for all $i\in X$ we define
$$
Q^{(i)}_{\omega , j} := \jj_\omega (\delta_j)_i = \jj_{i,\omega} (\delta_j) \quad \forall (\omega , j)\in \Omega\times Y \, ,
$$
then the map $(\omega , j) \mapsto Q^{(i)}_{\omega , j}$ is a normalized POVM on the product set $\Omega\times Y$ and with values in $\lv$. Note that, in terms of the POVM $Q^{(i)}$, we can express each $\jj_{i,\omega}$ as
$$
\jj_{i,\omega}  (f)  =    \sum_{j \in Y}    f_j \, Q^{(i)}_{\omega,j} \quad \forall f\in\ell^\infty(Y)
$$
or, equivalently,
$$
(\jj_{i,\omega\,\ast}  (\sigma))_j  =  \trt{\sigma Q^{(i)}_{\omega,j}} \quad \forall \sigma\in\trcl{\vv} \, .  
$$ 
In other words, the step (3)  in the measurement process can be interpreted as a quantum measurement with outcome  $(\omega, j) \in \Omega \times Y$, where only the classical information concerning  the index $j \in Y$ is encoded in a physical system available for future experiments, whereas the information concerning index $\omega \in \Omega$  becomes unavailable after being read out by the experimenter.

\section*{Acknowledgements}
G.~C.~acknowledges support by the National Basic Research Program of China (973) 2011CBA00300 (2011CBA00301). A.~T.~and V.~U.~gratefully acknowledge the financial support of the Italian Ministry of Education, University and Research (FIRB project RBFR10COAQ).

\end{document}